\newtheorem{theorem}{Theorem}[section]
\newtheorem{corollary}[theorem]{Corollary}
\newtheorem{proposition}[theorem]{Proposition}
\newtheorem{lemma}[theorem]{Lemma}
\newenvironment{definition}[1][Definition.]{\begin{trivlist}
\item[\hskip \labelsep {\bfseries #1}]}{\end{trivlist}}
\def\P{\mathcal{P}}
\def\A{\mathcal{A}}
\def\B{\mathcal{B}}
\def\F{\mathbb{F}}
\def\C{\mathcal{C}}
\def\od{\stackrel{\mathrm{def}}{=}}
\def\U{\mathcal{U}}
\def\RR{\mathbb{R}}
\def\supp{\operatorname{supp}}
\def\p{{\bf p}}
\def\q{{\bf q}}
\definecolor{gold}{rgb}{0.85,.66,0}
\definecolor{cherry}{rgb}{0.9,.1,.2}
\definecolor{burgundy}{rgb}{0.8,.2,.2}
\definecolor{orangered}{rgb}{0.85,.3,0}
\definecolor{orange}{rgb}{0.85,.4,0}
\definecolor{olive}{rgb}{.45,.4,0}
\definecolor{lime}{rgb}{.6,.9,0}
\definecolor{green}{rgb}{.2,.7,0}
\definecolor{grey}{rgb}{.4,.4,.2}
\definecolor{brown}{rgb}{.4,.2,.1}
\definecolor{blue}{rgb}{0,.0, .81}
\definecolor{bluepurple}{rgb}{.3, .0, .7}
\begin{document}

\title{The neural ring: an algebraic tool for analyzing\\ the intrinsic structure of neural codes}
\author{Carina Curto, Vladimir Itskov, Alan Veliz-Cuba, Nora Youngs\\
{\small Department of Mathematics, University of Nebraska-Lincoln}}
\date{\small May 21, 2013}
\maketitle

\vspace{.5in}
\begin{abstract}
\noindent Neurons in the brain represent external stimuli via neural codes.   These codes often arise from stereotyped stimulus-response maps, associating
to each neuron a convex receptive field.  An important problem confronted by the brain is to infer properties of a represented stimulus space
without knowledge of the receptive fields, using only the intrinsic structure of the neural code.  How does the brain do this?  To address this question,
it is important to determine what stimulus space features can -- in principle -- be extracted from neural codes.  This motivates us to define the {\it neural ring}
and a related {\it neural ideal}, algebraic objects that encode the full combinatorial data of a neural code.  Our main finding is that these objects can be expressed in a
``canonical form'' that directly translates to a minimal description of the receptive field structure intrinsic to the code.  
We also find connections to Stanley-Reisner rings, and use ideas similar to those in the theory of monomial ideals to obtain an
algorithm for computing the primary decomposition of pseudo-monomial ideals.
This allows us to algorithmically extract the canonical form associated to any neural code, providing the 
groundwork for inferring stimulus space features from neural activity alone. 
\end{abstract}

\vspace{.6in}

\tableofcontents

\vspace{.6in}

\section{Introduction}
Building accurate representations of the world is one of the basic functions of the brain.  It is well-known that when a stimulus is paired with pleasure
or pain, an animal quickly learns the association. Animals also learn, however, the (neutral) relationships between stimuli of the same type.
For example, a bar held at a 45-degree angle appears more similar to one held at 50 degrees than to a perfectly vertical one.  
Upon hearing a triple of distinct pure tones, one seems to fall ``in between'' the other two.  An explored environment is perceived not as a collection of
disjoint physical locations, but as a spatial map.  In summary, we do not experience the world as a stream of unrelated stimuli; rather, our brains
organize different types of stimuli into highly structured {\em stimulus spaces}.  

The relationship between neural activity and stimulus space structure has, nonetheless, received remarkably little attention.  
In the field of neural coding, 
much has been learned about the coding properties of individual neurons
by investigating stimulus-response functions, such as place fields \cite{OKeefeDostrovsky,PathIntegration}, orientation tuning curves \cite{WatkinsBerkley74, Ben-Yishai1995}, and other examples of ``receptive fields'' obtained by measuring neural activity in response to experimentally-controlled stimuli.  Moreover, numerous studies have shown that neural 
activity, together with knowledge of the appropriate stimulus-response functions, can be used to accurately estimate a newly presented
stimulus \cite{Brown98,Deneve99,Ma}.  This paradigm is being actively extended and revised to include information
present in populations of neurons, spurring debates on the role of correlations in neural coding 
\cite{NirenbergLatham03,Averbeck2006,SchneidmanBialek2006}.
In each case, however, the underlying structure of the stimulus space is
assumed to be known, and is not treated as itself emerging from the activity of neurons.  This approach is particularly problematic when
one considers that the brain does {\em not} have access to stimulus-response functions, and
must represent the world {\em without} the aid of dictionaries that lend meaning to neural activity \cite{gap}.  In coding theory parlance, the brain 
does not have access to the encoding map, and must therefore represent stimulus spaces via the intrinsic structure of the neural code.

How does the brain do this?  In order to eventually answer this question, we must first tackle a simpler one:  
\medskip

\noindent{\bf Question:} \textit{What can be 
inferred about the underlying stimulus space from neural activity alone?} I.e., what stimulus space features are encoded in the {\em intrinsic structure} of the neural code, and can thus be extracted without knowing the individual stimulus-response functions?
\medskip

\noindent Recently we have shown that, in the case of hippocampal place cell codes, certain topological features of 
the animal's environment can be inferred from the neural code alone, without knowing the place fields \cite{gap}.  
As will be explained in the next section, this information can be extracted from a simplicial
complex associated to the neural code.  {\it What other stimulus space features can be inferred from the neural code?}
For this, we turn to algebraic geometry.
Algebraic geometry provides a useful framework for inferring geometric and topological characteristics
of spaces by associating rings of functions to these spaces.  All relevant features of the underlying space are encoded in the intrinsic structure of the ring, where coordinate functions become indeterminates, and the space itself is defined in terms of ideals in the ring.  Inferring features of a space from properties of functions -- without specified domains -- is similar to the task confronted by the brain, so it is natural to expect that this framework may shed light on our question.

In this article we introduce the \textit{neural ring}, an algebro-geometric object that can be associated to any combinatorial neural code.
Much like the simplicial complex of a code, the neural ring encodes information about the underlying stimulus space in a way that discards specific knowledge of receptive field maps, and thus gets closer to the essence of how the brain might represent stimulus spaces.  Unlike the simplicial complex, the neural ring retains the full combinatorial data of a neural code, packaging this data in a more computationally tractable manner.  We find that this object, together with a closely related
{\em neural ideal}, can be used to algorithmically extract a compact, minimal description of the {\it receptive field structure} dictated by the code.  This enables us to more directly
tie combinatorial properties of neural codes to features of the underlying stimulus space, a critical step towards answering our motivating question.

Although the use of an algebraic construction such as the neural ring is quite novel in the context of neuroscience, the neural code (as we define it) is at its core a combinatorial
object, and there is a rich tradition of associating algebraic objects to combinatorial ones \cite{MillerSturmfels}.  The most well-known example is perhaps the Stanley-Reisner ring \cite{StanleyBook}, which turns out to be closely related to the neural ring.  Within mathematical biology, associating polynomial ideals to combinatorial data has also been fruitful.  Recent examples include inferring wiring diagrams in gene-regulatory networks \cite{Laubenbacher2007, Alan2012} and applications to chemical reaction networks \cite{ShiuSturmfels2010}.   Our work also has parallels to the study of design ideals in algebraic statistics \cite{alg-stats-book}. 

The organization of this paper is as follows.  In Section~\ref{sec:neural-codes} we introduce {\it receptive field codes}, and explore how the requirement of convexity enables these codes to constrain the structure of the underlying stimulus space.  In Section~\ref{sec:neural-ring} we define the neural ring and the neural ideal, and find explicit relations that enable us to compute these objects for any neural code.  Section~\ref{sec:RF-structure} is the heart of this paper.   Here we present an alternative set of relations for the neural ring, and demonstrate how they enable us to ``read off'' receptive field structure from the neural ideal.  We then introduce pseudo-monomials and pseudo-monomial ideals, by analogy to monomial ideals; this allows us to define a natural ``canonical form'' for the neural ideal.  Using this, we can extract minimal relationships among receptive fields that are dictated by the structure of the neural code.
Finally, we present an algorithm
for finding the canonical form of a neural ideal, and illustrate how to use our formalism for inferring receptive field structure in a detailed example.  Section~\ref{sec:prim-decomp} describes the primary decomposition of the neural ideal and, more generally, of pseudo-monomial ideals.  Computing the primary decomposition of the neural ideal is a critical step in our canonical form algorithm, and it also yields a natural decomposition of the neural code in terms of intervals of the Boolean lattice.  We end this section with an algorithm for finding the primary decomposition of any pseudo-monomial ideal, using ideas similar to those in the theory of square-free monomial ideals.  
All longer proofs can be found in Appendix 1.   A detailed classification of neural codes on three neurons is given in Appendix 2.

\section{Background \& Motivation}\label{sec:neural-codes}

\subsection{Preliminaries}
In this section we introduce the basic objects of study: neural codes, receptive field codes, and convex receptive field codes.  We then discuss various ways in which
the structure of a convex receptive field code can constrain the underlying stimulus space.  
These constraints emerge most obviously from the simplicial complex of a neural code, but (as will be made clear) there are also constraints that arise from aspects
of a neural code's structure that go well beyond what is captured by the simplicial complex of the code.  

Given a set of neurons labelled $\{1, \dots, n\} \od [n]$, we define a {\em neural
  code} $\C \subset \{0,1\}^n$ as a set of binary patterns of neural activity.  
An element of a neural code is called a {\em codeword}, $c = (c_1,\ldots,c_n) \in \C$, and corresponds to a subset of
neurons
$$\supp(c) \od \{i \in [n] \mid c_i = 1\} \subset [n].$$
Similarly, the entire code $\C$ can be identified with a set of subsets of neurons,
$$\supp \C \od \{\supp(c) \mid c \in \C\} \subset 2^{[n]},$$
where $2^{[n]}$ denotes the set of all subsets of $[n]$.
Because we discard the
details of the precise timing and/or rate of neural activity, what we
mean by {neural code} is often referred to in the neural coding
literature as a {\em combinatorial code} \cite{BialekBerry,Bialek2008}.

A set of subsets $\Delta \subset 2^{[n]}$
is an (abstract) {\em simplicial complex} if $\sigma \in \Delta$ and $\tau \subset \sigma$ implies $\tau \in \Delta$.  We will say that a neural code $\C$ is a simplicial complex if $\supp \C$ is a simplicial complex.  In cases where the code is {\it not} a simplicial complex, we can complete the code
to a simplicial complex by simply adding in missing subsets of codewords.  This allows us to define the {\em simplicial complex of the code} as
$$\Delta(\C) \od \{\sigma \subset [n] \mid \sigma \subseteq \supp(c) \text{ for some } c \in \C\}.$$
Clearly, $\Delta(\C)$ is the smallest simplicial complex that contains $\supp \C$.

\subsection{Receptive field codes (RF codes)}
Neurons in many brain areas have activity patterns that can be characterized by receptive fields.\footnote{In the vision literature, 
the term ``receptive field'' is reserved for subsets of the visual field; we use the term in a more general sense, applicable to any modality.}
Abstractly, a {\it receptive field} is a map $f_i:X \rightarrow \RR_{\geq 0}$ from a space of stimuli, $X$, to the average firing rate of a single neuron, $i$, in response to each stimulus.   Receptive fields are computed by correlating neural
responses to independently measured external stimuli.  We follow a common abuse of language, where both the map and its support (i.e., the subset $U_i \subset X$ where $f_i$ takes on
positive values) are referred to as ``receptive fields.''  {\it Convex} receptive fields are convex\footnote{A subset $B\subset \RR^n$ is {\it convex} if, given any pair of points $x,y\in B$, the point $z=tx+(1-t)y$ is contained in $B$ for any $t\in [0,1].$}
subsets of the stimulus space, for $X \subset \RR^d$. 
The paradigmatic examples are orientation-selective neurons in visual cortex \cite{WatkinsBerkley74, Ben-Yishai1995} and 
hippocampal place cells \cite{OKeefeDostrovsky,PathIntegration}.  Orientation-selective neurons have {\it tuning curves} that reflect a neuron's preference for a particular angle (see Figure 1A). Place cells are neurons that have {\it place fields}; i.e., each neuron has a preferred (convex) region of the animal's physical environment where it has a high firing rate (see Figure 1B).  Both tuning curves and place fields are examples of receptive fields.

\begin{figure}[h]\label{fig1}
\begin{center}
\includegraphics[width=5in]{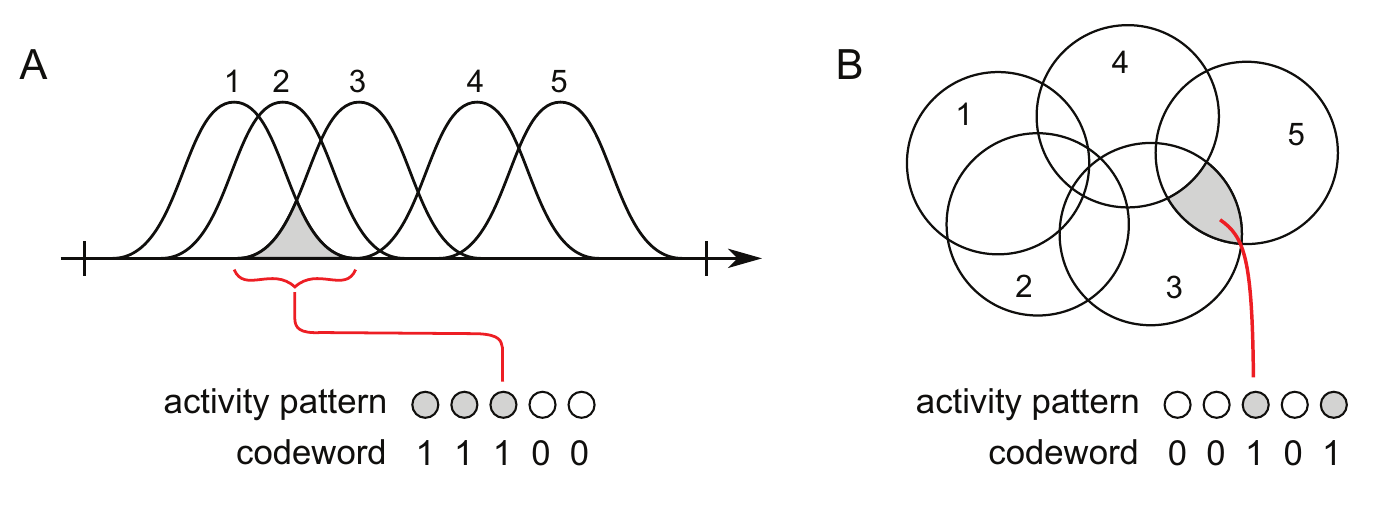}
\end{center}
\vspace{-.25in}
\caption{\small {Receptive field overlaps determine codewords in 1D and 2D RF codes.}  (A) Neurons in a 1D RF code have receptive fields that overlap on a line segment (or circle, in the case of orientation-tuning).  Each stimulus on the line
corresponds to a binary codeword.  Gaussians depict graded firing rates for neural responses; this additional information is discarded by the RF code.  (B) Neurons in a 2D RF code, such as a place field code, have receptive fields that partition a two-dimensional stimulus space into non-overlapping regions, as illustrated by the shaded area.  All stimuli within one of these regions will activate the same set of neurons, and hence have the same corresponding codeword. }
\end{figure}

A receptive field code (RF code) is a neural code that corresponds to the brain's representation of the 
stimulus space covered by the receptive fields.  When a stimulus lies in the intersection of several receptive
fields, the corresponding neurons may co-fire while the rest
remain silent.  The active subset $\sigma$ of neurons can be identified with a binary
codeword $c \in \{0,1\}^n$ via $\sigma = \supp(c)$.  Unless otherwise noted, a stimulus space $X$ need only be a topological space.
However, we usually have in mind $X \subset \RR^d$, and this becomes important when we consider {\it convex} RF codes.

\begin{definition}
Let $X$  be a stimulus space (e.g., $X \subset \RR^d$), and let $\U = \{U_1,\ldots,U_n\}$ be a collection of open sets, 
with each $U_i \subset X$ the receptive field of the $i$-th neuron in a population of $n$ neurons.  
The {\em receptive field code (RF code)} $\C(\U) \subset \{0,1\}^n$ is the set of all binary codewords corresponding to stimuli in $X$: 
$$\C(\U) \od \{ c \in \{0,1\}^n \mid (\bigcap_{i \in \supp(c)} U_i) \setminus (\bigcup_{j \notin \supp(c)} U_j) \neq \emptyset \}.$$
If $X \subset \RR^d$ and each of the $U_i$s is also a {\it convex} subset of $X$, then we say that $\C(\U)$ is a {\em convex} RF code.
\end{definition}

Our convention is that the empty intersection is $\bigcap_{i \in \emptyset} U_i = X$, and the empty union is $\bigcup_{i \in \emptyset} U_i = \emptyset$.
This means that if $\bigcup_{i=1}^n U_i \subsetneq X$, then $\C(\U)$ includes the all-zeros codeword corresponding to an ``outside'' point not covered by the receptive fields; 
on the other hand, if
$\bigcap_{i=1}^n U_i \neq \emptyset$, then $\C(\U)$ includes the all-ones codeword. 
Figure 1 shows examples of convex receptive fields covering one- and two-dimensional stimulus spaces, 
and examples of codewords corresponding to regions defined by the receptive fields.

Returning to our discussion in the Introduction, we have the following question:
If we can assume $\C = \C(\U)$ is a RF code, then {\it what can be learned about the underlying stimulus space $X$ from knowledge only of $\C$, and not of $\U$?} 
The answer to this question will depend critically on whether or not we can assume that the RF code is convex.  In particular, if we don't assume convexity of the receptive fields, then any code can be realized as a RF code in any dimension.

\begin{lemma}\label{lemma:RFform}
Let $\C \subset \{0,1\}^n$ be a neural code.  Then, for any $d \geq 1$, there exists a stimulus space $X \subset \RR^d$ and a collection of open sets $\U = \{U_1,\ldots,U_n\}$ (not necessarily convex), with $U_i \subset X$ for each $i \in [n]$, such that $\C = \C(\U)$.
\end{lemma}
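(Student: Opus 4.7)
My plan is to realize $\C$ as an RF code by a direct construction: assign a pairwise disjoint ``region'' of $\RR^d$ to each codeword, and then take each receptive field $U_i$ to be the union of those regions corresponding to codewords in which neuron $i$ fires. Concretely, I enumerate $\C = \{c^{(1)}, \dots, c^{(m)}\}$, pick pairwise disjoint open sets $V_1, \dots, V_m \subset \RR^d$ (e.g.\ disjoint open balls, which exist in any dimension $d \geq 1$), and set
\[
U_i \od \bigcup_{k \,:\, c^{(k)}_i = 1} V_k, \qquad X \od \bigcup_{k=1}^{m} V_k.
\]
Each $U_i$ is open in $X$, and the sets will be non-convex in general, which is permitted by the statement of the lemma.

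The verification that $\C(\U) = \C$ rests on a single observation forced by the disjointness of the $V_k$: every point $x \in X$ lies in exactly one $V_{k_0}$, and then $x \in U_i$ iff $c^{(k_0)}_i = 1$. Applied to an arbitrary subset $\sigma \subseteq [n]$, this gives
\[
\Big(\bigcap_{i \in \sigma} U_i\Big) \setminus \Big(\bigcup_{j \notin \sigma} U_j\Big) \;=\; \bigcup_{k \,:\, \supp(c^{(k)}) = \sigma} V_k,
\]
which is nonempty if and only if some codeword of $\C$ has support exactly $\sigma$. Thus the codewords realized by $\U$ are precisely those of $\C$.

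The one finicky point is the all-zeros codeword, which is handled automatically by the stated conventions $\bigcap_{i \in \emptyset} U_i = X$ and $\bigcup_{i \in \emptyset} U_i = \emptyset$: if $\mathbf{0} \in \C$, say $\mathbf{0} = c^{(k_0)}$, then the region $V_{k_0}$ sits inside $X$ but in no $U_i$, so $X \setminus \bigcup_i U_i \supseteq V_{k_0} \neq \emptyset$; if $\mathbf{0} \notin \C$, then every $V_k$ is contained in some $U_i$, so $X = \bigcup_i U_i$ and $\mathbf{0}$ is excluded from $\C(\U)$. Beyond this bookkeeping I do not anticipate any real obstacle: the construction is essentially forced by the combinatorics once convexity is dropped, and the same argument works uniformly for every $d \geq 1$.
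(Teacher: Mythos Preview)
Your proposal is correct and follows essentially the same approach as the paper: assign pairwise disjoint open neighborhoods to the codewords, define $U_i$ as the union of neighborhoods corresponding to codewords with $i$th coordinate $1$, and take $X$ to be the union of all neighborhoods. If anything, your verification is more explicit than the paper's, which simply asserts ``by construction, $\C = \C(\U)$'' after noting how the all-zeros codeword is handled.
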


\begin{proof}
Let $\C\subset\{0,1\}^n$ be any neural code, and order the elements of $\C$ as $\{c_1,\ldots,c_m\}$, where $m = |\C|$. For each $c\in \C$, choose a distinct point $x_c\in \RR^d$ and an open neighborhood $N_c$ of $x_c$ such that no two neighborhoods intersect.
Define $U_j \od \bigcup_{j\in \supp(c_k)} N_{c_k}$, let $\U=\{U_1,\ldots,U_n\}$, and $X=\bigcup_{i=1}^m N_{c_i}$.  Observe that if the all-zeros codeword is in $\C$, then
$N_{\bf 0} = X \setminus \bigcup_{i=1}^{n} U_i$ corresponds to the ``outside point'' not covered by any of the $U_i$s.  By construction, $\C = \C(\U).$
\end{proof}

Although any neural code $\C \subseteq \{0,1\}^n$ can be realized as a RF code, it is {\it not} true that any code can be realized as a {\it convex} RF code.  Counterexamples
can be found in codes having as few as three neurons.

\begin{lemma}\label{lemma:convex-counterexample}
The neural code $\C = \{0,1\}^3 \setminus \{111, 001\}$ on three neurons cannot be realized as a convex RF code.
\end{lemma}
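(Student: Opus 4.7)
The plan is to show that any collection $\U = \{U_1, U_2, U_3\}$ of convex open subsets of some $X \subset \RR^d$ realizing $\C$ forces a convex set to be disconnected, which is impossible.

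First I would translate each codeword's presence or absence in $\C$ into a set-theoretic statement about $\U$. The codewords $101$ and $011$ being in $\C$ give witness points $p \in (U_1 \cap U_3) \setminus U_2$ and $q \in (U_2 \cap U_3) \setminus U_1$, so in particular $U_3 \neq \emptyset$. The codeword $111$ being absent means $U_1 \cap U_2 \cap U_3 = \emptyset$. The codeword $001$ being absent means $U_3 \setminus (U_1 \cup U_2) = \emptyset$, i.e.\ $U_3 \subseteq U_1 \cup U_2$ (this is the key constraint, forced by the missing ``singleton'' codeword on neuron 3).

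Next I would exploit convexity of $U_3$. The line segment $\gamma(t) = (1-t)p + tq$, $t \in [0,1]$, lies entirely in $U_3$ since $U_3$ is convex and contains both endpoints. Because $U_3 \subseteq U_1 \cup U_2$, every point of this segment lies in $U_1$ or in $U_2$; because $U_1 \cap U_2 \cap U_3 = \emptyset$ and the segment is inside $U_3$, no point of the segment lies in both. Hence $A \od \gamma^{-1}(U_1)$ and $B \od \gamma^{-1}(U_2)$ are disjoint subsets of $[0,1]$ whose union is $[0,1]$. They are open in $[0,1]$ (as preimages of open sets under a continuous map) and both are nonempty (containing $0$ and $1$ respectively, after a suitable choice of orientation). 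This yields a separation of the interval $[0,1]$, contradicting its connectedness.

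I do not expect any serious obstacle here: the entire argument hinges on the observation that the missing codeword $001$ forces $U_3$ to be covered by $U_1 \cup U_2$, while the missing codeword $111$ forbids these two pieces from overlapping inside $U_3$ --- and a convex $U_3$ containing points of each piece cannot be split in this way. The mild care needed is simply to verify that the witness points $p$ and $q$ land in distinct parts of the cover so that both $A$ and $B$ are nonempty; this is immediate from $p \notin U_2$ and $q \notin U_1$.
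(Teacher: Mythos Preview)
Your proof is correct and follows essentially the same approach as the paper: both pick witness points $p \in (U_1 \cap U_3)\setminus U_2$ and $q \in (U_2 \cap U_3)\setminus U_1$, observe that the segment between them lies in $U_3$ by convexity, and derive a contradiction from the absence of $111$ and $001$. Your phrasing via connectedness of $[0,1]$ is slightly more streamlined than the paper's two-case analysis (segment meets $U_1 \cap U_2$ versus not), but the underlying idea is identical.
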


\begin{wrapfigure}{r}{.33\linewidth}
\vspace{-.3in}
 \includegraphics[width=2in]{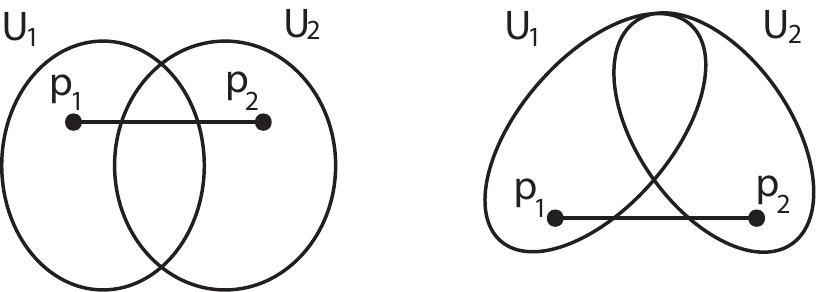} 
 \caption{\small Two cases in the proof of Lemma~\ref{lemma:convex-counterexample}.}
 \vspace{0in}
\end{wrapfigure}

\text{             }

\vspace{-.3in}
\begin{proof}

Assume the converse, and let $\U = \{U_1,U_2, U_3\}$ be a set of convex open sets in  $\RR^d$ such that $\C = \C(\U)$.  The code
necessitates that $U_1 \cap U_2 \neq \emptyset$ (since $110 \in \C$), $(U_1 \cap U_3) \setminus U_2 \neq \emptyset$ (since $101 \in \C$), and 
$(U_2 \cap U_3) \setminus U_1 \neq \emptyset$ (since $011 \in \C$).  Let $p_1 \in (U_1 \cap U_3) \setminus U_2$ and $p_2 \in (U_2 \cap U_3) \setminus U_1.$
Since $p_1,p_2 \in U_3$ and $U_3$ is convex, the line segment $\ell = (1-t)p_1 + t p_2$ for $t \in [0,1]$ must also be contained in $U_3$.  There are just two possibilities.  Case 1: $\ell$ passes through $U_1 \cap U_2$ (see Figure 2, left).  This implies $U_1 \cap U_2 \cap U_3 \neq \emptyset$, and hence $111 \in \C$, a contradiction.  Case 2: $\ell$ does not intersect $U_1 \cap U_2$.  Since $U_1, U_2$ are open sets, this implies $\ell$ passes outside of $U_1 \cup U_2$ (see Figure 2, right),
and hence $001 \in \C$, a contradiction.  
\end{proof}

\subsection{Stimulus space constraints arising from convex RF codes}

It is clear from Lemma~\ref{lemma:RFform} that there is essentially no constraint on the stimulus space for realizing a code as a RF code.  However, if we demand that $\C$
is a {\it convex} RF code, then the overlap structure of the $U_i$s sharply constrains the geometric and topological properties of the underlying stimulus space $X$.   
To see how this works, we first consider the simplicial complex of a neural code, $\Delta(\C)$.
Classical results in convex geometry and topology provide constraints on the underlying stimulus space $X$ for convex RF codes, based on the structure of $\Delta(\C)$.  
We will discuss these next.  We then turn to the question of constraints that arise from combinatorial properties of a neural code $\C$ that are {\it not} captured
by $\Delta(\C)$.

\subsubsection{Helly's theorem and the Nerve theorem}\label{sec:helly-nerve}

Here we briefly review two classical  and well-known theorems in convex geometry and topology, Helly's theorem and the Nerve theorem, as they apply to convex RF codes.
Both theorems can be used to relate the structure of the simplicial complex of a code, $\Delta(\C)$, to topological features of the underlying stimulus space $X$.  

Suppose $\U = \{U_1,\ldots,U_n\}$ is a finite collection of convex open subsets of $\RR^d$, with dimension $d<n$.  
We can associate to $\U$ a simplicial complex $N(\U)$ called the {\it nerve} of $\U$.  A subset $\{i_1,..,i_k\}\subset [n]$ belongs to $N(\U)$ if and only if  the appropriate intersection $\bigcap_{\ell=1}^kU_{i_\ell} $  is  nonempty. 
If we think of the $U_i$s as receptive fields, then $N(\U) = \Delta(\C(\U))$.  In other words, the nerve of the cover corresponds to the simplicial complex of the associated (convex) RF code.
\medskip

\noindent {\bf Helly's theorem.}
{\em Consider $k$ convex subsets, $U_1,\ldots,U_k \subset \RR^d,$ for $d<k$.
If the intersection of every $d+1$ of these sets is nonempty, then the full intersection $\bigcap_{i=1}^k U_i$ is also nonempty.}
\medskip

\noindent A nice exposition of this theorem and its consequences can be found in \cite{helly-review}.
One straightforward consequence is that the nerve $N(\U)$ is completely determined by its $d$-skeleton, and corresponds to the largest simplicial complex with that $d$-skeleton.  For example, if $d = 1$, then $N(\U)$ is a clique complex (fully determined by its underlying graph).  Since $N(\U) = \Delta(\C(\U))$, Helly's theorem imposes constraints on the minimal dimension of the stimulus space $X$ when $\C = \C(\U)$ is assumed to be a convex RF code.

\medskip

\noindent {\bf Nerve theorem.} 
{\em The homotopy type of $X(\U) \od \bigcup_{i=1}^n U_i$ is equal to the homotopy type of the nerve of the cover, $N(\U)$.  In particular, $X(\U)$ and $N(\U)$ have exactly the same homology groups.}
\medskip

\noindent The Nerve theorem  is an easy consequence of \cite[Corollary 4G.3]{Hatcher}.  This is a powerful theorem relating the simplicial complex of a RF code, $\Delta(\C(\U)) = N(\U)$, to topological features of the underlying space, such as homology groups and other homotopy invariants.  
%
%
Note, however, that the similarities between $X(\U)$ and $N(\U)$ only go so far.  In particular, $X(\U)$ and $N(\U)$ typically have very different {dimension}.  It is also important to keep in mind that the Nerve theorem concerns the topology of  $X(\U) = \bigcup_{i=1}^n U_i$.  In our setup, if the stimulus space $X$ is larger, so that  $\bigcup_{i=1}^n U_i \subsetneq X$, then the Nerve theorem tells us only about the homotopy type of $X(\U)$, not of $X$.  Since the $U_i$ are open sets, however, conclusions about the dimension of $X$ can still be inferred.

In addition to Helly's theorem and the Nerve theorem, there is a great deal known about $\Delta(\C(\U))=N(\U)$ for collections of convex sets in $\RR^d$.  In particular, the $f$-vectors of such simplicial complexes have been completely characterized by G. Kalai in \cite{kalai1,kalai2}.

\subsubsection{Beyond the simplicial complex of the neural code}\label{sec:beyond}

We have just seen how the simplicial complex of a neural code, $\Delta(\C)$, yields constraints on the stimulus space $X$ if we assume $\C$ can be realized as a convex RF code.  The example described in Lemma~\ref{lemma:convex-counterexample}, however, implies that other kinds of constraints on $X$ may emerge from the combinatorial structure of a neural code, even if there is no obstruction stemming from $\Delta(\C)$.

\begin{wrapfigure}{r}{.5\linewidth}
\centering
   \includegraphics[width=2.75in]{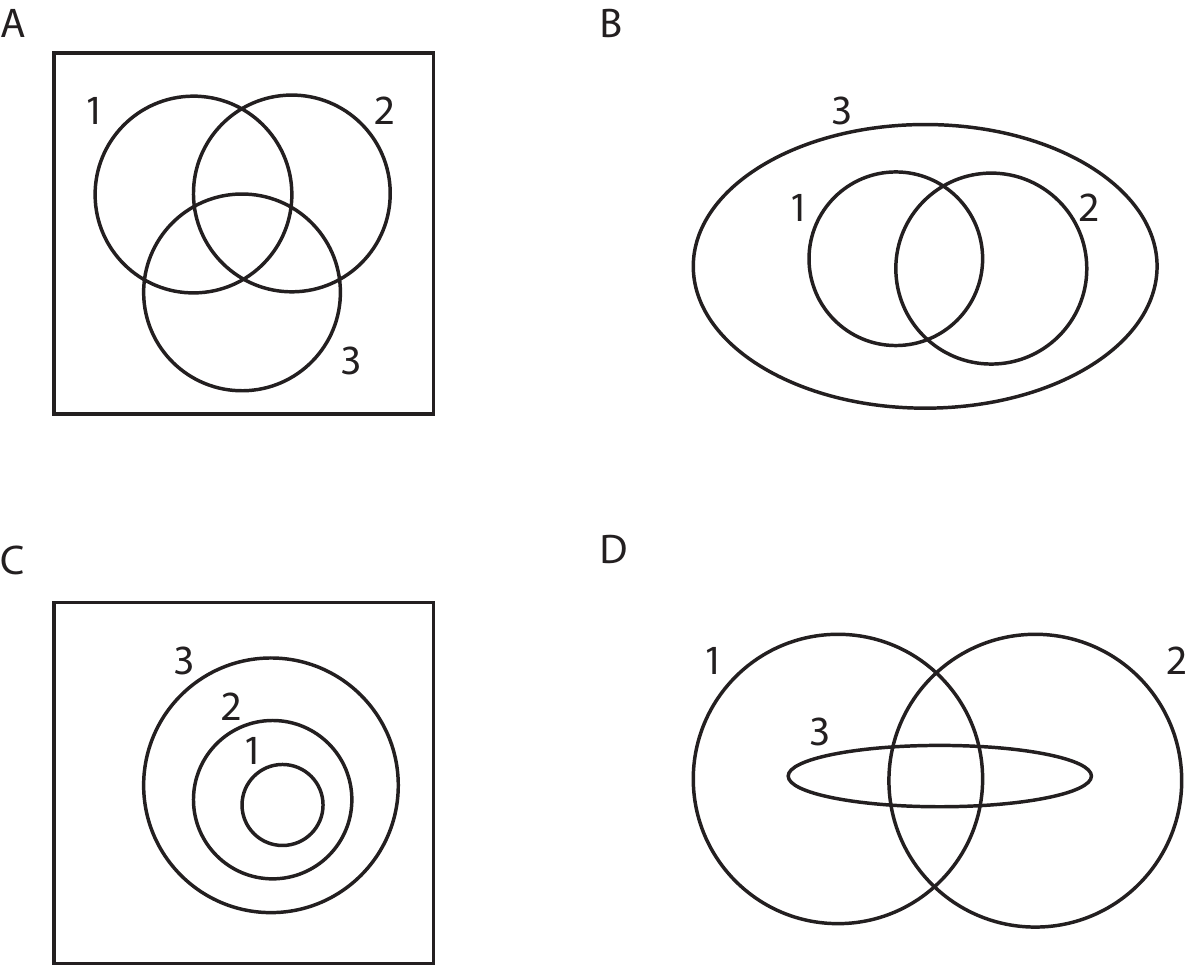} 
   \caption{\small Four arrangements of three convex receptive fields, $\U = \{U_1, U_2, U_3\}$, each having $\Delta(\C(\U)) = 2^{[3]}$.  
   Square boxes denote the stimulus space $X$ in cases where
   $U_1\cup U_2 \cup U_3 \subsetneq X$.
   (A) $\C(\U) = 2^{[3]}$, including the all-zeros codeword $000$.  (B) $\C(\U) = \{111, 101, 011, 001\}$, with $X = U_3$.
   (C) $\C(\U) = \{111, 011, 001, 000\}$.  (D) $\C(\U) = \{111, 101, 011, 110, 100, 010\},$ and $X = U_1 \cup U_2$.
The minimal embedding dimension for the codes in panels A and D is $d=2$, while for panels B and C it is $d=1$.}
\vspace{-.2in}
\end{wrapfigure}

In Figure 3 we show four possible arrangements of three convex receptive fields in the plane.  Each convex RF code has the same corresponding simplicial complex $\Delta(\C) = 2^{[3]}$, since $111 \in \C$ for each code.  Nevertheless, the arrangements clearly have different combinatorial properties.  In Figure 3C, for instance, we have $U_1 \subset U_2 \subset U_3$, while Figure 3A has no special containment relationships among the receptive fields.  This ``receptive field structure'' (RF structure) of the code has impliciations for the underlying stimulus space.

Let $d$ be the minimal integer for which the code can be realized as a convex RF code in $\RR^d$; we will refer to this as the {\em minimal embedding dimension} of $\C$.  Note that the codes in Figure 3A,D have $d=2$, whereas the codes in Figure 3B,C have $d=1$.  The simplicial complex, $\Delta(\C)$, is thus not sufficient
to determine the minimal embedding dimension of a convex RF code, but this information {\it is} somehow present in the RF structure of the code.   Similarly, in 
Lemma~\ref{lemma:convex-counterexample} we saw that $\Delta(\C)$ does not provide
sufficient information to determine whether or not $\C$ can be realized as a convex RF code; after working out
the RF structure, however, it was easy to see that the given code was not realizable.

\subsubsection{The receptive field structure (RF structure) of a neural code}

As we have just seen, the intrinsic structure of a neural code contains information about
the underlying stimulus space that cannot be inferred from the simplicial complex of the code alone. 
This information is, however, present in what we have loosely referred to as the ``RF structure'' of the code.
We now explain more carefully what we mean by this term.

Given a set of receptive fields $\U = \{U_1,\ldots,U_n\}$ in a stimulus space $X$, there are certain containment relations between intersections and unions of the $U_i$s that
are ``obvious,'' and carry no information about the particular arrangement in question.  For example, $U_1 \cap U_2 \subseteq U_2 \cup U_3 \cup U_4$ is always guaranteed to be true, because it follows from $U_2 \subseteq U_2.$  On the other hand, a relationship such as $U_3 \subseteq U_1 \cup U_2$ (as in Figure 3D) is {\it not} always present, and thus reflects something about the structure of a particular receptive field arrangement.

Let $\C \subset \{0,1\}^n$ be a neural code, and let $\U = \{U_1,\ldots,U_n\}$ be any arrangement of receptive fields in a stimulus space $X$ such that $\C = \C(\U)$ (this is guaranteed to exist by Lemma~\ref{lemma:RFform}).  The {\it RF structure} of $\C$ refers to the set of relations among the $U_i$s that are not ``obvious,'' 
and have the form:
$$\bigcap_{i \in \sigma} U_i \subseteq \bigcup_{j \in \tau} U_j, \;\; \text{ for } \;\; \sigma \cap \tau = \emptyset.$$
In particular, this includes any empty intersections $\bigcap_{i \in \sigma} U_i = \emptyset$ (here $\tau = \emptyset$).
In the Figure 3 examples, the panel A code has no RF structure relations; while panel B has $U_1 \subset U_3$ and $U_2 \subset U_3$; panel C has $U_1 \subset U_2 \subset U_3$; and panel D has $U_3 \subset U_1 \cup U_2$.

The central goal of this paper is to develop a method to algorithmically extract a minimal description of the RF structure directly from a neural code $\C$, without first
realizing it as $\C(\U)$ for some arrangement of receptive fields.  We view this as a first step towards inferring stimulus space features that cannot be obtained from the simplicial complex $\Delta(\C)$.  To do this we turn to an algebro-geometric framework, that of neural rings and ideals.  These objects are defined in Section~\ref{sec:neural-ring} so as to capture the full combinatorial data of a neural code, but in a way that allows us to naturally and algorithmically infer a compact description of the desired RF structure, as shown in Section~\ref{sec:RF-structure}.

\section{Neural rings and ideals}\label{sec:neural-ring}

In this section we define the neural ring $R_\C$ and a closely-related neural ideal, $J_\C$.  First, we briefly review some basic algebraic geometry background 
needed throughout this paper.

\subsection{Basic algebraic geometry background}

The following definitions are standard (see, for example, \cite{cox-little-oshea}). 

\begin{definition}[Rings and ideals.]
Let $R$ be a commutative ring. 
A subset $I\subseteq R$ is an {\it ideal} of $R$ if it has the following properties:
\begin{enumerate}
\item[(i)] $I$ is a subgroup of $R$ under addition.
\item[(ii)] If $a\in I$, then $ra\in I$ for all $r \in R$.
\end{enumerate}
An ideal $I$ is said to be {\it generated by} a set $A$, and we write $I=\langle A\rangle$, if $$I=\{  r_1a_1+\cdots + r_n a_n \, |\, a_i\in A, r_i\in R, \text{ and } n\in \mathbb{N}\}.$$ 
In other words, $I$ is the set of all finite combinations of elements of $A$ with coefficients in $R$.

An ideal $I \subset R$ is {\it proper} if $I\subsetneq R$.  An ideal $I \subset R$ is {\it prime} if it is proper and satisfies: if $rs\in I$ for some $r,s\in R$, then 
$r\in I$ or $s\in I$.  An ideal $m \subset R$ is {\it maximal} if it is proper and for any ideal $I$ such that $m\subseteq I\subseteq R$, either $I=m$ or $I=R$.
 An ideal $I \subset R$ is {\it radical} if $r^n\in I$ implies $r\in I$, for any $r \in R$ and $n \in  \mathbb{N}$.  An ideal $I \subset R$ is {\it primary} if $rs\in I$ implies $r\in I$ or $s^n \in I$ for some $n \in  \mathbb{N}$.  A {\it primary decomposition} of an ideal $I$ expresses $I$ as an intersection of finitely many primary ideals.
\end{definition}

\begin{definition}[Ideals and varieties.]
Let $k$ be a field, $n$ the number of neurons, and  $k[x_1,\ldots,x_n]$ a polynomial ring with one indeterminate $x_i$ for each neuron.
We will consider $k^n$ to be the neural activity space, where each point $v = (v_1,\ldots,v_n) \in k^n$ is a vector tracking the state $v_i$ of each neuron.   Note that any
polynomial $f \in k[x_1,\ldots,x_n]$ can be evaluated at a point $v \in k^n$ by setting $x_i = v_i$ each time $x_i$ appears in $f$.  We will denote this value $f(v)$.

Let $J \subset k[x_1,\ldots,x_n]$ be an ideal, and define the variety
$$V(J) \od \{v \in k^n \mid f(v) = 0 \text{ for all } f \in J\}.$$
Similarly, given a subset $S \subset k^n$, we can define the ideal of functions that vanish on this subset as
$$I(S) \od \{f \in k[x_1,\ldots.,x_n] \mid f(v) = 0 \text{ for all } v \in S\}.$$
The ideal-variety correspondence \cite{cox-little-oshea} gives us the usual order-reversing relationships: $I \subseteq J \Rightarrow V(J) \subseteq V(I)$, and
 $S \subseteq T \Rightarrow I(T) \subseteq I(S)$.  Furthermore, $V(I(V)) = V$ for any variety $V$, but
 it is not always true that $I(V(J)) = J$ for an ideal $J$ (see Section~\ref{sec:lemma-proofs}).
We will regard neurons as having only two states, ``on'' or ``off,'' and thus choose $k = \F_2 = \{0,1\}$.  
\end{definition}

\subsection{Definition of the neural ring}

Let $\C \subset \{0,1\}^n = \F_2^n$ be a neural code, and define the ideal $I_\C$ of $\F_2[x_1,\ldots,x_n]$ corresponding to the set of polynomials that vanish on all codewords in $\C$:
$$I_\C \od I(\C) = \{f \in \F_2[x_1,\ldots,x_n] \mid f(c) = 0 \text{ for all } c \in \C\}.$$
By design, $V(I_\C) = \C$ and hence $I(V(I_\C)) = I_\C$.
Note that the ideal generated by the  {\em Boolean relations},
$$\B \od \langle x_1^2-x_1,\ldots,x_n^2-x_n \rangle,$$
is automatically contained in $I_\C$, irrespective of $\C$.

The {\em neural ring} $R_\C$ corresponding to the code $\C$ is the quotient ring
$$R_\C \od \F_2[x_1,\ldots,x_n]/I_\C,$$
together with the set of indeterminates $x_1,\ldots,x_n$.  We say that two neural rings are {\em equivalent} if there
is a bijection between the sets of indeterminates that yields a ring homomorphism.
\medskip

\noindent {\bf Remark.} Due to the Boolean relations, any element $y \in R_\C$ satisfies $y^2 = y$ (cross-terms vanish because $2=0$ in $\F_2$), so the neural ring is a {\it Boolean ring} isomorphic to $\F_2^{|\C|}$.  It is important to keep in mind, however, that $R_\C$ comes equipped with a privileged set of functions, $x_1,\ldots,x_n$; this allows the ring to keep track of considerably more structure than just the size of the neural code.

\subsection{The spectrum of the neural ring}\label{sec:spec}
We can think of $R_\C$ as the ring of functions of the form $f:\C \rightarrow \F_2$ on the neural code, where each function assigns a $0$ or $1$
to each codeword $c \in \C$ by evaluating $f \in \F_2[x_1,\ldots,x_n]/I_\C$ through the substitutions $x_i = c_i$ for $i = 1,\ldots,n$.  Quotienting the original
polynomial ring by $I_\C$ ensures that there is only one zero function in $R_\C$.  The spectrum of the neural ring, $\mathrm{Spec}(R_\C)$, consists 
of all prime ideals in $R_\C$.  We will see shortly that the elements of $\mathrm{Spec}(R_\C)$ are in one-to-one correspondence with the
elements of the neural code $\C$.  Indeed, our definition of $R_\C$ was designed for this to be true.

For any point $v \in \{0,1\}^n$ of the neural activity space, let
$$m_v \od I(v) =  \{f \in \F_2[x_1,\ldots,x_n] \mid f(v) = 0 \}$$
be the maximal ideal of $\F_2[x_1,\ldots,x_n]$ consisting of all functions that vanish on $v$.  
We can also write $m_v = \langle x_1-v_1,\ldots,x_n-v_n\rangle$ (see Lemma~\ref{lemma:mv} in Section~\ref{sec:lemma-proofs}).
Using this, we can characterize the spectrum of the neural ring.

\begin{lemma}  \label{lemma:spec}
$\mathrm{Spec}(R_\C)  = \{ \bar{m}_v \mid v \in \C\},$ where $\bar{m}_v$ is the quotient of $m_v$ in $R_\C$.
\end{lemma}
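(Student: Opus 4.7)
The plan is to combine the correspondence theorem for prime ideals in quotient rings with the rigid structure forced by the Boolean relations $\B \subseteq I_\C$. Prime ideals of $R_\C = \F_2[x_1,\ldots,x_n]/I_\C$ are in bijection with prime ideals $\mathfrak{p}$ of $\F_2[x_1,\ldots,x_n]$ containing $I_\C$, via $\mathfrak{p} \mapsto \mathfrak{p}/I_\C$. So the goal reduces to identifying these primes.

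First I would show the inclusion $\{\bar m_v \mid v \in \C\} \subseteq \mathrm{Spec}(R_\C)$. Since $m_v = I(\{v\})$ and $v \in \C$ implies $\{v\} \subseteq \C$, the order-reversing ideal-variety correspondence gives $I_\C = I(\C) \subseteq I(\{v\}) = m_v$, so $\bar m_v = m_v/I_\C$ is a well-defined ideal of $R_\C$. It is maximal because $R_\C/\bar m_v \cong \F_2[x_1,\ldots,x_n]/m_v \cong \F_2$, hence prime.

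Next comes the main step: every prime of $R_\C$ has this form. Lift such a prime to $\mathfrak{p} \supseteq I_\C \supseteq \B$. For each $i$, the relation $x_i(x_i-1) = x_i^2 - x_i$ lies in $\B \subseteq \mathfrak{p}$, and since $\mathfrak{p}$ is prime, either $x_i \in \mathfrak{p}$ or $x_i - 1 \in \mathfrak{p}$. Define $v_i \in \{0,1\}$ so that $x_i - v_i \in \mathfrak{p}$ (choose either one if both lie in $\mathfrak{p}$, but primality of $\mathfrak{p}$ and the fact that $m_v$ below will be maximal and proper will rule this out). Then $m_v = \langle x_1 - v_1,\ldots,x_n - v_n\rangle \subseteq \mathfrak{p}$. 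Since $m_v$ is maximal and $\mathfrak{p}$ is a proper ideal, we conclude $\mathfrak{p} = m_v$. Finally, to show $v \in \C$, apply the ideal-variety correspondence once more: $I_\C \subseteq m_v$ gives $\{v\} = V(m_v) \subseteq V(I_\C) = \C$, where $V(I_\C) = \C$ was noted at the start of Section 3.2.

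The step I expect to be the main conceptual point is the Boolean-relation argument that funnels every prime $\mathfrak{p}$ containing $\B$ into the form $\langle x_1 - v_1,\ldots,x_n - v_n\rangle$; everything else is then just bookkeeping via the standard ideal-variety dictionary. Injectivity of $v \mapsto \bar m_v$ is immediate from injectivity of $v \mapsto m_v$ in the polynomial ring, yielding the claimed bijection.
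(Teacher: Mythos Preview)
Your proof is correct. The approach differs from the paper's in one notable respect. The paper first invokes the fact that $R_\C$ is a Boolean ring, so $\mathrm{Spec}(R_\C) = \mathrm{maxSpec}(R_\C)$, and then takes as known that \emph{every} maximal ideal of $\F_2[x_1,\ldots,x_n]$ already has the form $m_v$ for some $v \in \F_2^n$; from there it only needs to decide which $m_v$ contain $I_\C$, using $\rho_v$ to rule out $v \notin \C$. Your argument instead works directly at the level of primes: you use the Boolean relations $x_i(1-x_i) \in \B \subseteq \mathfrak{p}$ and primality to force each $x_i - v_i$ into $\mathfrak{p}$, and then maximality of $m_v$ finishes it. This is more self-contained, since it does not presuppose the classification of maximal ideals of the polynomial ring over $\F_2$ (which is really a weak Nullstellensatz statement) and does not need the Boolean-ring observation that prime equals maximal. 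The paper's route is shorter if one is willing to quote those facts; yours makes the mechanism visible and would generalize more readily to other base rings where the Boolean relations are imposed by hand.
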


\noindent The proof is given in Section~\ref{sec:lemma-proofs}.  Note that because $R_\C$ is a Boolean ring, the maximal ideal spectrum and the prime ideal spectrum coincide.

\subsection{The neural ideal \& an explicit set of relations for the neural ring}

The definition of the neural ring is rather impractical, as it does not give us explicit relations for generating $I_\C$ and $R_\C$.
Here we define another ideal, $J_\C$, via
an explicit set of generating relations.  Although $J_\C$ is closely related to $I_\C$, it
turns out that $J_\C$ is a more convenient object to study, which is why we will use the term {\it neural ideal} to refer to $J_\C$ rather than $I_\C$.

For any $v \in \{0,1\}^n$, consider the function $\rho_v \in \F_2[x_1,\ldots,x_n]$ defined as
$$\rho_v \od \prod_{i=1}^n(1-v_i-x_i) = \prod_{\{i\,|\,v_i=1\}}x_i\prod_{\{j\,|\,v_j=0\}}(1-x_j) = 
\prod_{i \in \supp(v)}x_i\prod_{j \notin \supp(v)}(1-x_j).$$
Note that $\rho_v(x)$ can be thought of as a characteristic function for $v$, since it satisfies $\rho_v(v) = 1$ and $\rho_v(x) = 0$ for any other $x \in \F_2^n$.  Now consider the ideal $J_\C \subseteq  \F_2[x_1,\ldots,x_n]$ generated by all functions $\rho_v$, for $v \notin \C$:
$$J_\C \od \langle \{ \rho_v \mid v \notin \C\}\rangle.$$
We call $J_\C$ the {\em neural ideal} corresponding to the neural code $\C$.  If $\C = 2^{[n]}$ is the complete code, we simply set $J_\C = 0$, the zero ideal.  $J_\C$ is related to $I_\C$ as follows, giving us explicit relations for the neural ring.

\begin{lemma} \label{lemma:explicit-relations}
Let $\C \subset \{0,1\}^n$ be a neural code.  Then,
$$I_\C = J_\C + \B = \big \langle \{ \rho_v \mid v \notin \C\}, \{x_i(1-x_i) \mid i \in [n]\} \big \rangle,$$
where $\B = \langle \{x_i(1-x_i) \mid i \in [n]\} \rangle$ is the ideal generated by the Boolean relations, and $J_\C$
is the neural ideal.
\end{lemma}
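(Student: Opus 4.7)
The plan is to prove the two inclusions separately. The inclusion $J_\C + \B \subseteq I_\C$ reduces to checking that each generator vanishes on every codeword of $\C$. For the Boolean generators $x_i(1-x_i)$, this is automatic since every $v \in \C \subset \{0,1\}^n$ has $v_i \in \{0,1\}$. For a generator $\rho_v$ with $v \notin \C$, the characteristic-function property noted before the lemma, namely $\rho_v(w) = 1$ if $w = v$ and $\rho_v(w) = 0$ otherwise, shows $\rho_v(c) = 0$ for every $c \in \C$ (as $c \ne v$). Hence $\rho_v \in I_\C$ and the easy inclusion follows.

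For the reverse inclusion $I_\C \subseteq J_\C + \B$, the strategy is to first reduce an arbitrary $f \in I_\C$ modulo $\B$ to a multilinear representative, and then expand that representative in the ``delta-function'' basis $\{\rho_v\}_{v \in \{0,1\}^n}$. Concretely, since $x_i^2 \equiv x_i \pmod{\B}$, we can replace any occurrence of $x_i^k$ with $x_i$ for $k \geq 1$, so every $f \in \F_2[x_1,\ldots,x_n]$ is congruent modulo $\B$ to a unique multilinear polynomial $\tilde{f}$.

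The next step is to observe that the $\rho_v$, as $v$ ranges over $\{0,1\}^n$, form an $\F_2$-basis for the space of multilinear polynomials in $n$ variables. Linear independence follows by evaluation: if $\sum_v a_v \rho_v = 0$ as a polynomial, then evaluating at any $w \in \{0,1\}^n$ and using $\rho_v(w) = \delta_{v,w}$ gives $a_w = 0$. Since both the space of multilinear polynomials and the space $\{\rho_v\}_v$ have dimension $2^n$ over $\F_2$, they coincide, and for any multilinear $\tilde{f}$ we obtain the expansion
\[
\tilde{f} \;=\; \sum_{v \in \{0,1\}^n} \tilde{f}(v)\,\rho_v \;=\; \sum_{v \in \{0,1\}^n} f(v)\,\rho_v,
\]
the last equality because $f \equiv \tilde{f} \pmod{\B}$ and the Boolean generators vanish on $\{0,1\}^n$. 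Now if $f \in I_\C$, then $f(v) = 0$ for all $v \in \C$, so the sum collapses to $\tilde{f} = \sum_{v \notin \C} f(v)\,\rho_v \in J_\C$, and therefore $f = \tilde{f} + (f - \tilde{f}) \in J_\C + \B$.

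The main obstacle is the slightly delicate bookkeeping in the second inclusion: one must verify both that reduction modulo $\B$ lands in the multilinear subspace and that $\{\rho_v\}$ is a basis of that subspace, which together justify the pointwise ``Lagrange-style'' expansion. Neither step is deep, but the argument hinges on separating what $\B$ contributes (forcing multilinearity) from what $J_\C$ contributes (killing the codeword-indexed coefficients), so the cleanest presentation is exactly the two-step reduction-then-expansion above.
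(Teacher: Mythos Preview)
Your proof is correct. It differs from the paper's argument, which is a one-line application of the Strong Nullstellensatz in finite fields: since $V(J_\C)=\C$, one has $I_\C=I(V(J_\C))=J_\C+\langle x_i^2-x_i\rangle=J_\C+\B$. Your approach instead proves both inclusions directly, the nontrivial one via reduction modulo $\B$ to a multilinear representative followed by the Lagrange-style expansion in the basis $\{\rho_v\}_{v\in\{0,1\}^n}$. In effect you are reproving, by hand, the special case of the finite-field Nullstellensatz that the paper invokes. The trade-off is the expected one: the paper's route is shorter and situates the lemma within a general framework, while yours is fully self-contained and makes explicit exactly why the $\rho_v$ and the Boolean relations together suffice---no black-box theorem is needed, and the same interpolation idea would work over any finite field with the obvious modifications.
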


\noindent The proof is given in Section~\ref{sec:lemma-proofs}.

\section{How to infer RF structure using the neural ideal}\label{sec:RF-structure}

This section is the heart of the paper.  We begin by presenting an alternative set of relations that can be used to define the neural ring.  These relations
enable us to easily interpret elements of $I_\C$ as receptive field relationships, clarifying the connection between the neural ring and ideal and the RF structure of the code.
We next introduce pseudo-monomials and pseudo-monomial ideals, and use these notions to obtain a minimal description of the neural ideal, which we call
the ``canonical form.''  Theorem~\ref{thm:canonical-form} enables us to use the canonical form of $J_\C$ in order to ``read off'' a minimal description of the RF structure of the code.
Finally, we present an algorithm that inputs a neural code $\C$ and outputs the canonical form $CF(J_\C)$, and illustrate its use in a detailed example.

\subsection{An alternative set of relations for the neural ring}
Let $\C \subset \{0,1\}^n$ be a neural code, and recall by Lemma~\ref{lemma:RFform} that $\C$ can always be realized as a RF code $\C = \C(\U)$, provided we don't require the $U_i$s to be convex.
Let $X$ be a stimulus space and $\U = \{U_i\}_{i=1}^n$ a collection of open sets in $X$, and consider the RF code $\C(\U)$.
The neural ring corresponding to this code is $R_{\C(\U)}.$

Observe that the functions $f \in R_{\C(\U)}$ can be evaluated at any point $p \in X$ by assigning
$$x_i(p) = \left\{\begin{array}{cc} 1 & \text{if}\; p \in U_i \\ 0 & \text{if}\; p \notin U_i \end{array}\right.$$
each time $x_i$ appears in the polynomial $f$.  The vector $(x_1(p),\ldots,x_n(p)) \in \{0,1\}^n$ represents the neural response to the stimulus $p$.
Note that if $p \notin \bigcup_{i=1}^n U_i$, then $(x_1(p),\ldots,x_n(p)) = (0,\ldots,0)$ is the all-zeros codeword.
For any $\sigma \subset [n]$, define
$$U_\sigma \od \bigcap_{i \in \sigma} U_i, \;\text{  and  }\; x_\sigma \od \prod_{i \in \sigma} x_i.$$
Our convention is that $x_\emptyset = 1$ and $U_{\emptyset} = X$, even in cases where $X \supsetneq \bigcup_{i = 1}^n U_i$.
Note that for any $p \in X$,
$$x_\sigma(p) = \left\{\begin{array}{cc} 1 & \text{if}\; p \in U_\sigma \\ 0 & \text{if}\; p \notin U_\sigma. \end{array}\right.$$

The relations in $I_{\C(\U)}$ encode the combinatorial data of $\U$.  For example, if $U_\sigma = \emptyset$ then we cannot have $x_\sigma = 1$ at any point of the stimulus space $X$, and
must therefore impose the relation $x_\sigma$
to ``knock off'' those points.
On the other hand, if $U_\sigma \subset U_i \cup U_j,$ then $x_\sigma = 1$ implies either $x_i = 1$ or $x_j = 1$, something that is guaranteed by imposing the relation
$x_\sigma(1-x_i)(1-x_j)$.  These observations lead us to an alternative ideal, $I_\U \subset \F_2[x_1,\ldots,x_n]$, defined directly
from the arrangement of receptive fields $\U = \{U_1,\ldots,U_n\}$:
$$I_\U \od \big \langle \big\{ x_\sigma \prod_{i \in \tau} (1-x_i) \mid U_\sigma \subseteq
\bigcup_{i \in \tau}U_i \big\} \big \rangle .$$
Note that if $\tau = \emptyset$, we only get a relation for $U_\sigma = \emptyset$, and this is $x_\sigma$.
If $\sigma = \emptyset$, then $U_\sigma = X$, and we only get relations of this type if $X$ is contained in the union of the $U_i$s.  This is equivalent to the requirement that there is no ``outside point'' corresponding to the all-zeros codeword.

Perhaps unsurprisingly, it turns out that $I_\U$ and $I_{\C(\U)}$ exactly coincide, so $I_\U$ provides an alternative set of relations that can be used to define $R_{\C(\U)}$.

\begin{theorem} \label{thm:ideal-equivalence}
$I_\U = I_{\C(\U)}.$
\end{theorem}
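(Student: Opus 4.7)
The plan is to show both containments. For the easy direction $I_\U \subseteq I_{\C(\U)}$, I would take an arbitrary generator $g = x_\sigma \prod_{i \in \tau}(1-x_i)$ with $U_\sigma \subseteq \bigcup_{i \in \tau} U_i$, and evaluate it at an arbitrary codeword $c \in \C(\U)$. If $x_\sigma(c) = 0$ we are done; otherwise $\sigma \subseteq \supp(c)$, so any witness point $p \in (\bigcap_{i \in \supp(c)} U_i)\setminus(\bigcup_{j \notin \supp(c)} U_j)$ lies in $U_\sigma$, and hence by hypothesis in some $U_i$ with $i \in \tau$. This forces $i \in \supp(c)$ — but $i \in \tau$ means the factor $(1-x_i)$ evaluates to $0$ at $c$, killing the product. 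So $g \in I_{\C(\U)}$.

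For the reverse containment $I_{\C(\U)} \subseteq I_\U$, the cleanest route is to invoke Lemma~\ref{lemma:explicit-relations}, which tells us $I_{\C(\U)} = J_{\C(\U)} + \B$. It thus suffices to show that every Boolean relation $x_i(1-x_i)$ and every characteristic polynomial $\rho_v$ with $v \notin \C(\U)$ lies in $I_\U$. The Boolean relations are of the form $x_\sigma \prod_{i \in \tau}(1-x_i)$ with $\sigma = \tau = \{i\}$, and the trivial inclusion $U_i \subseteq U_i$ shows they are generators of $I_\U$. For $\rho_v$, write $\sigma = \supp(v)$ and $\tau = [n]\setminus\supp(v)$, so that $\rho_v = x_\sigma \prod_{j \in \tau}(1-x_j)$. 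The condition $v \notin \C(\U)$ unpacks, by the definition of $\C(\U)$, to
\[
\Bigl(\bigcap_{i \in \sigma} U_i\Bigr) \setminus \Bigl(\bigcup_{j \in \tau} U_j\Bigr) = \emptyset,
\]
which is exactly $U_\sigma \subseteq \bigcup_{j \in \tau} U_j$. So $\rho_v$ is itself a generator of $I_\U$, and we conclude $J_{\C(\U)} \subseteq I_\U$, hence $I_{\C(\U)} \subseteq I_\U$.

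There is no real obstacle here; the proof is essentially a dictionary between the set-theoretic condition defining the generators of $I_\U$ and the point-set definition of $\C(\U)$. The only point requiring a moment of care is the edge case $\sigma = \emptyset$ (where $U_\sigma = X$ and $x_\sigma = 1$) and $\tau = \emptyset$ (where the product is empty and the relation becomes $x_\sigma$), which are handled consistently by the stated conventions $U_\emptyset = X$ and $x_\emptyset = 1$. Once Lemma~\ref{lemma:explicit-relations} is in hand, everything reduces to matching generators, so the proof is short.
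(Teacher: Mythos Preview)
Your proof is correct, but the direction $I_\U \subseteq I_{\C(\U)}$ is handled differently from the paper. The paper actually proves the stronger equality $J_\U = J_{\C(\U)}$ (where $J_\U$ is $I_\U$ with the Boolean relations stripped out; see Lemma~\ref{lemma:JU}), and then adds $\B$ to both sides. For the inclusion $J_\U \subseteq J_{\C(\U)}$ this requires real work: one must show that a generator $x_\sigma\prod_{i\in\tau}(1-x_i)$ lies in the ideal generated by the $\rho_v$'s, not merely that it vanishes on $\C(\U)$. The paper does this via an auxiliary induction (Lemma~\ref{lemma:induction-arg}), observing that all the ``completions'' $x_\sigma\prod_{i\in\tau}(1-x_i)\prod_{k\notin\sigma\cup\tau}P_k$ with $P_k\in\{x_k,1-x_k\}$ are among the $\rho_v$, and that the ideal they generate already contains $x_\sigma\prod_{i\in\tau}(1-x_i)$.

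Your argument instead evaluates generators of $I_\U$ directly at codewords, which immediately places them in the vanishing ideal $I_{\C(\U)}=I(\C(\U))$. This is shorter and entirely avoids Lemma~\ref{lemma:induction-arg}; the price is that you do not obtain $J_\U = J_{\C(\U)}$, which the paper uses later (e.g., in the proofs of Lemma~\ref{lemma:gen-types} and Theorem~\ref{thm:canonical-form}). For the theorem as stated, your route is the more economical one; the paper's route is chosen because the intermediate identity $J_\U = J_{\C(\U)}$ is itself needed downstream. The direction $I_{\C(\U)}\subseteq I_\U$ is essentially identical in both proofs.
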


\noindent The proof is given in Section~\ref{sec:proof1}.  

\subsection{Interpreting neural ring relations as receptive field relationships}\label{sec:types}

Theorem~\ref{thm:ideal-equivalence} suggests that we can interpret elements of $I_\C$ in terms of relationships between receptive fields.  

\begin{lemma}\label{lemma:corresp}
Let $\C \subset \{0,1\}^n$ be a neural code, and let $\U = \{U_1,\ldots,U_n\}$ be any collection of open sets (not necessarily convex) in a stimulus space $X$ 
such that $\C = \C(\U)$. 
Then, for any pair of subsets $\sigma,\tau \subset [n]$,
$$x_\sigma \prod_{i \in \tau} (1-x_i) \in I_\C \; \Leftrightarrow \; U_\sigma \subseteq \bigcup_{i \in \tau} U_i.$$
\end{lemma}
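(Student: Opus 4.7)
The plan is to prove both directions by directly evaluating the polynomial $f \od x_\sigma \prod_{i \in \tau}(1-x_i)$ on codewords in $\C$, using that $I_\C$ consists of exactly the polynomials that vanish on every element of $\C$. The key observation is that for any point $p \in X$ and its associated codeword $c = (x_1(p),\ldots,x_n(p)) \in \C(\U)$, we have
$$f(c) = \prod_{i \in \sigma} x_i(p) \cdot \prod_{i \in \tau}(1-x_i(p)),$$
which equals $1$ precisely when $p \in U_\sigma$ and $p \notin U_i$ for every $i \in \tau$, and equals $0$ otherwise.

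For the $(\Leftarrow)$ direction, I would fix an arbitrary codeword $c \in \C = \C(\U)$; by definition of $\C(\U)$, there is a witnessing point $p \in X$ with $\supp(c) = \{i : p \in U_i\}$. If $U_\sigma \subseteq \bigcup_{i \in \tau} U_i$, then it is impossible to have $p \in U_\sigma$ and simultaneously $p \notin U_i$ for all $i \in \tau$, so $f(c) = 0$. Since $c$ was arbitrary, $f$ vanishes on $\C$ and hence lies in $I_\C$. (Alternatively, one could invoke Theorem~\ref{thm:ideal-equivalence} directly, since the polynomials of this form associated to containments are by construction generators of $I_\U = I_\C$; but the direct evaluation argument is shorter and self-contained.)

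For the $(\Rightarrow)$ direction, I would argue by contrapositive. Assuming $U_\sigma \not\subseteq \bigcup_{i \in \tau} U_i$, pick any $p \in U_\sigma \setminus \bigcup_{i \in \tau} U_i$; the corresponding codeword $c$ satisfies $\sigma \subseteq \supp(c)$ and $\tau \cap \supp(c) = \emptyset$, and $c \in \C(\U) = \C$ directly from the definition (since $p$ lies in $(\bigcap_{i \in \supp(c)} U_i) \setminus (\bigcup_{j \notin \supp(c)} U_j)$). Evaluation then gives $f(c) = 1 \neq 0$, so $f \notin I_\C$.

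The argument is essentially elementary and no step presents a genuine obstacle; the main care required is bookkeeping around the definition of $\C(\U)$ and its point-to-codeword evaluation, together with handling the degenerate case $\sigma \cap \tau \neq \emptyset$. In that case $f$ is divisible by $x_i(1-x_i) \in \B \subseteq I_\C$ for any $i \in \sigma \cap \tau$, so $f \in I_\C$ automatically, and the containment $U_\sigma \subseteq U_i \subseteq \bigcup_{j \in \tau} U_j$ is likewise automatic, so the equivalence holds vacuously.
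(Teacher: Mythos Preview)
Your proof is correct. It differs from the paper's in a useful way: both directions are handled by a single, self-contained evaluation argument based on the definition of $I_\C$ as the vanishing ideal of $\C$, using a witnessing point $p \in X$ for each codeword. The paper instead invokes Theorem~\ref{thm:ideal-equivalence} for the $(\Leftarrow)$ direction and, for $(\Rightarrow)$, routes the argument through the characteristic functions $\rho_v$: one first observes that $\rho_v$ is a multiple of $x_\sigma\prod_{i\in\tau}(1-x_i)$ for every $v$ with $\sigma\subseteq\supp(v)$ and $\tau\cap\supp(v)=\emptyset$, hence $\rho_v\in I_\C$, hence such $v\notin\C$, and only then deduces $U_\sigma\setminus\bigcup_{i\in\tau}U_i=\emptyset$. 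Your contrapositive collapses these steps into one. What the paper's route buys is that the $(\Leftarrow)$ direction becomes a one-line citation once Theorem~\ref{thm:ideal-equivalence} is in hand; what your route buys is independence from that theorem and a cleaner symmetry between the two directions. Your treatment of the degenerate case $\sigma\cap\tau\neq\emptyset$ matches the paper's.
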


\begin{proof}  ($\Leftarrow$) This is a direct consequence of Theorem~\ref{thm:ideal-equivalence}.
($\Rightarrow$)  We distinguish two cases, based on whether or not $\sigma$ and $\tau$ intersect.
If $x_\sigma \prod_{i \in \tau} (1-x_i) \in I_\C$ and $\sigma \cap \tau \neq \emptyset$, then $x_\sigma \prod_{i \in \tau} (1-x_i) \in \B$, where 
$\B = \langle \{x_i(1-x_i) \mid i \in [n]\} \rangle$ is the ideal generated by the Boolean relations.  Consequently, the relation does not give us any information about the code,
and $U_\sigma \subseteq \bigcup_{i \in \tau} U_i$ follows trivially from the observation that  $ U_i \subseteq U_i$ for any $i \in \sigma \cap \tau$.
If, on the other hand, $x_\sigma \prod_{i \in \tau} (1-x_i) \in I_\C$ and $\sigma \cap \tau = \emptyset$, then $\rho_v \in I_\C$ for each $v \in \{0,1\}^n$ such that $\supp(v) \supseteq \sigma$
and $\supp(v) \cap \tau = \emptyset$.  Since $\rho_v(v) = 1$, it follows that $v \notin \C$ for any $v$ with  $\supp(v) \supseteq \sigma$
and $\supp(v) \cap \tau = \emptyset$.  To see this,
recall from the original definition of $I_\C$ that for all $c \in \C$, $f(c) = 0$ for any $f \in I_\C$; it follows
that $\rho_v(c) = 0$ for all $c \in \C$.  Because $\C = \C(\U)$, the fact that $v \notin \C$ for any $v$ such that $\supp(v) \supseteq \sigma$
and $\supp(v) \cap \tau = \emptyset$ implies 
$\bigcap_{i \in \sigma} U_i \setminus \bigcup_{j \in \tau} U_j = \emptyset.$
We can thus conclude that $U_\sigma \subseteq \bigcup_{j \in \tau} U_j.$
\end{proof}

\noindent Lemma~\ref{lemma:corresp} allows us to extract RF structure from the different types of relations that appear in $I_\C$:
\begin{itemize}
\item Boolean relations: $\{x_i(1-x_i)\}$.  The relation
$x_i(1-x_i)$ corresponds to $U_i \subseteq U_i$, which does not contain any information about the code $\C$.
\item Type 1 relations: $\{x_\sigma\}$.  The relation $x_\sigma$ corresponds to $U_\sigma = \emptyset$.
\item Type 2 relations: $\big\{ x_\sigma \prod_{i \in \tau} (1-x_i) \mid \sigma,\tau \neq \emptyset, \;\sigma \cap \tau = \emptyset, \; U_\sigma \neq \emptyset \text{ and } 
\bigcup_{i \in \tau} U_i \neq X \big \}$.  \\The relation $x_\sigma \prod_{i \in \tau} (1-x_i)$
corresponds to $U_\sigma \subseteq \bigcup_{i \in \tau} U_i$.
\item Type 3 relations: $\big\{\prod_{i \in \tau} (1-x_i)\big\}$. The relation $\prod_{i \in \tau} (1-x_i)$ corresponds to $X \subseteq \bigcup_{i \in \tau} U_i$.
\end{itemize}
The somewhat complicated requirements on the Type 2 relations ensure that they do not include polynomials that are multiples of Type 1, Type 3, or Boolean relations.  Note that the constant polynomial $1$ may appear as both a Type 1 and a Type 3 relation, but only if $X = \emptyset$.  The four types of relations listed above are otherwise disjoint.   Type 3 relations only appear if $X$ is fully covered by the receptive fields, and there is thus no all-zeros codeword corresponding to an ``outside'' point.

Not all elements of $I_\C$ are one of the above types, of course, but we will see that these are sufficient to generate $I_\C$.  This follows from the observation (see Lemma~\ref{lemma:gen-types}) that the neural ideal $J_\C$ is generated by the Type 1, Type 2 and Type 3 relations, and recalling that $I_\C$ is obtained from $J_\C$ be adding in the Boolean relations (Lemma~\ref{lemma:explicit-relations}).
At the same time, not all of these relations are necessary to generate the neural ideal.
Can we eliminate redundant relations to come up with
a ``minimal'' list of generators for $J_\C$, and hence $I_\C$, 
that captures the essential RF structure of the code?
This is the goal of the next section.

\subsection{Pseudo-monomials \& a canonical form for the neural ideal}\label{sec:canonical-form}

The Type 1, Type 2, and Type 3 relations are all products of linear terms of the form $x_i$ and $1-x_i$, and are thus very similar to monomials.  By analogy with square-free monomials and square-free monomial ideals \cite{MillerSturmfels}, we define the notions of pseudo-monomials and pseudo-monomial ideals.
Note that we do not allow repeated indices in our definition of pseudo-monomial, so the Boolean relations are explicitly excluded.

\begin{definition}
If $f \in \F_2[x_1,\ldots,x_n]$ has the form $f = \prod_{i \in \sigma} x_i \prod_{j\in \tau} (1-x_j)$ for some $\sigma,\tau \subset [n]$ with $\sigma\cap \tau=\emptyset$, then we say that $f$ is a {\em pseudo-monomial}.  
\end{definition}

\begin{definition}
An ideal $J \subset \F_2[x_1,\ldots,x_n]$ is a {\em pseudo-monomial ideal} if $J$ can be generated by a finite set of pseudo-monomials.  
\end{definition}

\begin{definition}
Let $J \subset \F_2[x_1,\ldots,x_n]$ be an ideal, and $f \in J$ a pseudo-monomial.  We say that $f$ is a {\em minimal} pseudo-monomial of $J$ if there does not exist another pseudo-monomial $g \in J$ with $\deg(g) < \deg(f)$ such that $f = hg$ for some $h \in \F_2[x_1,\ldots,x_n]$.
\end{definition}

\noindent By considering the set of {\it all} minimal pseudo-monomials in a pseudo-monomial ideal $J$, we obtain a unique and compact description of $J$, which we call the ``canonical form'' of $J$.

\begin{definition}
We say that a pseudo-monomial ideal $J$ is in {\em canonical form} if we present it as $J = \langle f_1,\ldots,f_l \rangle$, where the set $CF(J) \od \{f_1, \ldots, f_l\}$ is the set of {\it all} minimal pseudo-monomials of $J$.  Equivalently, we refer to $CF(J)$ as the {\em canonical form} of $J$.  
\end{definition}

\noindent Clearly, for any pseudo-monomial ideal $J \subset \F_2[x_1,\ldots,x_n]$, $CF(J)$ is unique and $J = \langle CF(J) \rangle$.  On the other hand, it is important to keep in mind that although $CF(J)$ consists of minimal pseudo-monomials, it is not necessarily a minimal set of generators for $J$.  To see why, consider the pseudo-monomial ideal $J = \langle x_1(1-x_2), x_2(1-x_3) \rangle.$ 
This ideal in fact contains a third minimal pseudo-monomial:
$x_1(1-x_3)=(1-x_3)\cdot[x_1(1-x_2)]+x_1\cdot[x_2(1-x_3)].$
It follows that $CF(J) = \{ x_1(1-x_2), x_2(1-x_3), x_1(1-x_3) \}$, but clearly we can remove $x_1(1-x_3)$ from this set and still generate $J$.

For any code $\C$, the neural ideal $J_\C$ is a pseudo-monomial ideal because $J_\C = \langle \{\rho_v \mid v \notin \C\}\rangle$, and each of the $\rho_v$s is a pseudo-monomial.  (In contrast, $I_\C$ is rarely a pseudo-monomial ideal, because it is typically necessary to include the Boolean relations as generators.)
Theorem~\ref{thm:canonical-form} describes the canonical form of $J_\C$.
In what follows, we say that $\sigma \subseteq [n]$ is {\it minimal with respect to} property $P$ if $\sigma$ satisfies $P$, but $P$ is not satisfied for any $\tau \subsetneq \sigma$.  For example, if $U_\sigma = \emptyset$ and for all $\tau \subsetneq \sigma$ we have $U_\tau \neq \emptyset$, then we say that ``$\sigma$ is minimal w.r.t. $U_\sigma = \emptyset$.''

\begin{theorem}\label{thm:canonical-form}
Let $\C \subset \{0,1\}^n$ be a neural code, and let $\U = \{U_1,\ldots,U_n\}$ be any collection of open sets (not necessarily convex) in a nonempty stimulus space $X$ such that $\C = \C(\U)$.   The canonical form of $J_\C$ is:
\begin{eqnarray*}
J_{\C}  &=& \big \langle \big\{x_\sigma \mid \sigma \text{ is minimal w.r.t. } U_\sigma = \emptyset \big\},\\
&& \big\{x_\sigma \prod_{i\in \tau} (1-x_i) \mid \sigma,\tau \neq \emptyset,\; \sigma \cap \tau =\emptyset,\; U_\sigma \neq \emptyset, \; \bigcup_{i \in \tau} U_i \neq X , \text{ and } \sigma, \tau 
\text{ are each minimal }\\ && \text{ w.r.t. } U_\sigma\subseteq \bigcup_{i\in \tau}U_i  \big\},
 \big\{\prod_{i \in \tau} (1-x_i) \mid \tau \text{ is minimal w.r.t. } X \subseteq \bigcup_{i\in \tau} U_i\big\}  \big \rangle.
\end{eqnarray*}
We call the above three (disjoint) sets of relations comprising $CF(J_\C)$ the minimal Type 1 relations, the minimal Type 2 relations, and the minimal Type 3 relations, respectively.
\end{theorem}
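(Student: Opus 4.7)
The plan is to characterize the minimal pseudo-monomials of $J_\C$ by first upgrading the containment biconditional of Lemma~\ref{lemma:corresp} from $I_\C$ to $J_\C$ (when $\sigma\cap\tau=\emptyset$), and then classifying which such pseudo-monomials have no proper pseudo-monomial divisors in $J_\C$.

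First I would establish the identity
$$x_\sigma\prod_{i\in\tau}(1-x_i)=\sum_{S\subseteq[n]\setminus(\sigma\cup\tau)}\rho_{v_S},$$
where $v_S$ is the $0/1$ vector with support $\sigma\cup S$. This follows by multiplying both sides by the telescoping sum $\prod_{k\in[n]\setminus(\sigma\cup\tau)}(x_k+(1-x_k))=1$. Since $\rho_v\in J_\C$ exactly when $v\notin\C$ (for $v\in\C$ we have $\rho_v(v)=1\neq0$, so $\rho_v\notin I_\C\supseteq J_\C$), the forward direction of Lemma~\ref{lemma:corresp} gives that every $v_S\notin\C$, hence every $\rho_{v_S}\in J_\C$, whence $x_\sigma\prod_{i\in\tau}(1-x_i)\in J_\C$ whenever $U_\sigma\subseteq\bigcup_{i\in\tau}U_i$ and $\sigma\cap\tau=\emptyset$. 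Combined with the converse (which is immediate from $J_\C\subseteq I_\C$ and Lemma~\ref{lemma:corresp}), this yields the refined biconditional
$$x_\sigma\prod_{i\in\tau}(1-x_i)\in J_\C\;\Longleftrightarrow\;U_\sigma\subseteq\bigcup_{i\in\tau}U_i\quad(\sigma\cap\tau=\emptyset).$$

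Next I would note that $\F_2[x_1,\ldots,x_n]$ is a UFD in which the $x_i$ and the $1-x_i$ are non-associate irreducibles, so the only pseudo-monomials dividing $x_\sigma\prod_{i\in\tau}(1-x_i)$ are those of the form $x_{\sigma'}\prod_{i\in\tau'}(1-x_i)$ with $\sigma'\subseteq\sigma$, $\tau'\subseteq\tau$. Consequently a pseudo-monomial $f=x_\sigma\prod_{i\in\tau}(1-x_i)\in J_\C$ is minimal precisely when no proper pair $(\sigma',\tau')\neq(\sigma,\tau)$ with $\sigma'\subseteq\sigma$, $\tau'\subseteq\tau$ satisfies $U_{\sigma'}\subseteq\bigcup_{i\in\tau'}U_i$. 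I would then split on whether $\sigma$ or $\tau$ is empty: the case $\tau=\emptyset$ (using $\bigcup_{i\in\emptyset}U_i=\emptyset$) produces exactly the minimal Type 1 family; the case $\sigma=\emptyset$ (using $U_\emptyset=X$) produces exactly the minimal Type 3 family.

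For the remaining case $\sigma,\tau\neq\emptyset$, the exclusions $U_\sigma\neq\emptyset$ and $\bigcup_{i\in\tau}U_i\neq X$ are exactly what is needed to rule out the proper divisors $x_\sigma$ (a smaller generator in $J_\C$ of Type 1 form) and $\prod_{i\in\tau}(1-x_i)$ (of Type 3 form), thereby ensuring disjointness of the three families. The key reduction is that the separate minimality of $\sigma$ and of $\tau$ stated in the theorem already implies the stronger ``joint'' minimality needed above: by the monotonicity $U_{\sigma'}\supseteq U_\sigma$ whenever $\sigma'\subseteq\sigma$, any proper mixed witness $(\sigma',\tau')$ with $\sigma'\subsetneq\sigma$ and $\tau'\subsetneq\tau$ and $U_{\sigma'}\subseteq\bigcup_{i\in\tau'}U_i$ automatically gives $U_\sigma\subseteq\bigcup_{i\in\tau'}U_i$, contradicting the minimality of $\tau$ with the original $\sigma$. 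The remaining pure cases $\sigma'=\sigma$ or $\tau'=\tau$ are ruled out directly by the theorem's two minimality hypotheses. Conversely, any minimal pseudo-monomial of $J_\C$ with $\sigma,\tau\neq\emptyset$ satisfies the theorem's conditions for the same reasons.

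The main obstacle in this plan is the reduction in the last paragraph: verifying that the two one-sided minimality statements on $\sigma$ and $\tau$ (each with the other index set held fixed) are equivalent to the apparently stronger joint minimality over all proper sub-pairs. The rest of the argument amounts to careful bookkeeping combining the refined biconditional with the UFD structure of $\F_2[x_1,\ldots,x_n]$, and to checking that the boundary hypotheses $U_\sigma\neq\emptyset$, $\bigcup_{i\in\tau}U_i\neq X$ cleanly separate the Type 2 family from Types 1 and 3.
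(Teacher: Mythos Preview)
Your argument is correct and follows essentially the same route as the paper. The paper first establishes (via Theorem~\ref{thm:ideal-equivalence} and its proof, together with Lemma~\ref{lemma:JU} and Lemma~\ref{lemma:gen-types}) that $J_{\C}=J_\U$ is generated by the Type 1, 2, 3 relations, then reduces to the minimal ones and checks that these are exactly the minimal pseudo-monomials using Lemma~\ref{lemma:corresp}. Your expansion identity $x_\sigma\prod_{i\in\tau}(1-x_i)=\sum_S\rho_{v_S}$ is precisely the content of Lemma~\ref{lemma:induction-arg} (applied in reverse), and your refined biconditional for $J_\C$ is what the paper obtains as $J_\U=J_{\C(\U)}$ in the proof of Theorem~\ref{thm:ideal-equivalence}. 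Your explicit use of the UFD structure and your clean reduction of ``joint'' minimality to the two separate minimality hypotheses are nice organizational improvements; the paper handles the latter implicitly in the final paragraph of its proof.

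One small correction of attribution: when you write ``the forward direction of Lemma~\ref{lemma:corresp} gives that every $v_S\notin\C$,'' you are not actually invoking that lemma. What you need is the implication $U_\sigma\subseteq\bigcup_{i\in\tau}U_i\Rightarrow v_S\notin\C(\U)$ for each $v_S$ with $\sigma\subseteq\supp(v_S)$ and $\supp(v_S)\cap\tau=\emptyset$, which follows directly from the definition of $\C(\U)$ (the region $U_{\supp(v_S)}\setminus\bigcup_{j\notin\supp(v_S)}U_j$ sits inside $U_\sigma\setminus\bigcup_{i\in\tau}U_i=\emptyset$). This does not affect the validity of the argument.
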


\noindent The proof is given in Section~\ref{sec:proof2}.
Note that, because of the uniqueness of the canonical form, if we are given $CF(J_\C)$ then Theorem~\ref{thm:canonical-form} allows us to
read off the corresponding (minimal) relationships that must be satisfied by any receptive field representation 
of the code as $\C = \C(\U)$:

\begin{itemize}
\item Type 1: $x_\sigma \in CF(J_\C)$ implies that $U_\sigma = \emptyset$, but all lower-order intersections $U_\gamma$ with $\gamma \subsetneq \sigma$ are non-empty.
\item Type 2: $x_\sigma \prod_{i\in \tau} (1-x_i)\in CF(J_\C)$ implies that $U_\sigma\subseteq \bigcup_{i\in \tau}U_i$, but
no lower-order intersection is contained in $\bigcup_{i\in \tau}U_i$, and all the $U_i$s are necessary for $U_\sigma\subseteq \bigcup_{i\in \tau}U_i$.
\item Type 3: $\prod_{i \in \tau} (1-x_i) \in CF(J_\C)$ implies that $X \subseteq \bigcup_{i \in \tau} U_i,$ but $X$ is not contained in any lower-order union $\bigcup_{i \in \gamma} U_i$ for $\gamma \subsetneq \tau$.
\end{itemize}
The canonical form $CF(J_\C)$ thus provides a minimal description of the RF structure dictated by the code $\C$.

The Type 1 relations in $CF(J_\C)$ can be used to obtain a (crude) lower bound on the minimal embedding dimension of the neural code, as defined in Section~\ref{sec:beyond}.
Recall Helly's theorem (Section~\ref{sec:helly-nerve}), and observe that if $x_\sigma \in CF(J_\C)$ then $\sigma$ is minimal with respect to $U_\sigma=\emptyset$; this in turn implies that $|\sigma|\leq d+1$.   (If $|\sigma| > d+1$, by minimality all $d+1$ subsets intersect and by Helly's theorem we must have $U_\sigma \neq \emptyset.$)
We can thus obtain a lower bound on the minimal embedding dimension $d$ as 
$$d \geq \max_{\{\sigma \mid x_\sigma \in CF(J_\C)\}} |\sigma|-1,$$ 
where the maximum is taken over all $\sigma$ such that $x_\sigma$ is a Type 1 relation in $CF(J_\C)$.  
This bound only depends on $\Delta(\C)$, however, and
does not provide any insight regarding the different minimal embedding dimensions observed in the examples of Figure 3.  
These codes have no Type 1 relations in their canonical forms, but they are nicely differentiated by their minimal Type 2 and Type 3 relations.
From the receptive field arrangements depicted in Figure 3, we can easily write down $CF(J_\C)$ for each of these codes.
\begin{itemize}
\item[A.] $CF(J_\C) = \{0\}.$  There are no relations here because $\C = 2^{[3]}$.
\item[B.]  $CF(J_\C) = \{1-x_3\}.$  This Type 3 relation reflects the fact that $X = U_3$.
\item[C.] $CF(J_\C) = \{x_1(1-x_2), x_2(1-x_3), x_1(1-x_3)\}.$  These Type 2 relations correspond to $U_1 \subset U_2$, $U_2 \subset U_3$, and $U_1 \subset U_3$.
Note that the first two of these receptive field relationships imply the third; correspondingly, the third canonical form relation satisfies:
$x_1(1-x_3)=(1-x_3)\cdot[x_1(1-x_2)]+x_1\cdot[x_2(1-x_3)].$
\item[D.] $CF(J_\C) = \{(1-x_1)(1-x_2)\}.$ This Type 3 relation reflects $X = U_1 \cup U_2$, and implies $U_3 \subset U_1 \cup U_2$.
\end{itemize}
Nevertheless, we do not yet know how to infer the minimal embedding dimension from $CF(J_\C)$.  
In Appendix 2 (Section~\ref{sec:appendix2}), we provide a complete list of neural codes on three neurons, up to permutation, and their
respective canonical forms.

\subsection{Comparison to the Stanley-Reisner ideal}

Readers familiar with the Stanley-Reisner ideal \cite{MillerSturmfels,StanleyBook} will recognize that this kind of ideal is generated by the Type 1 relations of a neural code $\C$.  The corresponding simplicial complex is $\Delta(\C)$, the smallest simplicial complex that contains the code.

\begin{lemma}  Let $\C = \C(\U)$.  The ideal generated by the Type 1 relations,
$\langle x_\sigma \mid U_\sigma=\emptyset\rangle,$ is the Stanley-Reisner ideal of $\Delta(\C)$.
Moreover, if $\supp\C$ is a simplicial complex, then $CF(J_\C)$ contains no Type 2 or Type 3 relations, and $J_\C$ is thus the Stanley-Reisner ideal for $\supp\C$.
\end{lemma}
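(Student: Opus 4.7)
The plan is to handle the two assertions separately, using Theorem~\ref{thm:canonical-form} to split $CF(J_\C)$ by relation type.

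For the first claim, I will establish the set equality $\{\sigma \mid U_\sigma = \emptyset\} = \{\sigma \mid \sigma \notin \Delta(\C)\}$. Since $\C = \C(\U)$, a subset $\sigma \in \Delta(\C)$ iff $\sigma \subseteq \supp(c)$ for some $c \in \C$, which is equivalent to the existence of a point $p \in X$ with $p \in U_i$ for every $i \in \sigma$, i.e., $p \in U_\sigma$. Hence $U_\sigma = \emptyset$ iff $\sigma \notin \Delta(\C)$. The Stanley-Reisner ideal of $\Delta(\C)$ is by definition $\langle x_\sigma \mid \sigma \notin \Delta(\C)\rangle$, so the two ideals agree.

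For the second claim, assume $\supp \C = \Delta(\C)$ is a (nonempty) simplicial complex; then for each $\sigma \in \supp \C$ the codeword $v_\sigma$ with $\supp(v_\sigma) = \sigma$ lies in $\C$, and $\emptyset \in \supp \C$ gives $\mathbf{0} \in \C$. I will rule out Type 2 and Type 3 relations in $CF(J_\C)$. Suppose, toward a Type 2 contradiction, that $x_\sigma \prod_{i \in \tau}(1-x_i) \in CF(J_\C)$ with $\sigma, \tau$ nonempty and disjoint and $U_\sigma \neq \emptyset$. Then $\sigma \in \Delta(\C)$, so $v_\sigma \in \C$, whence by definition of $\C(\U)$ there is a point in $U_\sigma \setminus \bigcup_{j \notin \sigma} U_j$; since $\tau \subseteq [n] \setminus \sigma$, this point also lies in $U_\sigma \setminus \bigcup_{i \in \tau} U_i$, contradicting the inclusion $U_\sigma \subseteq \bigcup_{i \in \tau}U_i$ required by Theorem~\ref{thm:canonical-form}. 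Similarly, a Type 3 relation $\prod_{i \in \tau}(1-x_i) \in CF(J_\C)$ would force $X \subseteq \bigcup_{i \in \tau}U_i$, but the fact that $\mathbf{0} \in \C$ furnishes a point of $X$ outside every $U_i$, a contradiction.

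Thus only Type 1 relations appear in $CF(J_\C)$, and by the first claim this makes $J_\C$ the Stanley-Reisner ideal of $\supp \C$. The argument is essentially a translation of the canonical-form conditions from Theorem~\ref{thm:canonical-form} into combinatorial statements about $\supp \C$, exploiting the downward-closure of simplicial complexes; no serious obstacle is anticipated, only the minor care of using the nonempty-simplicial-complex hypothesis to guarantee both $\mathbf{0} \in \C$ and the existence of $v_\sigma$ for every face $\sigma$.
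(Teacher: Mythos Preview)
Your proposal is correct and follows essentially the same approach as the paper. For the first claim you both establish the set equality $\{\sigma \mid U_\sigma = \emptyset\} = \{\sigma \mid \sigma \notin \Delta(\C)\}$; for the second, you both rule out Type~3 relations via $\mathbf{0}\in\C$ and rule out Type~2 relations by deriving a contradiction from $U_\sigma\neq\emptyset$ together with the downward-closure of $\supp\C$ --- the paper phrases this last step as ``$\sigma\notin\supp\C$ yet $\sigma\subseteq\omega$ for some $\omega\in\supp\C$,'' while you instead use $\sigma\in\supp\C$ to produce a witness point violating $U_\sigma\subseteq\bigcup_{i\in\tau}U_i$, but these are the same argument viewed from opposite ends.
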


\begin{proof}  
To see the first statement, observe that the {\em Stanley-Reisner ideal} of a simplicial complex $\Delta$ is the ideal 
$$I_\Delta \od \langle x_\sigma \mid \sigma \notin \Delta \rangle,$$
and recall that $\Delta(\C)=\{\sigma\subseteq[n] \mid \sigma\subseteq\supp(c)$ for some $c\in \C\}$.  As $\C=\C(\U)$, an equivalent characterization is $\Delta(\C)=\{\sigma\subseteq[n]\mid U_\sigma\neq \emptyset\}$.  
Since these sets are equal, so are their complements in $2^{[n]}$: 
$$\{\sigma\subseteq [n]\mid \sigma\notin\Delta(\C)\}=\{\sigma\subseteq [n] \mid U_\sigma = \emptyset\}.$$
Thus, $\langle x_\sigma\mid U_\sigma=\emptyset\rangle= \langle x_\sigma\mid \sigma\notin\Delta(\C)\rangle$, which is the Stanley-Reisner ideal for $\Delta(\C)$.

\
To prove the second statement, suppose that $\supp\C$ is a simplicial complex.  Note that $\C$ must contain the all-zeros codeword, so  $X \supsetneq \bigcup_{i=1}^n U_i$ and there can be no Type 3 relations.
Suppose the canonical form of $J_{\C}$ contains a Type 2 relation $x_\sigma \prod_{i\in \tau} (1-x_i)$, for some $\sigma,\tau \subset [n]$ satisfying $\sigma, \tau \neq \emptyset$,  $\sigma \cap \tau = \emptyset$ and
 $U_\sigma \neq \emptyset$.  The existence of this relation indicates that $\sigma \notin \supp \C$, while there does exist an $\omega \in \C$ such that $\sigma \subset \omega.$
 This contradicts the assumption that $\supp \C$ is a simplicial complex.  We conclude that $J_{\C}$ has no Type 2 relations.
 \end{proof}

The canonical form of $J_\C$ thus enables us to immediately read off, via the Type 1 relations, the minimal forbidden faces of the simplicial complex $\Delta(\C)$ associated to the code, and also the minimal deviations of $\C$ from being a simplicial complex, which are captured by the Type 2 and Type 3 relations.

\subsection{An algorithm for obtaining the canonical form}
Now that we have established that a minimal description of the RF structure can be extracted from the canonical form of the neural ideal, the most pressing question is the following:
\medskip
 
\noindent{\bf Question:} How do we find the canonical form $CF(J_\C)$ if all we know is the code $\C$, and we are {\em not} given a representation of the code as $\C = \C(\U)$?
\medskip

\noindent In this section we describe an algorithmic method for finding $CF(J_\C)$ from knowledge only of $\C$.
It turns out that computing the primary decomposition of $J_\C$ is a key step towards
finding the minimal pseudo-monomials.  This parallels the situation for monomial ideals, 
although there are some additional subtleties in the case of pseudo-monomial ideals.  As previously discussed,
from the canonical form we can read off the RF structure of the code, so the overall workflow is as follows:
\begin{small}
$$\text{Workflow:} \;\;\; \begin{array}{c}\text{neural code}\\ \C \subset \{0,1\}^n \end{array} \rightarrow 
\begin{array}{c} \text{neural ideal}\\ J_\C = \langle \{\rho_v \mid v \notin \C \} \rangle \end{array} \rightarrow  
\begin{array}{c}\text{primary}\\ \text{decomposition}\\ \text{of}\;J_\C \end{array} \rightarrow 
\begin{array}{c}\text{canonical}\\ \text{form}\\ CF(J_\C) \end{array} \rightarrow 
\begin{array}{c} \text{minimal}\\ \text{RF structure}\\ \text{ of } \C \end{array}$$ 
\end{small}

\noindent {\bf Canonical form algorithm}
\medskip

\noindent{\bf Input:} A neural code $\C \subset \{0,1\}^n$.\medskip

\noindent{\bf Output:} The canonical form of the neural ideal, $CF(J_\C)$.

\begin{itemize}
\item[Step 1:]  From $\C \subset \{0,1\}^n$, compute $J_\C = \big \langle \{\rho_v \mid v \notin \C \} \big\rangle$.
\item[Step 2:]  Compute the primary decomposition of $J_\C$.  It turns out (see Theorem~\ref{thm:prim-decomp} in the next section) that this decomposition yields a unique representation of the ideal as
$$J_\C = \bigcap_{a \in \mathcal{A}} \p_a,$$
where each $a \in \mathcal{A}$ is an element of $\{0,1,*\}^n$, and $\p_a$ is defined as
$$\p_a \od \big\langle \{x_i - a_i \mid a_i \neq *\}\big\rangle =  \big\langle \{x_i \mid a_i = 0\}, \{1-x_j \mid a_j = 1\} \big\rangle.$$
Note that the $\p_a$s are all prime ideals.  We will see later how to compute this primary
decomposition algorithmically, in Section~\ref{sec:prim-decomp-algorithm}.
\item[Step 3:]  Observe that any pseudo-monomial $f \in J_\C$ must satisfy $f \in \p_a$ for each $a \in \mathcal{A}$.
It follows that $f$ is a multiple of one of the linear generators of $\p_a$ for each $a \in \mathcal{A}$.
Compute the following set of elements of $J_\C$:
$$\mathcal{M}(J_\C) = \big\{\prod_{a \in \mathcal{A}} g_a \mid g_a = x_i - a_i \text{ for some } a_i \neq *\big\}.$$
$\mathcal{M}(J_\C)$ consists of all polynomials obtained as a product of linear generators $g_a$, 
one for each prime ideal $\p_a$ of the primary decomposition of $J_\C$.
\item[Step 4:] Reduce the elements of $\mathcal{M}(J_\C)$ by imposing $x_i(1-x_i) = 0$.  This eliminates elements that are not pseudo-monomials.  It also reduces the degrees of some of the remaining elements, as it implies $x_i^2 = x_i$ and $(1-x_i)^2 = (1-x_i)$.  
We are left with a set of pseudo-monomials of the form $f = \prod_{i \in \sigma} x_i \prod_{j\in \tau} (1-x_j)$ for $\tau \cap \sigma = \emptyset.$
Call this new reduced set $\mathcal{\tilde M}(J_\C).$ 
\item[Step 5:] Finally, remove all elements of $\mathcal{\tilde M}(J_\C)$ that are multiples of lower-degree elements in $\mathcal{\tilde M}(J_\C).$ 
\end{itemize}

\begin{proposition}\label{prop:prop1} The resulting set is the canonical form $CF(J_\C)$.
\end{proposition}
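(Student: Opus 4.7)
The plan is to show that the algorithm's output equals $CF(J_\C)$, the set of minimal pseudo-monomials of $J_\C$. The whole argument rests on one structural observation that follows from Step 2: since $J_\C = \bigcap_{a \in \mathcal{A}} \p_a$ with each $\p_a$ a prime ideal generated by linear forms in $\{x_i \mid a_i = 0\} \cup \{1-x_j \mid a_j = 1\}$, a pseudo-monomial $f = x_\sigma \prod_{j \in \tau}(1-x_j)$ lies in $J_\C$ if and only if for every $a \in \mathcal{A}$, at least one linear factor of $f$ is a generator of $\p_a$. The ``only if'' direction uses primality of each $\p_a$, while the ``if'' direction follows because $f$ then lies in each $\p_a$ and hence in their intersection.

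For $CF(J_\C) \subseteq $ output, I would take a minimal pseudo-monomial $f$ of $J_\C$ and, for each $a$, pick a factor $g_a$ of $f$ that generates $\p_a$. The product $\prod_a g_a$ belongs to $\mathcal{M}(J_\C)$; since $\sigma \cap \tau = \emptyset$ no index appears in both an $x_i$ and a $1-x_i$ factor, so Step 4 reduces the product to a pseudo-monomial $\tilde f$ whose linear factors are exactly the distinct $g_a$'s, hence a subset of the linear factors of $f$. Thus $\tilde f$ is a pseudo-monomial divisor of $f$ lying in $J_\C$, and minimality of $f$ forces $\tilde f = f$. Minimality also means $f$ is not a proper multiple of any lower-degree pseudo-monomial in $J_\C$, so in particular not of any lower-degree element of $\tilde{\mathcal{M}}(J_\C)$, and $f$ survives Step 5. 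For the reverse inclusion, I would first note that every element of $\tilde{\mathcal{M}}(J_\C)$ is a pseudo-monomial in $J_\C$: by construction it contains a factor generating each $\p_a$, so the structural observation applies. If a surviving $f$ failed to be minimal in $J_\C$, one could write $f = hg$ with $g$ a lower-degree pseudo-monomial in $J_\C$; running the same covering argument on $g$ yields some $\tilde g \in \tilde{\mathcal{M}}(J_\C)$ with $\deg \tilde g \leq \deg g < \deg f$ that divides $g$ (and hence $f$), contradicting that $f$ survived Step 5.

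The principal technical subtlety is that the reductions $x_i^2 \mapsto x_i$ and $(1-x_i)^2 \mapsto 1-x_i$ in Step 4 change a polynomial by an element of $\B$ rather than by an element of $J_\C$, so it is not automatic that the reduced pseudo-monomial still lies in $J_\C$. The covering observation bypasses this issue entirely: instead of tracking the reduction polynomially, one verifies directly that the reduced pseudo-monomial still contains a linear factor generating each $\p_a$, and therefore lies in $J_\C = \bigcap_a \p_a$. With this in hand, the rest of the proof is routine bookkeeping about which pseudo-monomial divides which.
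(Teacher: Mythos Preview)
Your proof is correct and follows essentially the same route as the paper's: both arguments hinge on showing that every pseudo-monomial in $J_\C$ is divisible by some element of $\tilde{\mathcal{M}}(J_\C)$, which reduces to the fact that a pseudo-monomial lying in a prime $\p_a$ must have one of the listed generators of $\p_a$ among its linear factors. The paper isolates this fact as a separate lemma (Lemma~\ref{lemma:matching}) and proves it by constructing an explicit evaluation point, whereas you invoke primality of $\p_a$ (plus the easy observation that the only $x_i$ or $1-x_i$ contained in $\p_a$ are its generators); the remaining bookkeeping about Steps~4 and~5 is the same.
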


\noindent The proof is given in Section~\ref{sec:proof-prop1}.

\subsection{An example}
Now we are ready to use the canonical form algorithm in an example, illustrating how to obtain a possible arrangement of convex receptive fields from a neural code.

Suppose a neural code $\C$ has the following 13 codewords, and 19 missing words: 
\begin{small}
\begin{eqnarray*}
\C&=&\begin{array}{ccccc} \{ 00000, & 10000, &  01000, & 00100, & 00001, \\ 
  11000, & 10001, & 01100, & 00110, & 00101, \\ 
 00011, &  11100, & 00111 \} & & \end{array}\\
  \{0,1\}^5\backslash \C &=& \begin{array}{ccccc}
 \{ 00010, & 10100, & 10010, & 01010, & 01001, \\
 11010, & 11001, & 10110, & 10101, & 10011,\\
 01110, & 01101, & 01011, & 11110, & 11101,\\
 11011, &    10111,& 01111,& 11111 \}.& \\
 \end{array}
\end{eqnarray*}
 \end{small}
 
 Thus, the neural ideal $J_{\C}$ has 19 generators, using the original definition $J_\C = \langle \{ \rho_v \mid v \notin \C\}\rangle$:
 
 \begin{small}
 $$J_\C = \big \langle x_4(1-x_1)(1-x_2)(1-x_3)(1-x_5),
 x_1x_3(1-x_2)(1-x_4)(1-x_5), x_1x_4(1-x_2)(1-x_3)(1-x_5),$$ $$
 x_2x_4(1-x_1)(1-x_3)(1-x_5), x_2x_5(1-x_1)(1-x_3)(1-x_4), x_1x_2x_4(1-x_3)(1-x_5),
 $$ $$x_1x_2x_5(1-x_3)(1-x_4),
 x_1x_3x_4(1-x_2)(1-x_5), x_1x_3x_5(1-x_2)(1-x_4),
 x_1x_4x_5(1-x_2)(1-x_3), $$ $$x_2x_3x_4(1-x_1)(1-x_5),
 x_2x_3x_5(1-x_1)(1-x_4), x_2x_4x_5(1-x_1)(1-x_3),
  x_1x_2x_3x_4(1-x_5)$$ $$x_1x_2x_3x_5(1-x_4), x_1x_2x_4x_5(1-x_3), x_1x_3x_4x_5(1-x_2), 
  x_2x_3x_4x_5(1-x_1), x_1x_2x_3x_4x_5\big \rangle.$$
  \end{small}

\noindent Despite the fact that we are considering only five neurons, this looks like a complicated ideal. Considering the canonical form of $J_\C$ will help us to extract the relevant combinatorial information and allow us to create a possible 
arrangement of receptive fields $\U$ that realizes this code as $\C = \C(\U)$.
Following Step 2 of our canonical form algorithm, we take the primary decomposition of $J_\C$:
 $$J_\C=\langle x_1, x_2, x_4\rangle\cap \langle x_1, x_2, 1-x_3\rangle \cap \langle x_1, x_2, 1-x_5\rangle \cap \langle x_2, x_3, x_4\rangle\cap \langle x_3, x_4, x_5 \rangle \cap \langle x_1, x_4, x_5\rangle \cap \langle 1-x_2, x_4, x_5\rangle.$$
 Then, as described in Steps 3-5 of the algorithm, we take all possible products amongst these seven larger ideals, reducing by the relation $x_i(1-x_i)=0$ (note that this gives us $x_i=x_i^2$ and hence we can say $x_i^k=x_i$ for any $k>1$).  We also remove any polynomials that are multiples of smaller-degree pseudo-monomials in our list.  This process leaves us with six minimal pseudo-monomials, yielding the canonical form:  
$$J_{\C} = \langle CF(J_\C) \rangle = \langle x_1x_3x_5,\, x_2x_5,\, x_1x_4,\, x_2x_4,\, x_1x_3(1-x_2),\, x_4(1-x_3)(1-x_5)\rangle.$$
Note in particular that every  generator we originally put
  in $J_{\mathcal C}$ is a multiple of one of the six
  relations in $CF(J_\C)$.  Next, we consider what the relations in $CF(J_\C)$ tell us 
  about the arrangement of receptive fields that would be needed to realize the code as $\C = \C(\U)$.

\begin{enumerate}

\item $x_1x_3x_5 \in CF(J_\C) \Rightarrow U_1\cap U_3\cap U_5=\emptyset$, while $U_1\cap U_3, U_3\cap U_5$ and $U_1\cap U_5$ are all nonempty.

\item $x_2x_5 \in CF(J_\C) \Rightarrow U_2\cap U_5=\emptyset$, while $U_2, U_5$ are both nonempty.  

\item $x_1x_4 \in CF(J_\C) \Rightarrow U_1\cap U_4=\emptyset$, while $U_1, U_4$ are both nonempty.

\item $x_2x_4 \in CF(J_\C) \Rightarrow U_2\cap U_4=\emptyset$, while $U_2, U_4$ are both nonempty.

\item $x_1x_3(1-x_2)\in CF(J_\C) \Rightarrow U_1\cap U_3\subseteq U_2$, while $U_1\not\subseteq U_2, U_3\not\subseteq U_2$, and $U_1\cap U_3\neq \emptyset$.

\item $x_4(1-x_3)(1-x_5)\in CF(J_\C) \Rightarrow U_4\subseteq U_3\cup U_5$, while $U_4\neq \emptyset$, and that $U_4\not\subseteq U_3, U_4\not\subseteq U_5$.  

\end{enumerate}

The minimal Type 1 relations (1-4) tell us that we should draw $U_1, U_3$ and $U_5$ with all pairwise intersections, 
but leaving a ``hole'' in the middle since the triple intersection is empty.  Then $U_2$ should be drawn to intersect $U_1$ and $U_3$, but not $U_5$.
Similarly, $U_4$ should intersect $U_3$ and $U_5$, but not $U_1$ or $U_2$.
The minimal Type 2 relations (5-6) tell us that $U_2$ should be drawn to contain the intersection $U_1 \cap U_3$, while $U_4$ lies in the union $U_3 \cup U_5$, but
is not contained in $U_3$ or $U_5$ alone.  There are no minimal Type 3 relations, as expected for a code that includes the all-zeros codeword.

Putting all this together, and assuming convex receptive fields, we can completely infer the receptive field structure, and draw
the corresponding picture (see Figure 4).
It is easy to verify that the code $\C(\U)$ of the pictured arrangement indeed coincides with $\C$.  

\begin{figure}[h]
\centering
   \vspace{-.1in}
   \includegraphics[width=2.5in]{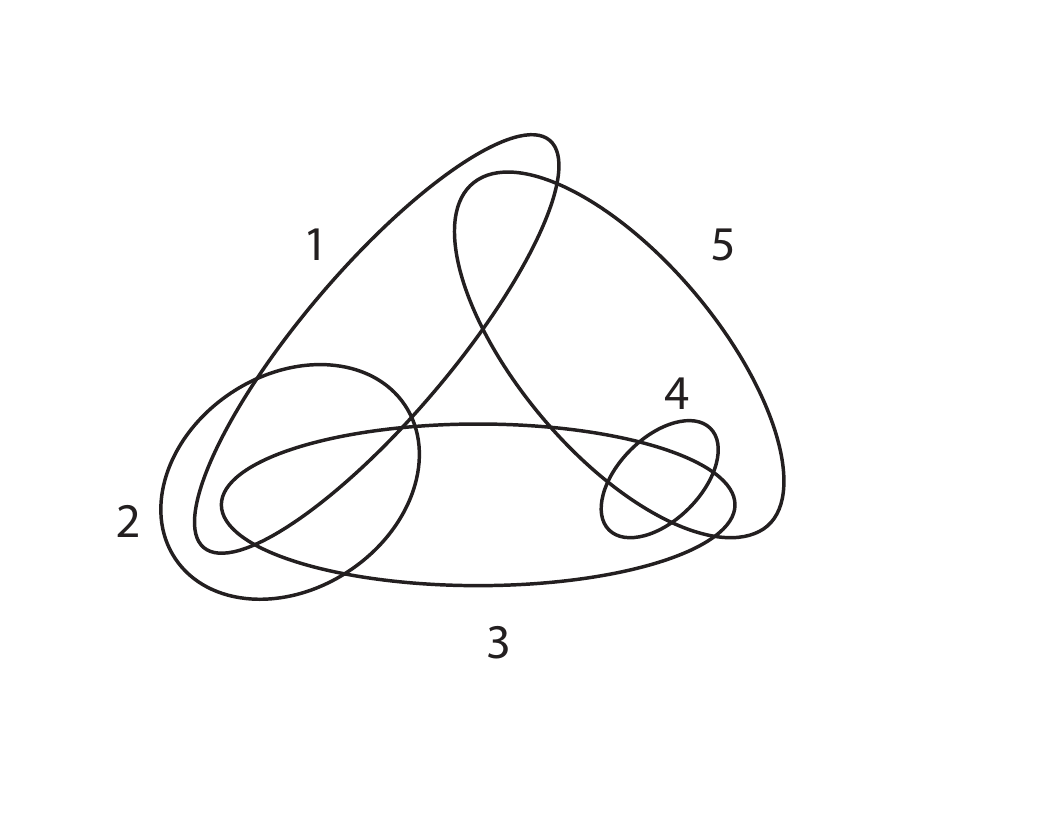}    
    \vspace{-.3in}
   \caption{\small An arrangement of five sets that realizes $\C$ as $\C(\U)$. }
 \end{figure}

\section{Primary decomposition}\label{sec:prim-decomp}

Let $\C \subset \{0,1\}^n$ be a neural code.  The primary decomposition of $I_\C$ is boring: $$I_\C = \bigcap_{c \in \C} m_c,$$ 
where $m_v$ for any $v \in \{0,1\}^n$ is the maximal ideal $I(v)$ defined in Section~\ref{sec:spec}.
This simply expresses $I_\C$ as the intersection of all maximal ideals $m_c$ for $c \in \C$, because
the variety $\C = V(I_\C)$ is just a finite set of points and the primary decomposition reflects no additional structure of the code.

On the other hand, the primary decomposition of the neural ideal $J_\C$ retains the full combinatorial structure of $\C$.  Indeed, we have seen that computing this decomposition is a critical step towards obtaining $CF(J_\C)$, which captures the receptive field structure of the neural code.  In this section, we describe the primary decomposition of $J_\C$ and discuss its relationship to some natural decompositions of the neural code.  We end with an algorithm for obtaining primary decomposition of any pseudo-monomial ideal.

\subsection{Primary decomposition of the neural ideal}

We begin by defining some objects related to $\F_2[x_1,\ldots,x_n]$ and $\{0,1\}^n$, without reference to any particular neural code.
For any $a \in \{0,1,*\}^n$, we define the variety
$$V_a \od \{v \in \{0,1\}^n \mid v_i = a_i \text{ for all } i \text{ s.t. } a_i \neq *\} \subseteq \{0,1\}^n.$$
This is simply the subset of points compatible with the word ``$a$'', where $*$ is viewed as a ``wild card'' symbol.  Note that $V_v = \{v\}$ for any $v \in \{0,1\}^n$.
We can also associate a prime ideal to $a$,
$$\p_a \od \langle \{x_i - a_i \mid a_i \neq *\}\rangle \subseteq \F_2[x_1,\ldots,x_n],$$
consisting of polynomials in $\F_2[x_1,\ldots,x_n]$ that vanish on all points compatible with $a$.  To obtain all such polynomials,
we must add in the Boolean relations (see Section~\ref{sec:lemma-proofs}):
$$\q_a \od I(V_a) = \p_a + \langle x_1^2-x_1,\ldots,x_n^2-x_n \rangle.$$
Note that $V_a = V(\p_a) = V(\q_a)$.

Next, let's relate this all to a code $\C \subset \{0,1\}^n$. Recall the definition of the neural ideal,
$$J_\C \od  \langle \{ \rho_v \mid v \notin \C\}\rangle = \langle \{ \prod_{i=1}^n((x_i-v_i)-1) \mid v \notin \C\}\rangle.$$
We have the following correspondences.

 \begin{lemma}\label{lemma:lemma6}  
 $J_\C \subseteq \p_a \Leftrightarrow V_a \subseteq \C.$
 \end{lemma}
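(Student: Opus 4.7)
The plan is to prove both implications separately, in each case working with the explicit generators: $J_\C$ is generated by the pseudo-monomials $\rho_v$ for $v \notin \C$, and $\p_a$ is generated by the linear polynomials $\{x_i - a_i \mid a_i \neq *\}$.

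For the forward direction ($\Rightarrow$), I would argue by contrapositive. Suppose $V_a \not\subseteq \C$, so there exists $v \in V_a$ with $v \notin \C$. Then $\rho_v \in J_\C$. The key observation is that every generator $x_i - a_i$ of $\p_a$ vanishes on $V_a$, hence $\p_a \subseteq I(V_a)$. Since $v \in V_a$, every polynomial in $\p_a$ evaluates to $0$ at $v$. But by construction $\rho_v(v) = 1$, so $\rho_v \notin \p_a$, which gives $J_\C \not\subseteq \p_a$.

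For the backward direction ($\Leftarrow$), suppose $V_a \subseteq \C$. Since $\p_a$ is an ideal, it suffices to show $\rho_v \in \p_a$ for every generator $\rho_v$ with $v \notin \C$. Fix such a $v$. Because $V_a \subseteq \C$ and $v \notin \C$, we have $v \notin V_a$, so there is some index $i$ with $a_i \neq *$ and $v_i \neq a_i$. Now I would split into two cases: if $a_i = 0$ and $v_i = 1$, then $x_i$ is one of the linear factors of $\rho_v$ and $x_i \in \p_a$; if $a_i = 1$ and $v_i = 0$, then $(1-x_i)$ is a factor of $\rho_v$ and $1 - x_i = -(x_i - 1) \in \p_a$ (recall we are over $\F_2$, so signs are immaterial). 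Either way $\rho_v$ is a polynomial multiple of a generator of $\p_a$, hence $\rho_v \in \p_a$.

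There is no real obstacle here; the statement is essentially a bookkeeping exercise matching the factors of $\rho_v$ to the generators of $\p_a$. The only subtlety worth noting is that one should not confuse $\p_a$ with $\q_a = I(V_a)$: the forward direction uses the inclusion $\p_a \subseteq I(V_a)$ (which is immediate from the generators), while the backward direction relies only on the generating set of $\p_a$ itself, not on the Boolean relations that would be needed to recover all of $I(V_a)$.
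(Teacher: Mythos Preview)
Your proof is correct, but it proceeds differently from the paper's. The paper argues both directions via the order-reversing ideal--variety correspondence: for $(\Rightarrow)$ it passes from $J_\C \subseteq \p_a$ to $V(\p_a) \subseteq V(J_\C)$ and identifies these varieties as $V_a$ and $\C$; for $(\Leftarrow)$ it passes from $V_a \subseteq \C$ to $I_\C = I(\C) \subseteq I(V_a) = \q_a$, and then concludes $J_\C \subseteq \p_a$ by appealing to the fact that $I_\C$ and $\q_a$ are obtained from $J_\C$ and $\p_a$, respectively, by adjoining the Boolean relations. You instead work directly with the explicit generators on both sides: the forward direction is a contrapositive evaluation argument (pick $v \in V_a \setminus \C$ and observe $\rho_v(v)=1$ while everything in $\p_a$ vanishes at $v$), and the backward direction exhibits, for each $\rho_v$ with $v \notin \C$, an explicit linear factor $x_i$ or $1-x_i$ that already lies among the generators of $\p_a$. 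Your route is more elementary and self-contained, avoiding the $I_\C$/$\q_a$ detour entirely (and in particular the step ``$I_\C \subseteq \q_a \Rightarrow J_\C \subseteq \p_a$,'' which is not a formal cancellation and tacitly relies on exactly the kind of factor-matching you carry out); the paper's argument is terser and keeps the lemma inside the ambient ideal--variety framework used elsewhere.
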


 \begin{proof}
 ($\Rightarrow$)  $J_\C \subseteq \p_a \Rightarrow V(\p_a) \subseteq V(J_\C).$  Recalling that  $V(\p_a) = V_a$ and   $V(J_\C) = \C$, this gives $V_a \subseteq \C.$\\
($\Leftarrow$) $V_a \subseteq \C \Rightarrow I(\C) \subseteq I(V_a) \Rightarrow I_\C \subseteq \q_a.$  Recalling that both $I_\C$ and $\q_a$ differ
from $J_\C$ and $\p_a$, respectively, by the addition of the Boolean relations, we obtain $J_\C \subseteq \p_a$.
 \end{proof}

\begin{lemma}\label{lemma:lemma7}
 For any $a,b\in\{0,1,*\}^n$,
$V_a\subseteq V_b \Leftrightarrow \mathbf{p}_b\subseteq\mathbf{p}_a.$
\end{lemma}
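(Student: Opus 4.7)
\textbf{Proof proposal for Lemma~\ref{lemma:lemma7}.}

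The plan is to handle the two directions separately, with the reverse direction being an immediate application of the ideal-variety correspondence and the forward direction requiring a direct combinatorial argument on the coordinates of $a$ and $b$.

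For ($\Leftarrow$), I would first observe that $V(\p_a) = V_a$: the generators $x_i - a_i$ of $\p_a$ vanish precisely on those $v \in \{0,1\}^n$ with $v_i = a_i$ for each $i$ with $a_i \neq *$. Then, by the order-reversing property of the ideal-variety correspondence recorded in Section~\ref{sec:neural-ring}, $\p_b \subseteq \p_a$ implies $V_a = V(\p_a) \subseteq V(\p_b) = V_b$.

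For ($\Rightarrow$), I would argue directly on generators rather than try to strip the Boolean relations from $I(V_a)$ and $I(V_b)$. The claim to establish is: for every index $i$ with $b_i \neq *$, we also have $a_i \neq *$ and $a_i = b_i$; once this is shown, each generator $x_i - b_i$ of $\p_b$ is literally among the generators of $\p_a$, yielding $\p_b \subseteq \p_a$. To prove the claim, fix $i$ with $b_i \neq *$ and argue by contradiction. If $a_i = *$, I can exhibit two elements $v, v' \in V_a$ differing only in coordinate $i$ (take $v_j = a_j$ whenever $a_j \neq *$, an arbitrary fixed choice at other free positions, and let $v_i = 0$, $v'_i = 1$); the hypothesis $V_a \subseteq V_b$ would then force both $0 = b_i$ and $1 = b_i$, impossible. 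Hence $a_i \neq *$, and then every $v \in V_a$ has $v_i = a_i$; combined with $v \in V_b$ and $b_i \neq *$, this gives $a_i = b_i$.

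The only mildly delicate point, and the place I would be most careful, is the construction of the two distinguishing elements $v, v'$ in $V_a$: one must note that $V_a$ is always nonempty (any completion of the non-$*$ coordinates lies in it) and that when $a_i = *$ both values at position $i$ are attainable within $V_a$. Beyond this small edge-case check, the argument is a direct unpacking of the definitions of $V_a$ and $\p_a$.
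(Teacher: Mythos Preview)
Your proposal is correct and follows essentially the same approach as the paper: both directions are handled identically, with $(\Leftarrow)$ via the order-reversing correspondence $V(\p_a)=V_a$, and $(\Rightarrow)$ by showing that $b_i\neq *$ forces $a_i=b_i$ so that every generator of $\p_b$ lies in $\p_a$. The paper simply asserts this coordinate fact in one line, whereas you spell out the contradiction argument with the two witnesses $v,v'\in V_a$; your extra care about nonemptiness of $V_a$ is a harmless elaboration of the same idea.
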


\begin{proof}
($\Rightarrow$) Suppose $V_a\subseteq V_b$. Then, for any $i$ such that $b_i\neq *$ we have $a_i=b_i$. It follows that each generator of $\mathbf{p}_b$ is also in $\mathbf{p}_a$, so $\p_b \subseteq \p_a$.
($\Leftarrow$) Suppose $\mathbf{p}_b\subseteq\mathbf{p}_a$. Then, $V_a=V(\mathbf{p}_a)\subseteq V(\mathbf{p}_b)=V_b.$
\end{proof}

Recall that a an ideal $\p$ is said to be a {\em minimal prime} over $J$ if $\p$ is a prime ideal that contains $J$, and there is no other prime ideal $\p'$ such that $\p \supsetneq \p' \supseteq J$.  Minimal primes $\p_a \supseteq J_\C$ correspond to maximal varieties $V_a$ such that $V_a \subseteq \C$.
Consider the set
 $$\A_\C \od \{ a \in \{0,1,*\}^n \mid V_a \subseteq \C\}.$$
We say that $a \in \A_\C$ is {\em maximal} if there does not exist another element $b \in \A_\C$ such that $V_a \subsetneq V_b$ (i.e., $a \in \A_\C$ is maximal if $V_a$ is maximal such that $V_a \subseteq \C$).

\begin{lemma}\label{lemma:correspondence}
The element $a \in \A_\C$ is maximal if and only if $\p_a$ is a minimal prime over $J_\C$.
\end{lemma}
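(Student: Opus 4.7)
The plan is to use Lemmas~\ref{lemma:lemma6} and~\ref{lemma:lemma7} as the principal dictionary between subsets $a \in \A_\C$ and primes of the form $\p_a$ containing $J_\C$, handling arbitrary primes in the forward direction via the prime property applied to the pseudo-monomial generators $\rho_v$.

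For the easy direction, assume $\p_a$ is a minimal prime over $J_\C$. Then $J_\C \subseteq \p_a$ gives $V_a \subseteq \C$ by Lemma~\ref{lemma:lemma6}, so $a \in \A_\C$. If $a$ were not maximal, I could pick $b \in \A_\C$ with $V_a \subsetneq V_b$; then $J_\C \subseteq \p_b$ (Lemma~\ref{lemma:lemma6}), $\p_b \subsetneq \p_a$ (Lemma~\ref{lemma:lemma7}), and $\p_b$ is prime by construction, contradicting minimality.

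The harder direction is the forward one, and here is where the main obstacle lies: a prime $\p$ sitting strictly between $J_\C$ and $\p_a$ need not itself have the form $\p_b$ for any $b \in \{0,1,*\}^n$, so Lemma~\ref{lemma:lemma7} does not apply directly. My plan is to extract a suitable $a' \in \A_\C$ from $\p$ by reading off its linear content coordinatewise. Given such a $\p$, define $a' \in \{0,1,*\}^n$ by $a'_i = 0$ if $x_i \in \p$, $a'_i = 1$ if $1-x_i \in \p$, and $a'_i = *$ otherwise. This is well-defined because $x_i + (1-x_i) = 1$, so no proper ideal contains both. By construction $\p_{a'} \subseteq \p$.

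I then need three properties of $a'$. First, to show $a' \in \A_\C$, I will check $J_\C \subseteq \p_{a'}$ on generators: for each $v \notin \C$, $\rho_v = \prod_k f_k$ with $f_k \in \{x_k,\,1-x_k\}$ lies in $J_\C \subseteq \p$, so primality of $\p$ forces some factor $f_k \in \p$, and by the very definition of $a'$ that same $f_k$ lies in $\p_{a'}$, whence $\rho_v \in \p_{a'}$. Second, $\p_{a'} \subsetneq \p_a$: the containment $\p_{a'} \subseteq \p_a$ follows since each generator $x_i - a'_i$ of $\p_{a'}$ lies in $\p \subseteq \p_a$, and equality would force $\p_a = \p_{a'} \subseteq \p$, contradicting $\p \subsetneq \p_a$. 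Third, Lemma~\ref{lemma:lemma7} converts $\p_{a'} \subsetneq \p_a$ into $V_a \subsetneq V_{a'}$, contradicting the maximality of $a$. The substantive step is the first one: it is precisely the primality of $\p$, applied to the special pseudo-monomial generators $\rho_v$, that guarantees the constructed $a'$ still lies in $\A_\C$.
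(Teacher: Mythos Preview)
Your proof is correct, and your backward direction is essentially identical to the paper's. Your forward direction, however, is actually \emph{more} careful than the paper's own argument. The paper's proof of ($\Rightarrow$) simply takes an arbitrary $b \in \{0,1,*\}^n$ with $J_\C \subseteq \p_b \subseteq \p_a$ and shows $b = a$; it never addresses the possibility of a prime $\p$ with $J_\C \subseteq \p \subsetneq \p_a$ that is \emph{not} of the form $\p_b$. Such primes certainly exist in $\F_2[x_1,\ldots,x_n]$ in general (e.g.\ $\langle x_1 + x_2\rangle \subsetneq \langle x_1,x_2\rangle$), so this is a genuine issue the paper glosses over. The fact that minimal primes over $J_\C$ are always of the form $\p_b$ is established only later in the paper (Proposition~\ref{prop:primarydec}, via the primary decomposition algorithm), and that result is not cited in the proof of Lemma~\ref{lemma:correspondence}.

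Your device---reading off the ``linear content'' of an arbitrary prime $\p$ to build $a'$, and then using primality of $\p$ on the product generators $\rho_v$ to force $J_\C \subseteq \p_{a'}$---closes this gap directly and self-containedly, without forward reference to the structure theorem for pseudo-monomial ideals. The key observation, which you identify correctly, is that every $\rho_v$ is a product of $n$ linear factors $f_k \in \{x_k,1-x_k\}$, so primality picks out a factor lying in $\p$ and hence (by your definition of $a'$) in $\p_{a'}$. This is a nice elementary argument that the paper could have used.
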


\begin{proof}
Recall that $a \in \A_\C \Rightarrow V_a \subseteq \C$, and hence $J_\C \subseteq \p_a$ (by Lemma~\ref{lemma:lemma6}).
($\Rightarrow$) Let $a \in \A_\C$ be maximal, and choose $b \in \{0,1,*\}$ such that 
$J_\C \subseteq \p_b \subseteq \p_a$.
By Lemmas~\ref{lemma:lemma6} and~\ref{lemma:lemma7}, $V_a \subseteq V_b \subseteq \C$.  Since $a$ is maximal, we conclude that $b = a$, and hence $\p_b = \p_a$.  
It follows that $\p_a$ is a minimal prime over $J_\C$.  ($\Leftarrow$) Suppose $\p_a$ is a minimal prime over $J_\C$.  Then by Lemma~\ref{lemma:lemma6}, $a \in \A_\C$.  Let
$b$ be a maximal element of $\A_\C$ such that $V_a \subseteq V_b \subseteq \C$.  Then $J_\C \subseteq \p_b \subseteq \p_a$.  Since $\p_a$ is a minimal prime over $J_\C$, $\p_b = \p_a$ and hence $b = a$.  Thus $a$ is maximal in $\A_\C$.
\end{proof}

\noindent We can now describe the primary decomposition of $J_\C$.  Here we assume the neural code $\C \subseteq \{0,1\}^n$ is non-empty, so that $J_\C$ is a proper pseudo-monomial ideal.

\begin{theorem}\label{thm:prim-decomp}
$J_\C = \bigcap_{i=1}^\ell \p_{a_i}$ is the unique irredundant primary decomposition of $J_\C$, where $\p_{a_1},\ldots,\p_{a_\ell}$ are the minimal primes over $J_\C$. 
\end{theorem}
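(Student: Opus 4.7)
My plan is to split the proof into three stages: identifying the minimal primes and establishing the easy inclusion; proving the reverse inclusion via a pseudo-monomial identity; and deducing irredundancy and uniqueness.

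\emph{Stage 1.} By Lemma~\ref{lemma:correspondence}, the prime ideals $\p_{a_1},\ldots,\p_{a_\ell}$ corresponding to the maximal elements of $\A_\C$ are exactly the minimal primes over $J_\C$. Since each $\p_{a_i}$ contains $J_\C$ by Lemma~\ref{lemma:lemma6}, we immediately have $J_\C \subseteq \bigcap_{i=1}^\ell \p_{a_i}$. I would also observe at this stage that $\C = \bigcup_{i=1}^\ell V_{a_i}$: every $c \in \C$ satisfies $V_c = \{c\} \subseteq \C$, so $c \in \A_\C$, and hence is dominated by some maximal $a_i$, giving $c \in V_{a_i}$.

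\emph{Stage 2.} The reverse inclusion $\bigcap_{i=1}^\ell \p_{a_i} \subseteq J_\C$ is the technical core. My approach rests on the pseudo-monomial identity
\[
\prod_{i\in\sigma} x_i \prod_{j\in\tau}(1-x_j) \;=\; \sum_{v \in V_b}\rho_v,
\]
where $b \in \{0,1,*\}^n$ has $b_i=1$ for $i\in\sigma$, $b_j=0$ for $j\in\tau$, and $b_k=*$ otherwise. This identity is obtained by inserting $1 = x_k + (1-x_k)$ for each $k \notin \sigma\cup\tau$ into the left-hand side and expanding; its consequence is that a pseudo-monomial lies in $J_\C$ precisely when its associated $V_b$ is disjoint from $\C$. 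I would then argue that any element of $\bigcap_i \p_{a_i}$ admits a pseudo-monomial decomposition whose summands each correspond to a $V_b$ disjoint from $\C$. Following the classical computation of the intersection of square-free monomial primes, I would express $\bigcap_i \p_{a_i}$ as generated by pseudo-monomials obtained by choosing one linear generator $x_{k_i}-a_{i,k_i}$ from each $\p_{a_i}$, multiplying them, and then reducing via the Boolean relations $x_k(1-x_k)=0$. Any pseudo-monomial $p$ surviving this reduction has its $V_b$ disjoint from every $V_{a_i}$, since the chosen generator from $\p_{a_i}$ forbids the $a_i$-signature; hence $V_b \cap \C = \emptyset$, and $p \in J_\C$ by the identity.

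\emph{Stage 3.} For irredundancy, Lemma~\ref{lemma:lemma7} shows that for $i \neq j$ we have $\p_{a_j} \not\subseteq \p_{a_i}$, because $V_{a_i} \not\subseteq V_{a_j}$ follows from the maximality of $a_i$ in $\A_\C$; hence no $\p_{a_i}$ can be omitted from the intersection. Uniqueness is then standard: the minimal primes of any ideal are intrinsic to it, so the decomposition as an irredundant intersection of minimal primes is uniquely determined by $J_\C$. I expect the main obstacle to lie in Stage 2, specifically in verifying rigorously that the pseudo-monomial reduction of $\bigcap_i \p_{a_i}$ yields only pseudo-monomials whose associated $V_b$'s avoid $\C$; this likely requires a careful combinatorial induction on the number of primes $\ell$, and may overlap with the primary decomposition algorithm developed later in this section.
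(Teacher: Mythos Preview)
Your approach is different from the paper's, and Stage~2 contains a genuine gap that you correctly flag but do not close.

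The paper does not attempt to verify $\bigcap_i \p_{a_i} \subseteq J_\C$ directly. Instead, it runs the argument in the opposite direction: it first proves that the primary decomposition \emph{algorithm} (Proposition~\ref{prop:prim-decomp}) terminates and outputs a decomposition $J_\C = \bigcap_{I \in \P} I$ with each $I$ prime of the form $\p_a$; the key technical input is Lemma~\ref{lemma:z-decomp}, which says $\langle J, \prod_{i\in\sigma} z_i\rangle = \bigcap_{i\in\sigma}\langle J, z_i\rangle$ for pseudo-monomial ideals. Once $J_\C$ is exhibited as an intersection of primes, it is radical (Lemma~\ref{lemma:inter}), and Lasker--Noether for radical ideals (Lemma~\ref{lemma:rad}) gives both uniqueness and the identification of the components with the minimal primes. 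Lemma~\ref{lemma:correspondence} is never invoked.

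Your Stage~2 asserts that $\bigcap_i \p_{a_i}$ is generated by Boolean-reduced products of linear generators, one from each $\p_{a_i}$. This is the non-trivial step, and it is not the ``classical computation for square-free monomial primes'' carried over verbatim: the presence of both $x_k$ and $1-x_k$ as possible generators means the usual divisibility arguments fail (e.g.\ $x_1 \in \langle x_1(1-x_2), x_2\rangle$ without being a multiple of either generator). Establishing this generator description is essentially equivalent to proving Lemma~\ref{lemma:z-decomp} and applying it inductively --- which is exactly what the paper's algorithm does, but starting from $J_\C$ rather than from $\bigcap_i \p_{a_i}$. So your approach and the paper's converge on the same technical lemma; the paper's route simply avoids having to compute the intersection of the $\p_{a_i}$'s from scratch.

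A minor point on Stage~3: showing $\p_{a_j} \not\subseteq \p_{a_i}$ for $i\neq j$ does not by itself give irredundancy; you need that $\p_{a_i} \not\supseteq \bigcap_{j\neq i}\p_{a_j}$, which follows by primality of $\p_{a_i}$ (if it contained the intersection it would contain some $\p_{a_j}$). The paper gets irredundancy for free from Lasker--Noether.
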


\noindent The proof is given in Section~\ref{sec:prim-decomp-proof}.  Combining this theorem
with Lemma~\ref{lemma:correspondence}, we have:

\begin{corollary}
$J_\C = \bigcap_{i=1}^\ell \p_{a_i}$ is the unique irredundant primary decomposition of $J_\C$, where 
$a_1,\ldots,a_\ell$ are the maximal elements of $A_\C$.
\end{corollary}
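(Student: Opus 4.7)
The plan is short since this corollary is an almost immediate consequence of Theorem~\ref{thm:prim-decomp} combined with Lemma~\ref{lemma:correspondence}. First I would invoke Theorem~\ref{thm:prim-decomp} to express $J_\C$ as its unique irredundant primary decomposition $J_\C = \bigcap_{i=1}^\ell \p_{a_i}$, where the $\p_{a_i}$ enumerate precisely the minimal primes over $J_\C$. Then, by Lemma~\ref{lemma:correspondence}, an element $a \in \A_\C$ is maximal if and only if $\p_a$ is a minimal prime over $J_\C$. Reindexing the intersection in terms of the maximal elements of $\A_\C$ therefore produces exactly the same set of prime ideals, and the corollary follows.

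The one point worth double-checking is that distinct maximal $a \in \A_\C$ give genuinely distinct prime ideals $\p_a$, so that the reindexed intersection is irredundant in the same sense as in Theorem~\ref{thm:prim-decomp}. This follows from Lemma~\ref{lemma:lemma7}: if $\p_a = \p_b$ then $V_a \subseteq V_b$ and $V_b \subseteq V_a$, so $V_a = V_b$, which forces $a = b$ since $a$ is determined by $V_a$ (the non-$*$ coordinates are read off from the values that are constant across $V_a$, and the $*$ coordinates from those that vary). Hence the map $a \mapsto \p_a$ restricted to the maximal elements of $\A_\C$ is a bijection onto the minimal primes over $J_\C$, and uniqueness of the decomposition is inherited directly from Theorem~\ref{thm:prim-decomp}. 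There is no real obstacle here; the corollary is essentially a relabeling.
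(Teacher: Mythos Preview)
Your proposal is correct and follows exactly the paper's approach: the corollary is stated immediately after Theorem~\ref{thm:prim-decomp} with the remark ``Combining this theorem with Lemma~\ref{lemma:correspondence}, we have,'' and that is precisely what you do. Your extra check that distinct maximal $a$ yield distinct $\p_a$ via Lemma~\ref{lemma:lemma7} is a fine sanity check but is not needed for the argument, since Lemma~\ref{lemma:correspondence} already gives a bijection between maximal elements of $\A_\C$ and minimal primes over $J_\C$.
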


\subsection{Decomposing the neural code via intervals of the Boolean lattice}\label{sec:boolean-lattice}

From the definition of $\A_\C$, it is easy to see that the maximal elements yield a kind of ``primary'' decomposition
of the neural code $\C$ as a union of maximal $V_a$s.

\begin{lemma} \label{lemma:code-decomp}
$\C = \bigcup_{i=1}^\ell V_{a_i}$, where $a_1,\ldots,a_\ell$ are the maximal elements of $\A_\C$.  (I.e., $\p_{a_1},\ldots,\p_{a_\ell}$
are the minimal primes in the primary decomposition of $J_\C$.)
\end{lemma}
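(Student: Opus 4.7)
The plan is to prove both inclusions of the set equality directly, using only the definitions of $\A_\C$, $V_a$, and maximality.

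For the inclusion $\bigcup_{i=1}^\ell V_{a_i} \subseteq \C$, the argument is immediate: by definition of $\A_\C$, each $a_i \in \A_\C$ satisfies $V_{a_i} \subseteq \C$, so the union of these varieties sits inside $\C$. No use of maximality is needed here.

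For the reverse inclusion $\C \subseteq \bigcup_{i=1}^\ell V_{a_i}$, I would start with an arbitrary codeword $c \in \C$ and view it as an element of $\{0,1,*\}^n$ with no $*$ entries. Then $V_c = \{c\} \subseteq \C$, so $c \in \A_\C$. Since $\A_\C \subseteq \{0,1,*\}^n$ is a finite set partially ordered by $a \leq b \iff V_a \subseteq V_b$ (equivalently, by Lemma~\ref{lemma:lemma7}, by reverse inclusion of the $\p_a$), the element $c$ admits a maximal extension: there exists some maximal $a_i \in \A_\C$ with $V_c \subseteq V_{a_i}$. Hence $c \in V_{a_i} \subseteq \bigcup_{i=1}^\ell V_{a_i}$, as desired.

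I do not anticipate a serious obstacle here: the statement is essentially a repackaging of the fact that every point of $\C$ lies in some maximal $V_a$ contained in $\C$, and the existence of such a maximal $V_a$ is guaranteed by the finiteness of $\{0,1,*\}^n$. The one thing worth stating carefully is the identification of $\{0,1\}^n$ as the subset of $\{0,1,*\}^n$ consisting of words with no $*$'s, so that the singleton $\{c\}$ is literally $V_c$ for some $c \in \A_\C$. The result can also be seen as the combinatorial shadow of Theorem~\ref{thm:prim-decomp}: taking the variety of both sides of $J_\C = \bigcap_{i=1}^\ell \p_{a_i}$ and using $V(J_\C) = \C$ together with $V(\p_{a_i}) = V_{a_i}$ and the fact that $V$ turns intersections of ideals into unions of varieties yields the same conclusion, so one could alternatively give a one-line derivation from the primary decomposition already established.
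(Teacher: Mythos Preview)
Your proof is correct and is essentially identical to the paper's: both inclusions are argued exactly as you describe, with the key observation that $c \in V_c \subseteq V_{a_i}$ for some maximal $a_i \in \A_\C$. The paper also remarks, just as you do, that the result can alternatively be read off from Theorem~\ref{thm:prim-decomp} by applying $V(\cdot)$ to both sides of the primary decomposition.
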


\begin{proof}
Since $V_a \subseteq \C$ for any $a \in \A_\C$, clearly $\bigcup_{i=1}^\ell V_{a_i} \subseteq \C$.
To see the reverse inclusion, note that for any $c \in \C$, $c \in V_c \subseteq V_a$ for some maximal $a \in \A_\C$.
Hence, $\C \subseteq \bigcup_{i=1}^\ell V_{a_i}.$
\end{proof}

Note that Lemma~\ref{lemma:code-decomp}
could also be regarded as a corollary of Theorem~\ref{thm:prim-decomp}, since $\C = V(J_\C) = V(\bigcap_{i=1}^\ell \p_{a_i}) = \bigcup_{i=1}^\ell V(\p_{a_i}) = \bigcup_{i=1}^\ell V_{a_i}$, and the maximal $a \in \A_\C$ correspond to minimal primes $\p_a \supseteq J_\C$.
Although we were able to prove Lemma~\ref{lemma:code-decomp} directly, in practice we use the primary decomposition in order to find (algorithmically) the maximal elements $a_1,\ldots,a_\ell \in \A_\C$, and thus determine the $V_a$s for the above decomposition of the code.

It is worth noting here that the decomposition of $\C$ in Lemma~\ref{lemma:code-decomp} is not necessarily {\it minimal}.  This is because one can have fewer $\q_a$s such that
$$\bigcap_{i \in \sigma \subsetneq [\ell]} \q_{a_i} = \bigcap_{i \in [\ell]} \p_{a_i}.$$
Since $V(\q_{a_i}) = V(\p_{a_i}) = V_{a_i}$, this would lead to a decomposition of $\C$ as a union of fewer $V_{a_i}$s.  In contrast, the primary decomposition of $J_\C$ in Theorem~\ref{thm:prim-decomp} is irredundant, and hence
none of the minimal primes can be dropped from the intersection.

\subsubsection*{Neural activity ``motifs'' and intervals of the Boolean lattice}

We can think of an element $a \in \{0,1,*\}^n$ as a neural activity ``motif''.  That is, $a$ is a pattern of activity and silence for a subset of the neurons,
while $V_a$ consists of all activity patterns on the full population of neurons that are consistent with this motif (irrespective of what the code is).
For a given neural code $\C$, the set of maximal $a_1,\ldots,a_l \in \A_\C$ corresponds to a set of minimal motifs that define the code (here ``minimal'' is used in the sense of having the fewest number of neurons that are constrained to be ``on'' or ``off'' because $a_i \neq *$).
If $a \in \{0,*\}^n$, we refer to $a$ as a neural {\em silence} motif, since it corresponds to a pattern of silence.  In particular, silence motifs correspond to simplices in $\supp \C$,
since $\supp V_a$ is a simplex in this case.  If $\supp \C$ is a simplicial complex, then Lemma~\ref{lemma:code-decomp} gives the decomposition of $\C$ as a union of minimal silence motifs (corresponding to {\it facets}, or maximal simplices, of $\supp \C$).

More generally, $V_a$ corresponds to an {\em interval} of the Boolean lattice $\{0,1\}^n$.  Recall the poset structure of the Boolean lattice: for any pair of elements $v_1,v_2 \in \{0,1\}^n$, we have $v_1 \leq v_2$ if and only if $\supp(v_1) \subseteq \supp(v_2)$.  An {\it interval} of the Boolean lattice is thus a subset of the form:
$$[u_1,u_2] \od \{ v \in \{0,1\}^n \mid u_1 \leq v \leq u_2 \}.$$
Given an element $a \in \{0,1,*\}^n$, we have a natural interval consisting of all Boolean lattice elements ``compatible'' with $a$.  Letting $a^0 \in \{0,1\}^n$ be the element
obtained from $a$ by setting all $*$s to $0$, and $a^1 \in \{0,1\}^n$ the element obtained by setting all $*$s to $1$, we find that
$$V_a = [a^0,a^1] = \{ v \in \{0,1\}^n \mid a^0 \leq v \leq a^1 \}.$$
Simplices correspond to intervals of the
form $[0,a^1]$, where $0$ is the bottom ``all-zeros'' element in the Boolean lattice.

While the primary decomposition of $J_\C$ allows a neural code $\C \subseteq \{0,1\}^n$ to be decomposed as a union of intervals of the Boolean lattice,
as indicated by Lemma~\ref{lemma:code-decomp}, the canonical form $CF(J_\C)$ provides a decomposition of the {\it complement} of $\C$ as a union of intervals.
First, notice that to any pseudo-monomial $f \in CF(J_\C)$ we can associate an element $b \in \{0,1,*\}$ as follows: $b_i = 1$ if $x_i | f$, $b_i = 0$ if $(1-x_i) | f$, and $b_i = *$ otherwise.  In other words,
$$f = f_b \od \prod_{\{i \mid b_i = 1\}} x_i  \prod_{\{j \mid b_j = 0\}} (1-x_j).$$
As before, $b$ corresponds to an interval $V_b = [b^0,b^1]\subset \{0,1\}^n$.  Recalling the $J_\C$ is generated by pseudo-monomials corresponding to non-codewords,
it is now easy to see that the complement of $\C$ in $\{0,1\}^n$ can be expressed as the union of $V_b$s, where each $b$ corresponds to a pseudo-monomial in the canonical form.   The canonical form thus provides an alternative description of the code, nicely complementing Lemma~\ref{lemma:code-decomp}.

\begin{lemma}
$\C = \{0,1\}^n \setminus \bigcup_{i=1}^k V_{b_i}$, where $CF(J_\C) = \{f_{b_1},\ldots,f_{b_k}\}$.
\end{lemma}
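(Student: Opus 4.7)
The plan is to connect the vanishing locus of each canonical-form generator $f_{b_i}$ to the interval $V_{b_i}$, and then use that $J_\C = \langle CF(J_\C)\rangle$ together with the standing identification $V(J_\C) = \C$ inside $\{0,1\}^n$.

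The first step is a pointwise computation: for any $b \in \{0,1,*\}^n$ and any $v \in \{0,1\}^n$, evaluate
\[
f_b(v) = \prod_{\{i \mid b_i = 1\}} v_i \prod_{\{j \mid b_j = 0\}} (1-v_j) \in \{0,1\}.
\]
This product equals $1$ exactly when $v_i = 1$ for every $i$ with $b_i = 1$ and $v_j = 0$ for every $j$ with $b_j = 0$, i.e.\ exactly when $v \in V_b$. Consequently $\{v \in \{0,1\}^n \mid f_b(v) = 0\} = \{0,1\}^n \setminus V_b$. (This is really just the observation already made in the paper that $\rho_v$ is a characteristic function for $v$, now extended from singleton intervals to arbitrary $V_b$.)

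Next, by the definition of the canonical form, $J_\C = \langle f_{b_1}, \ldots, f_{b_k}\rangle$. Since a polynomial generated by these $f_{b_i}$ vanishes at $v$ iff every $f_{b_i}$ does, the variety of $J_\C$ inside $\{0,1\}^n = \F_2^n$ is
\[
V(J_\C) \cap \{0,1\}^n \;=\; \bigcap_{i=1}^{k} \bigl(\{0,1\}^n \setminus V_{b_i}\bigr) \;=\; \{0,1\}^n \setminus \bigcup_{i=1}^{k} V_{b_i}.
\]
Finally, by the defining property $V(I_\C) = \C$ of $I_\C$ (equivalently, $V(J_\C) \cap \{0,1\}^n = \C$, since $J_\C$ and $I_\C$ differ only by the Boolean relations, which vanish identically on $\{0,1\}^n$), the left-hand side is just $\C$. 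This yields the claimed equality $\C = \{0,1\}^n \setminus \bigcup_{i=1}^{k} V_{b_i}$.

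There is no real obstacle here beyond keeping the bookkeeping straight: the only substantive point is the first step, matching the Boolean evaluation $f_b(v)\in\{0,1\}$ to membership in the interval $V_b$. Everything else is a direct application of $J_\C = \langle CF(J_\C) \rangle$ and the ideal--variety correspondence already established in the paper.
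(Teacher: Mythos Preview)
Your argument is correct and matches the paper's own reasoning: the paper does not give a formal proof of this lemma, merely remarking that since $J_\C$ is generated by pseudo-monomials corresponding to non-codewords, ``it is now easy to see'' that the complement of $\C$ decomposes as the union of the $V_{b_i}$. Your proof is exactly the natural way to make this precise---identify $f_b(v)=1 \Leftrightarrow v \in V_b$, use $J_\C = \langle CF(J_\C)\rangle$ to get $V(J_\C) = \bigcap_i (\{0,1\}^n \setminus V_{b_i})$, and invoke $V(J_\C) = \C$ (established in the paper just before Lemma~3.2). One cosmetic remark: your sentence ``a polynomial generated by these $f_{b_i}$ vanishes at $v$ iff every $f_{b_i}$ does'' is worded loosely (a single element of the ideal can vanish at $v$ without all generators vanishing there); what you mean, and what you correctly use, is the standard fact $V(\langle f_{b_1},\ldots,f_{b_k}\rangle) = \bigcap_i V(f_{b_i})$.
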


\begin{wrapfigure}{r}{.5\linewidth}
\vspace{-.35in}
   \centering
   \includegraphics[width=1.8in]{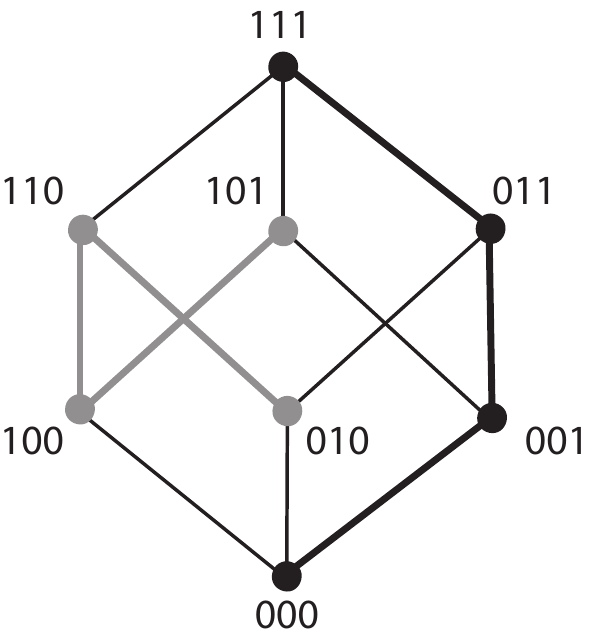} 
   \vspace{-.1in}
   \caption{\small Boolean interval decompositions of the code $\C = \{000, 001, 011, 111\}$ (in black) and of its complement (in gray), arising
   from the primary decomposition and canonical form of $J_\C$, respectively.}
\end{wrapfigure}

\noindent We now illustrate both decompositions of the neural code with an example.
\medskip

\noindent{\bf Example.} Consider the neural code $\C = \{000, 001, 011, 111\} \subset \{0,1\}^3$ corresponding to a set of receptive fields satisfying $U_1 \subsetneq U_2 \subsetneq U_3 \subsetneq X$.  The primary decomposition of $J_\C \subset \F_2[x_1,x_2,x_3]$ is given by 
$$\langle x_1,x_2 \rangle \cap \langle x_1,1-x_3 \rangle \cap \langle 1-x_2, 1-x_3 \rangle,$$
while the canonical form is
$$CF(J_\C) = \langle x_1(1-x_2), x_2(1-x_3), x_1(1-x_3) \rangle.$$

From the primary decomposition, we can write $\C = V_{a_1} \cup V_{a_2} \cup V_{a_3}$ for $a_1 = 00*$, $a_2 = 0{*}1$, and $a_3 = *11$.
The corresponding Boolean lattice intervals are $[000,001]$, $[001,011]$, and $[011,111]$, respectively, and are depicted in black in Figure 5.
As noted before, this decomposition of the neural code need not be minimal; indeed, we could also write $\C = V_{a_1} \cup V_{a_3}$, as the 
middle interval is not necessary to cover all codewords in $\C$.

From the canonical form, we obtain $\C = \{0,1\}^3 \setminus (V_{b_1} \cup V_{b_2} \cup V_{b_3})$, where $b_1 = 10*$, $b_2 = *10$, and $b_3=1{*}0.$
The corresponding Boolean lattice intervals spanning the complement of $\C$ are $[100,101]$, $[010,110]$, and $[100,110]$, respectively; these are 
depicted in gray in Figure 5.  Again, notice that this decomposition is not minimal -- namely, $V_{b_3} = [100,110]$ could be dropped.

\subsection{An algorithm for primary decomposition of pseudo-monomial ideals}\label{sec:prim-decomp-algorithm}

We have already seen that computing the primary decomposition of the neural ideal $J_\C$ is a critical step towards extracting the canonical form $CF(J_\C)$, and that it also yields a meaningful decomposition of $\C$ in terms of neural activity motifs.
Recall from  Section~\ref{sec:canonical-form} that $J_\C$ is always a {\it pseudo-monomial ideal} -- i.e., $J_\C$ is generated by pseudo-monomials, which are polynomials $f \in \F_2[x_1,\ldots,x_n]$ of the form
$$f = \prod_{i \in \sigma} z_i, \;\; \text{where} \;\; z_i \in \{x_i, 1-x_i\} \;\; \text{for any} \;\; i \in [n].$$
In this section, we provide an explicit algorithm for finding
the primary decomposition of such ideals.

In the case of {\it monomial} ideals, there are many algorithms for obtaining the primary decomposition, and there are already fast implementations of such algorithms in algebraic geometry software packages such as Singular and Macaulay2 \cite{macaulay-book}.  
Pseudo-monomial ideals are closely related to square-free monomial ideals, but there are some differences which require a bit of care.  
In particular, if $J \subseteq F_2[x_1,\ldots,x_n]$ is a pseudo-monomial ideal  and $z \in \{x_i, 1-x_i\}$ for some $i \in [n]$, then
for $f$ a pseudo-monomial:
$$f \in \langle J, z \rangle \not\Rightarrow f \in J \text{ or } f \in \langle z \rangle.$$
To see why, observe that $x_1 \in \langle x_1(1-x_2), x_2 \rangle$, because
$x_1 = 1 \cdot x_1(1-x_2) + x_1 \cdot x_2,$ but $x_1$ is not a multiple of either $x_1(1-x_2)$ or $x_2$.
We can nevertheless adapt ideas from (square-free) monomial ideals to obtain an
algorithm for the primary decomposition of pseudo-monomial ideals.  The following lemma allows us to handle the above complication.

\begin{lemma}\label{new-lemma}
Let $J\subset \F_2[x_1,\ldots,x_n]$ be a pseudo-monomial ideal, and let $z \in \{x_i, 1-x_i\}$ for some $i \in [n]$.  For any pseudo-monomial $f$,
$$f \in \langle J,z \rangle \Rightarrow f \in J \text{ or } f \in \langle z \rangle \text{ or } (1-z)f \in J.$$
\end{lemma}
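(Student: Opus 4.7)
The plan is to exploit the substitution that annihilates $z$, combined with the structural rigidity of pseudo-monomials (no index can appear as both $x_i$ and $1-x_i$ in a single factor). Assume without loss of generality that $z = x_i$; the case $z = 1-x_i$ follows by swapping the roles of $x_i$ and $1-x_i$. Since $f \in \langle J, z \rangle$, write $f = g + hz$ with $g \in J$ and $h \in \F_2[x_1,\ldots,x_n]$. Let $\{p_j\}$ be pseudo-monomial generators of $J$, so $g = \sum_j b_j p_j$. Because each $p_j$ is a pseudo-monomial, it factors as $p_j = x_i^{e_j}(1-x_i)^{d_j} q_j$ with $e_j, d_j \in \{0,1\}$, $e_j + d_j \leq 1$, and $q_j$ not involving $x_i$. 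Factor $f$ analogously as $f = x_i^{e_f}(1-x_i)^{d_f} f'$.

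Next, I would introduce the substitution homomorphism $\phi: \F_2[x_1,\ldots,x_n] \to \F_2[x_1,\ldots,x_n]$ defined by $\phi(x_i) = 0$ and $\phi(x_j) = x_j$ for $j \neq i$. Then $\phi(z) = 0$, so applying $\phi$ to $f = g + hz$ gives $\phi(f) = \phi(g)$. Since $\phi(p_j) = 0$ whenever $e_j = 1$ and $\phi(p_j) = q_j$ whenever $e_j = 0$, we obtain
\[
\phi(f) \;=\; \phi(g) \;=\; \sum_{j\,:\,e_j=0} \phi(b_j)\, q_j.
\]

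The key step is to show that $(1-z)\phi(f) \in J$. For each $j$ with $e_j = 0$, I would observe that $(1-z) q_j \in J$: if $d_j = 0$ then $p_j = q_j$ and $(1-z)q_j = (1-z)p_j \in J$; if $d_j = 1$ then $p_j = (1-z)q_j$ itself, which already lies in $J$. Multiplying the displayed expression for $\phi(f)$ by $(1-z)$ and using the ideal property yields $(1-z)\phi(f) \in J$.

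Finally, I would split into cases based on the factorization of $f$. If $e_f = 1$, then $z \mid f$, so $f \in \langle z \rangle$. If $e_f = d_f = 0$, then $f$ does not involve $x_i$, so $\phi(f) = f$, and $(1-z) f = (1-z)\phi(f) \in J$. If $e_f = 0$ and $d_f = 1$, then $f = (1-x_i) f' = (1-z)\phi(f)$, so $f \in J$ directly. The main obstacle is spotting the right trick: the naive multiplication by $(1-z)$ of the defining equation $f = g + hz$ does not simplify because $(1-z)z$ is nonzero in the polynomial ring; the substitution $\phi$ is what allows the identity to be reformulated inside the ideal generated by the $q_j$'s, where multiplication by $(1-z)$ pulls each term back into $J$.
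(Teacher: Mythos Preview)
Your proof is correct and follows essentially the same approach as the paper. The paper also sets $z=0$ (writing $f|_{z=0}$ where you use the homomorphism $\phi$), splits the pseudo-monomial generators of $J$ according to whether they carry a factor of $z$, of $1-z$, or neither, observes that $(1-z)\cdot f|_{z=0}\in J$ for exactly the reason you give, and then finishes with the same three-way case split on whether $z$, $1-z$, or neither divides $f$.
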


\noindent The proof is given in Section~\ref{sec:prop2-proof}.
Using Lemma~\ref{new-lemma} we can prove the following key lemma for our algorithm, which mimics the case of square-free monomial ideals.

\begin{lemma} \label{lemma:z-decomp}
Let $J \subset \F_2[x_1,\ldots,x_n]$ be a pseudo-monomial ideal, and let
$\prod_{i \in \sigma} z_i$ be a pseudo-monomial, with $z_i \in \{x_i, 1-x_i\}$ for each $i$.  Then,
$$\langle J, \prod_{i \in \sigma} z_i \rangle = \bigcap_{i \in \sigma} \langle J, z_i \rangle.$$
\end{lemma}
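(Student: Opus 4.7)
The easy inclusion $\langle J, \prod_{i \in \sigma} z_i \rangle \subseteq \bigcap_{i \in \sigma} \langle J, z_i \rangle$ is immediate: for each $j \in \sigma$ the generator $\prod_{i\in\sigma} z_i$ lies in $\langle z_j \rangle$ and $J \subseteq \langle J, z_j \rangle$, so the left-hand side is contained in every $\langle J, z_j \rangle$.

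For the reverse inclusion, my plan is to first establish that every pseudo-monomial $f \in \bigcap_{i \in \sigma} \langle J, z_i \rangle$ already lies in $\langle J, \prod_{i\in\sigma} z_i \rangle$, via a global case analysis built from Lemma~\ref{new-lemma}. Assuming $f \notin J$, let $S_1 = \{i \in \sigma : f \in \langle z_i \rangle\}$ and $S_2 = \sigma \setminus S_1$; since $f \notin J$ and $f \notin \langle z_i\rangle$ for $i \in S_2$, Lemma~\ref{new-lemma} forces $(1-z_i) f \in J$ for every $i \in S_2$. Because the $z_i$ involve distinct variables, the pseudo-monomial $f$ is divisible by $\prod_{i \in S_1} z_i$. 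Expanding the identity $f = f \cdot \prod_{i \in S_2}\bigl(z_i+(1-z_i)\bigr)$ gives
\[
f=\sum_{T \subseteq S_2} f \cdot \prod_{i \in T} z_i \cdot \prod_{j \in S_2 \setminus T}(1-z_j).
\]
The $T = S_2$ summand is divisible by $\prod_{i \in S_1 \cup S_2} z_i = \prod_{i\in\sigma}z_i$ and so lies in $\langle \prod_{i\in\sigma} z_i \rangle$; every other summand has a factor $(1-z_{j_0})$ for some $j_0 \in S_2 \setminus T$ and rewrites as $\bigl[\prod_{i \in T}z_i \cdot \prod_{j \in S_2 \setminus T,\,j \neq j_0}(1-z_j)\bigr] \cdot \bigl[(1-z_{j_0})f\bigr]$, which lies in $J$ since $(1-z_{j_0})f \in J$ and $J$ is an ideal. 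Summing, $f \in \langle J, \prod_{i\in\sigma} z_i\rangle$.

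To upgrade this pseudo-monomial-wise containment to the stated ideal equality, I would appeal to the fact that both sides are pseudo-monomial ideals and that such ideals are determined by their pseudo-monomial elements (being, by definition, generated by pseudo-monomials). The right-hand side is manifestly pseudo-monomial; for the intersection on the left, I would invoke (or prove as an auxiliary lemma) the pseudo-monomial analog of the classical fact that an intersection of monomial ideals is again monomial. This closure property is where I expect the main obstacle to lie: one must rule out that the interaction between the $z_i$ and the pseudo-monomial generators of $J$ forces essential non-pseudo-monomial elements into the intersection. This is plausible because the $z_i$ have pairwise disjoint variable indices (so the degenerate situation $\langle x_i \rangle \cap \langle 1-x_i \rangle = \langle x_i(1-x_i)\rangle$ never arises here), but a careful structural argument—e.g.\ identifying the pseudo-monomial generators of such intersections by a common-multiple construction analogous to taking least common multiples in the monomial case—will be needed to make the closure property rigorous and close the proof.
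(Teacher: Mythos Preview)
Your core argument matches the paper's: partition $\sigma$ into $S_1=\{i: f\in\langle z_i\rangle\}$ and $S_2=\sigma\setminus S_1$, invoke Lemma~\ref{new-lemma} to obtain $(1-z_i)f\in J$ for each $i\in S_2$, and then decompose $f$ as a sum of terms each lying in $J$ or in $\langle \prod_{i\in\sigma} z_i\rangle$. The only cosmetic difference is the form of the decomposition: the paper uses the telescoping identity
\[
f=(1-z_1)f+z_1(1-z_2)f+\cdots+z_1\cdots z_{m-1}(1-z_m)f+z_1\cdots z_m f
\]
(with $S_2=\{1,\ldots,m\}$ after relabelling), which is simply a more economical route to the same conclusion as your full binomial expansion $f=\sum_{T\subseteq S_2} f\prod_{i\in T}z_i\prod_{j\in S_2\setminus T}(1-z_j)$.

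The one substantive divergence is your final paragraph. The paper does \emph{not} first restrict to pseudo-monomial $f$ and then upgrade via a closure-under-intersection lemma; it simply runs the three-case argument directly for an arbitrary $f\in\bigcap_{i\in\sigma}\langle J,z_i\rangle$ and invokes Lemma~\ref{new-lemma} without comment on the pseudo-monomial hypothesis. So your proposed auxiliary step---proving that $\bigcap_i\langle J,z_i\rangle$ is itself a pseudo-monomial ideal---is not part of the paper's strategy at all. Your caution here is well-founded (Lemma~\ref{new-lemma} is indeed only stated and proved for pseudo-monomial $f$), but the paper does not engage with that subtlety, and your closure argument, as you acknowledge, would itself require separate justification.
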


\noindent The proof is given in Section~\ref{sec:prop2-proof}.  Note that if $\prod_{i \in \sigma} z_i \in J$, then this lemma implies  $J =  \bigcap_{i \in \sigma} \langle J, z_i \rangle,$
which is the key fact we will use in our algorithm.
This is similar to Lemma 2.1 in \cite[Monomial Ideals Chapter]{macaulay-book},
and suggests a recursive algorithm along similar lines to those that exist for monomial ideals.

The following observation will add considerable efficiency to our algorithm for pseudo-monomial ideals.

\begin{lemma}\label{lemma:x_i-reduction}
Let $J \subset \F_2[x_1,\ldots,x_n]$ be a pseudo-monomial ideal.  For any $z_i \in \{x_i,1-x_i\}$ we can write
$$J = \langle z_i g_1, \ldots ,z_i g_k, (1-z_i)f_1,\ldots,(1-z_i)f_\ell,h_1,\ldots,h_m\rangle,$$
where the $g_j$, $f_j$ and $h_j$ are pseudo-monomials that contain no $z_i$ or $1-z_i$ term.  (Note that
$k, \ell$ or $m$ may be zero if there are no generators of the corresponding type.)
Then, 
\begin{eqnarray*}
\langle J, z_i \rangle = \langle J|_{z_i=0}, z_i \rangle &=& \langle z_i, f_1,\ldots,f_\ell,h_1,\ldots,h_m \rangle.
\end{eqnarray*}
\end{lemma}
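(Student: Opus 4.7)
The plan has three steps in sequence. First, I would justify the stated tripartite presentation of $J$. Since pseudo-monomials are products $\prod_{i\in\sigma}x_i\prod_{j\in\tau}(1-x_j)$ with $\sigma\cap\tau=\emptyset$, no pseudo-monomial contains both $z_i$ and $1-z_i$ as factors. Thus any pseudo-monomial generating set of $J$ partitions into those divisible by $z_i$, those divisible by $1-z_i$, and those divisible by neither, yielding the claimed form after pulling out the $z_i$ (resp.\ $1-z_i$) factor from the first two groups to obtain the pseudo-monomials $g_j$, $f_j$ in the remaining variables.

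Second, I would establish $\langle J, z_i\rangle = \langle J|_{z_i=0}, z_i\rangle$. This rests on the elementary fact that for any polynomial $p\in\F_2[x_1,\ldots,x_n]$, the difference $p - p|_{z_i=0}$ lies in $\langle z_i\rangle$: since $z_i$ equals either $x_i$ or $1-x_i$, substitution of $z_i=0$ is just reduction modulo the linear polynomial $z_i$ (using $z_i = x_i - 1$ in the second case, keeping in mind $-1 = 1$ in $\F_2$). Hence each generator $g$ of $J$ is congruent modulo $z_i$ to $g|_{z_i=0}$, which gives both containments simultaneously.

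Third, the identification $\langle J|_{z_i=0}, z_i\rangle = \langle z_i, f_1,\ldots,f_\ell,h_1,\ldots,h_m\rangle$ is immediate from substituting $z_i=0$ into the three types of generators from the first step: $z_ig_j \mapsto 0$, $(1-z_i)f_j \mapsto f_j$, and $h_j \mapsto h_j$, since by construction the $g_j$, $f_j$, and $h_j$ contain no $z_i$ or $1-z_i$ term.

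I do not foresee any serious obstacle here; the only care required is notational, namely making clear that $J|_{z_i=0}$ denotes the ideal generated by the images of our chosen pseudo-monomial generators under $z_i \mapsto 0$. The equality established in the second step then confirms that $\langle J|_{z_i=0}, z_i\rangle$ is in fact independent of the choice of generators, since it equals the intrinsically defined ideal $\langle J, z_i\rangle$.
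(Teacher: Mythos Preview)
Your proposal is correct and follows essentially the same approach as the paper. The paper's proof is terser: it simply observes that adding $z_i$ makes the $z_ig_j$ generators redundant and that $(1-z_i)f_j$ can be replaced by $f_j$ via the identity $f_j = 1\cdot(1-z_i)f_j + f_j\cdot z_i$, whereas you phrase the same content more systematically as reduction modulo the linear polynomial $z_i$; but the underlying argument is the same.
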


\begin{proof}
Clearly, the addition of $z_i$ in $\langle J,z_i \rangle$ renders the $z_ig_j$ generators unnecessary.
The $(1-z_i)f_j$ generators can be reduced to just $f_j$ because $f_j = 1\cdot(1-z_i)f_j + f_j \cdot z_i$.
\end{proof}

\noindent We can now state our algorithm.   Recall that an ideal $I \subseteq R$ is {\em proper} if $I \neq R$.

\subsubsection*{Algorithm for primary decomposition of pseudo-monomial ideals}

\noindent {\bf Input}: A proper pseudo-monomial ideal $J \subset \F_2[x_1,\ldots,x_n]$.  This is presented as $J = \langle g_1,\ldots,g_r\rangle$ with each generator $g_i$ a pseudo-monomial.\\

\noindent {\bf Output}: Primary decomposition of $J$.  This is returned as a set $\P$ of prime ideals, with $J = \bigcap_{I \in \P} I$.

\begin{itemize}
\item Step 1 (Initializion Step): Set $\P = \emptyset$ and $D = \{J\}.$  Eliminate from the list of generators of $J$ those that are multiples of other generators.

\item Step 2 (Splitting Step): For each ideal $I \in D$ compute $D_I$ as follows.   
\begin{itemize}
\item[Step 2.1:] Choose a nonlinear generator $z_{i_1}\cdots z_{i_m} \in I$, 
where each $z_i \in \{x_i, 1-x_i\}$, and $m \geq 2$.  
(Note: the generators of $I$ should always be pseudo-monomials.)
\item[Step 2.2:] Set $D_I = \{ \langle I, z_{i_1} \rangle, \ldots, \langle I, z_{i_m} \rangle\}.$
By Lemma~\ref{lemma:z-decomp} we know that
$$I = \bigcap_{k=1}^m \langle I, z_{i_k} \rangle = \bigcap_{K \in D_I} K.$$
\end{itemize}

\item Step 3 (Reduction Step):  For each $D_I$ and each ideal  $\langle I, z_i\rangle \in D_I$, reduce the set of generators as follows.
\begin{itemize}
\item[Step 3.1:] Set $z_i = 0$ in each generator of $I$.  This yields a ``0'' for each multiple of $z_i$, and removes $1-z_i$ factors in each of the remaining generators.  
By Lemma~\ref{lemma:x_i-reduction}, $\langle I,z_i \rangle = \langle I |_{z_i = 0}, z_i \rangle$. 
\item[Step 3.2:] Eliminate $0$s and generators that are multiples of other generators.
\item[Step 3.3:] If there is a $``1"$ as a generator, eliminate $\langle I, z_i \rangle$ from $D_I$ as it is not a proper ideal.
\end{itemize}
 
 \item Step 4 (Update Step): Update $D$ and $\P$, as follows.
 \begin{itemize}
 \item[Step 4.1:] Set $D = \bigcup D_I$, and remove redundant ideals in $D$.  That is, remove an ideal if it has the same set of generators as another ideal in $D$.
 \item [Step 4.2:] For each ideal $I \in D$, if $I$ has only linear generators (and is thus prime), move $I$ to $\P$ by setting  $\P = \P \cup I$ and $D = D \setminus I$.
\end{itemize}

 \item Step 5 (Recursion Step): Repeat Steps 2-4 until $D = \emptyset$.
 
 \item Step 6 (Final Step): Remove redundant ideals of $\P$.   That is, remove ideals that are not necessary to preserve the equality $J = \bigcap_{I \in \P} I$.
 \end{itemize}

\begin{proposition}\label{prop:prim-decomp}
This algorithm is guaranteed to terminate, and the final $\P$ is a set of irredundant prime ideals such that $J = \bigcap_{I \in \P} I$.
\end{proposition}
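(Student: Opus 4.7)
My plan is to establish three things: (i) that the invariant $J = \bigcap_{I \in D \cup \P} I$ is maintained at every step; (ii) that the recursion terminates; and (iii) that, upon termination, every ideal in $\P$ is prime of the form $\p_a$ and the intersection is irredundant.

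For (i), the splitting in Step~2 is justified by Lemma~\ref{lemma:z-decomp}: since the chosen generator $z_{i_1}\cdots z_{i_m}$ already lies in $I$, we have $\langle I,\prod_k z_{i_k}\rangle = I$, and the lemma yields $I = \bigcap_{k=1}^m \langle I, z_{i_k}\rangle$, so replacing $I$ by $D_I$ in $D$ preserves the invariant. The reduction in Step~3 is justified by Lemma~\ref{lemma:x_i-reduction}, which gives $\langle I, z_i\rangle = \langle I|_{z_i=0}, z_i\rangle$; discarding ideals that contain $1$ is harmless because the whole ring is the identity for intersection. Moving a purely-linear ideal from $D$ into $\P$ and removing literal duplicates in $D\cup\P$ trivially preserve the intersection.

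For (ii), I would use the termination measure $\mu(I) = \sum_g (\deg g - 1)$ summed over pseudo-monomial generators $g$ of $I$ with $\deg g \geq 2$, and compare the multisets $\{\mu(I)\}_{I\in D}$ in a suitable well-order. In a splitting step applied to a degree-$m$ generator $g = z_{i_1}\cdots z_{i_m}$ of $I$, the generator $g$ becomes a multiple of $z_{i_k}$ in each branch and is eliminated by Step~3.2 (contributing a $-(m-1)$ decrease); any other generator of the form $(1-z_{i_k})h$ reduces to $h$ of strictly smaller degree; generators divisible by $z_{i_k}$ are dropped; and generators untouched by $z_{i_k}$ stay the same. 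Hence every surviving child has $\mu$ strictly smaller than $\mu(I)$, and the recursion is well-founded.

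For (iii), when $D = \emptyset$ every ideal in $\P$ has only linear pseudo-monomial generators in distinct variables, because Step~3.3 would have discarded any ideal in which both $x_i$ and $1-x_i$ appeared (since $x_i + (1-x_i) = 1$ in $\F_2$). Such an ideal is exactly $\p_a = \langle \{x_i - a_i \mid a_i \neq *\} \rangle$ for some $a \in \{0,1,*\}^n$, which is prime because $\F_2[x_1,\ldots,x_n]/\p_a$ is a polynomial ring over $\F_2$. The intersection equality holds by (i), and irredundancy is enforced by the final deduplication in Step~6. The subtlety that requires the most care is verifying that after the Step~3 reduction the surviving nonzero generators are still pseudo-monomials; this holds because a pseudo-monomial cannot contain both $z_{i_k}$ and $1-z_{i_k}$ as factors, so replacing $(1-z_{i_k})h$ by $h$ preserves the pseudo-monomial form of every generator, allowing Lemma~\ref{lemma:z-decomp} to be reapplied at the next recursion level. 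Once this invariant is in place, the remainder of the proof is a direct assembly of Lemmas~\ref{new-lemma}, \ref{lemma:z-decomp}, and~\ref{lemma:x_i-reduction}.
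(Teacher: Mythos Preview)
Your proposal is correct and follows essentially the same approach as the paper: maintain the invariant $J = \bigcap_{I \in D\cup\P} I$ via Lemmas~\ref{lemma:z-decomp} and~\ref{lemma:x_i-reduction}, prove termination by a degree-based measure that strictly drops in each branch, and observe that the surviving linear ideals are prime with irredundancy enforced by Step~6. The paper's version is terser---it uses the total degree $\deg(I)=\sum_g \deg g$ rather than your $\mu(I)=\sum_{\deg g\ge 2}(\deg g-1)$ and simply asserts ``by construction'' that the final $\P$ consists of primes---so your additional remarks (that both $x_i$ and $1-x_i$ cannot survive in the same ideal because of Step~3.3, and that the reduced generators remain pseudo-monomials so the recursion can continue) fill in points the paper leaves implicit, but the underlying argument is the same.
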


\begin{proof}
For any pseudo-monomial ideal $I \in D$, let $\deg(I)$ be the sum of the degrees of all generating monomials of $I$.  
To see that the algorithm terminates, observe that for each ideal $\langle I, z_i \rangle \in D_I$,  $\deg(\langle I, z_i \rangle) < \deg(I)$ (this follows from
Lemma~\ref{lemma:x_i-reduction}).  The degrees of elements in $D$ thus steadily decrease with each recursive iteration, until they are removed as prime ideals
that are appended to $\P$.  At the same time, the size of $D$ is strictly bounded at $|D| \leq 2^{n \choose 3}$, since there are only $n \choose 3$ pseudo-monomials in $\F_2[x_1,\ldots,x_n]$,
and thus at most $2^{n \choose 3}$ distinct pseudo-monomial ideals.

By construction, the final $\P$ is an irredundant set of prime ideals.  Throughout the algorithm, however, it is always true that 
$J = \left(\bigcap_{I \in D} I\right) \cap \left(\bigcap_{I \in \P} I \right)$.  Since the final $D = \emptyset$, the final $\P$ satisfies $J = \bigcap_{I \in \P} I$.
\end{proof}

\section*{Acknowledgments}
CC was supported by NSF DMS 0920845 and NSF DMS 1225666, a Woodrow Wilson Career Enhancement
Fellowship, and an Alfred P. Sloan Research Fellowship.  VI was supported by NSF DMS 0967377, NSF DMS 1122519, and the Swartz Foundation.

\section{Appendix 1: Proofs}

\subsection{Proof of Lemmas~\ref{lemma:spec} and~\ref{lemma:explicit-relations}}\label{sec:lemma-proofs}

To prove Lemmas~\ref{lemma:spec} and~\ref{lemma:explicit-relations}, we need a version of the Nullstellensatz for finite fields.
The original ``Hilbert's Nullstellensatz'' applies when $k$ is an algebraically closed field. It states that if $f \in k[x_1,\ldots,x_n]$ vanishes on $V(J)$, then $f \in \sqrt{J}$.  In other words,
$$I(V(J)) = \sqrt{J}.$$
Because we have chosen $k = \F_2 = \{0,1\}$, we have to be a little careful about the usual ideal-variety correspondence, as there are some subtleties introduced in the case of finite fields.   In particular, $J = \sqrt{J}$ in $\F_2[x_1,\ldots,x_n]$ does not imply $I(V(J)) = J$.

The following lemma and theorem are well-known.
Let $\F_q$ be a finite field of size $q$, and $\F_q[x_1,\ldots,x_n]$ the $n$-variate polynomial ring
over $\F_q$.

\begin{lemma}\label{lemma:radical}
For any ideal $J \subseteq \F_q[x_1,\ldots,x_n]$, the ideal $J+\langle x_1^q-x_1, \ldots, x_n^q-x_n \rangle$ is a radical ideal.
\end{lemma}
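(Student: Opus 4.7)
The strategy is to show that the quotient ring $\F_q[x_1,\ldots,x_n]/K$, where $K := J + \langle x_1^q - x_1, \ldots, x_n^q - x_n \rangle$, is reduced (i.e., has no nonzero nilpotents); this is equivalent to $K$ being a radical ideal.

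First I would handle the base case $J = 0$, i.e., show that $\langle x_1^q - x_1, \ldots, x_n^q - x_n \rangle$ is itself radical. Using $x_i^q = x_i$ in the quotient, every element of $\F_q[x_1,\ldots,x_n]/\langle x_1^q - x_1, \ldots, x_n^q - x_n \rangle$ has a representative of degree strictly less than $q$ in each variable, so this quotient has dimension at most $q^n$ as an $\F_q$-vector space. Consider the evaluation map $\mathrm{ev}$ from this quotient to the function ring $\F_q^{\F_q^n}$, sending $f$ to the function $a \mapsto f(a)$. It is well-defined by Fermat's little theorem, since each $x_i^q - x_i$ vanishes identically on $\F_q^n$. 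The target ring has dimension exactly $q^n$ over $\F_q$, and Lagrange interpolation shows $\mathrm{ev}$ is surjective; therefore it is a ring isomorphism. Thus $\F_q[x_1,\ldots,x_n]/\langle x_1^q - x_1, \ldots, x_n^q - x_n \rangle \cong \F_q^{q^n}$, a finite product of copies of the field $\F_q$, which is manifestly reduced.

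Next I would handle general $J$ by writing $\bar{J}$ for the image of $J$ in $R := \F_q[x_1,\ldots,x_n]/\langle x_1^q - x_1, \ldots, x_n^q - x_n \rangle$, so that $\F_q[x_1,\ldots,x_n]/K \cong R/\bar{J}$. In a product of fields $\F_q^N$, a direct check using the idempotents $e_i$ shows that every ideal has the form $\bigoplus_{i \in S}\F_q\, e_i$ for some $S \subseteq \{1,\ldots,N\}$, so every quotient is again a product of copies of $\F_q$, hence reduced. Applied to $R \cong \F_q^{q^n}$, this yields that $R/\bar{J}$ is reduced, and therefore $K$ is radical.

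The main obstacle I anticipate is verifying injectivity of the evaluation map --- specifically, that a polynomial of degree $< q$ in each variable which vanishes on all of $\F_q^n$ must be the zero polynomial. This is a standard fact proved by induction on $n$ (or equivalently by matching the $q^n$-dimensional source against the $q^n$-dimensional target after surjectivity is established), but it is the one nontrivial computational input; the rest of the argument is formal ring-theoretic manipulation of the product decomposition of $R$.
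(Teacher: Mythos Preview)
Your proof is correct. Note, however, that the paper does not actually prove this lemma: it introduces Lemma~\ref{lemma:radical} (together with the Strong Nullstellensatz in Finite Fields) with the phrase ``The following lemma and theorem are well-known'' and states both without proof. So there is no argument in the paper to compare against.

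Your approach is the standard one and goes through cleanly. The only minor comment is that once you have established surjectivity of the evaluation map and the dimension bound $\dim_{\F_q} R \leq q^n$ on the source, injectivity is automatic by linear algebra, so the ``main obstacle'' you flag (that a polynomial of degree $<q$ in each variable vanishing on all of $\F_q^n$ must be zero) is already handled; you do not need a separate induction. Everything else---the identification of ideals in $\F_q^{q^n}$ with subsets of coordinates, and the conclusion that every quotient is again a product of fields and hence reduced---is correct as stated.
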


\begin{theorem}[Strong Nullstellensatz in Finite Fields]  For an arbitrary finite field $\F_q$, let $J \subseteq \F_q[x_1,\ldots,x_n]$ be an ideal.  Then,
$$I(V(J)) = J + \langle x_1^q-x_1, \ldots, x_n^q-x_n \rangle.$$
\end{theorem}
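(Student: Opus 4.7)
The plan is to prove the two inclusions separately. The direction $J + \langle x_1^q - x_1, \ldots, x_n^q - x_n\rangle \subseteq I(V(J))$ is immediate: every $f \in J$ vanishes on $V(J)$ by definition, and each $x_i^q - x_i$ vanishes on all of $\F_q^n \supseteq V(J)$ since $\alpha^q = \alpha$ for every $\alpha \in \F_q$.

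For the reverse inclusion, set $J' := J + \langle x_1^q - x_1, \ldots, x_n^q - x_n \rangle$. Since $V(J) \subseteq \F_q^n = V(\langle x_i^q - x_i \rangle)$, one has $V(J') = V(J)$, so it suffices to show $I(V(J')) \subseteq J'$ under the additional assumption that $J'$ already contains every $x_i^q - x_i$. The key step will be to establish the $\F_q$-algebra isomorphism
$$\varphi: \F_q[x_1,\ldots,x_n]/\langle x_1^q - x_1, \ldots, x_n^q - x_n\rangle \xrightarrow{\sim} \prod_{v \in \F_q^n} \F_q, \qquad \varphi(\bar{f}) = (f(v))_{v \in \F_q^n}.$$
For surjectivity I would exhibit, for each $w \in \F_q^n$, the ``delta'' polynomial
$$\delta_w(x) = \prod_{i=1}^n \prod_{\alpha \in \F_q \setminus \{w_i\}} \frac{x_i - \alpha}{w_i - \alpha},$$
which satisfies $\delta_w(w) = 1$ and $\delta_w(v) = 0$ for $v \neq w$, so that any tuple $(c_w)_w$ is hit by $\sum_{w} c_w \delta_w$. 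Injectivity then follows from a dimension count, since both source and target have $\F_q$-dimension $q^n$: the source has the standard-monomial basis $\{x_1^{a_1}\cdots x_n^{a_n} : 0 \leq a_i < q\}$ obtained by reducing with $x_i^q = x_i$.

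Once $\varphi$ is in hand, the conclusion is essentially bookkeeping. Ideals of a product of fields $\prod_{v} \F_q$ correspond bijectively to subsets $S \subseteq \F_q^n$ via $I \leftrightarrow \{v : a_v = 0 \text{ for every } (a_w)_w \in I\}$, since each factor is a field and so has only the trivial and improper ideals. Under $\varphi$, the ideal $J'/\langle x_i^q - x_i\rangle$ corresponds to the subset $S = V(J')$, and $I(V(J'))/\langle x_i^q - x_i\rangle$ corresponds to the same $S$, forcing $J' = I(V(J'))$. The main obstacle will be carrying out the dimension count and verifying $\varphi$ cleanly; after that, the classical product-of-fields correspondence delivers the theorem.
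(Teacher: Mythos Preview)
Your argument is correct and is one of the standard routes to this finite-field Nullstellensatz: the evaluation map gives an $\F_q$-algebra isomorphism $\F_q[x_1,\ldots,x_n]/\langle x_i^q-x_i\rangle \cong \prod_{v\in\F_q^n}\F_q$, and then the ideal--subset correspondence in a finite product of fields forces $J' = I(V(J'))$. The dimension count and the Lagrange-interpolation delta functions you describe are exactly the right ingredients; nothing is missing.

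As for comparison with the paper: there is nothing to compare. The paper does not prove this theorem; it is simply stated as ``well-known'' in the appendix, alongside Lemma~\ref{lemma:radical}, and then invoked as a black box in the proofs of Lemmas~\ref{lemma:spec} and~\ref{lemma:explicit-relations}. So your write-up would in fact supply a self-contained proof that the paper omits. One small note: your proof also yields Lemma~\ref{lemma:radical} (that $J'$ is radical) for free, since $J'$ is the preimage of an ideal in a reduced ring.
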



\subsubsection*{Proof of Lemma~\ref{lemma:spec}}

 We begin by describing the maximal ideals of $\F_2[x_1,\ldots,x_n]$.
Recall that
$$m_v \od I(v) =  \{f \in \F_2[x_1,\ldots,x_n] \mid f(v) = 0 \}$$
is the maximal ideal of $\F_2[x_1,\ldots,x_n]$ consisting of all functions that vanish on $v \in \F_2^n$.  We will use the notation $\bar{m}_v$
to denote the quotient of $m_v$ in $R_\C$, in cases where $m_v \supset I_\C$.

\begin{lemma}  \label{lemma:mv}
$m_v = \langle x_1-v_1, \ldots, x_n-v_n \rangle \subset \F_2[x_1,\ldots,x_n]$, and is a radical ideal.
\end{lemma}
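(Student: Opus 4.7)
The plan is to establish the two claims separately. For the equality $m_v = \langle x_1 - v_1, \ldots, x_n - v_n\rangle$, I would argue by mutual containment. The inclusion ``$\supseteq$'' is immediate: each generator $x_i - v_i$ evaluates to $0$ at $v$, so the ideal they generate is contained in $m_v$. For the reverse inclusion, I would use the translation $y_i \od x_i - v_i$, which gives $\F_2[x_1,\ldots,x_n] = \F_2[y_1,\ldots,y_n]$ and sends the point $v$ to the origin in $y$-coordinates. Any $f \in m_v$ then becomes a polynomial $g(y_1,\ldots,y_n)$ with $g(0,\ldots,0) = f(v) = 0$; since $g$ has no constant term, it lies in $\langle y_1,\ldots,y_n\rangle$, which translates back to $\langle x_1 - v_1,\ldots, x_n - v_n\rangle$.

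For the radical claim, the cleanest route is to invoke Lemma~\ref{lemma:radical}, which says that any ideal of $\F_2[x_1,\ldots,x_n]$ containing all Boolean relations $x_i^2 - x_i$ is automatically radical. So I need only verify $x_i^2 - x_i \in m_v$ for each $i$. Since $x_i^2 - x_i = x_i(x_i - 1)$ and $v_i \in \{0,1\}$, exactly one of the factors $x_i$ or $x_i - 1$ equals $x_i - v_i$, which is a generator of $m_v$. Hence $x_i^2 - x_i \in m_v$ for all $i$, and applying Lemma~\ref{lemma:radical} concludes the proof. (Equivalently, since $V(m_v) = \{v\}$, the Strong Nullstellensatz gives $I(V(m_v)) = m_v + \langle x_1^2-x_1,\ldots,x_n^2-x_n\rangle = m_v$, and $I(V(m_v))$ is always radical.)

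I do not expect any serious obstacle here; both parts are routine once the translation trick is noticed. The only subtlety worth flagging is the observation that in $\F_2$ the Boolean relation factors as a product of the two possible values of $v_i$, so whichever binary value $v_i$ takes, the corresponding linear factor lies in $m_v$ — this is what makes the radical step work without any extra hypothesis beyond $v \in \F_2^n$.
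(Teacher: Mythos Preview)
Your proposal is correct. The paper's argument differs from yours in how it establishes the equality $m_v = \langle x_1-v_1,\ldots,x_n-v_n\rangle$: rather than the direct translation trick $y_i = x_i - v_i$, it sets $A_v = \langle x_1-v_1,\ldots,x_n-v_n\rangle$, notes $V(A_v) = \{v\}$, and applies the Strong Nullstellensatz in Finite Fields to get $m_v = I(V(A_v)) = A_v + \langle x_1^2-x_1,\ldots,x_n^2-x_n\rangle$; it then observes (exactly as you do) that each Boolean relation $x_i(1-x_i)$ already lies in $A_v$ since $v_i\in\{0,1\}$, collapsing the sum to $A_v$. The radical claim is obtained the same way in both, via Lemma~\ref{lemma:radical}. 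Your translation argument is more elementary and self-contained, avoiding the Nullstellensatz machinery; the paper's route has the minor advantage that the radical conclusion and the Boolean-relations observation are reused in a single line. You in fact anticipate the paper's method in your parenthetical remark, so the two proofs are essentially interchangeable.
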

\begin{proof}  Denote $A_v = \langle x_1-v_1, \ldots, x_n-v_n \rangle$, and observe that $V(A_v) = \{v\}$.
It follows that $I(V(A_v)) = I(v) =  m_v$.  On the other hand, using the Strong Nullstellensatz in Finite Fields we have
$$I(V(A_v)) = A_v + \langle x_1^2-x_1,\ldots,x_n^2-x_n \rangle = A_v,$$
where the last equality is obtained by observing that, since $v_i \in \{0,1\}$ and $x_i^2-x_i = x_i(1-x_i)$, each generator of $ \langle x_1^2-x_1,\ldots,x_n^2-x_n \rangle$ is already contained in $A_v$.
We conclude that $A_v = m_v$, and the ideal is radical by Lemma~\ref{lemma:radical}.
\end{proof}

\noindent In the proof of Lemma~\ref{lemma:spec}, we make use of the following correspondence: for any quotient ring $R/I$, the maximal ideals of $R/I$ are exactly the quotients $\bar m = m/I$, where $m$ is a maximal ideal of $R$ that contains $I$  \cite{atiyah-macdonald}.

\begin{proof}[Proof of Lemma~\ref{lemma:spec}]
First, recall that because $R_\C$ is a Boolean ring, $\mathrm{Spec}(R_\C) = \mathrm{maxSpec}(R_\C)$, the set of all maximal ideals of $R_\C$.  We
also know that the maximal ideals of $\F_2[x_1,\ldots,x_n]$ are exactly those of the form $m_v$ for $v\in \F_2^n$.  By the correspondence stated above, to show that
$\mathrm{maxSpec}(R_\C)  = \{ \bar{m}_v \mid v \in \C\}$ it suffices to show $m_v \supset I_\C$ if and only if $v\in \C$. To see this, note that for each $v\in \C$, $I_\C\subseteq m_v$ because, by definition, all elements of $I_\C$ are functions that vanish on each $v \in \C$. On the other hand, if $v\notin \C$ then $m_v \not\supseteq I_\C$; in particular,
the characteristic function $\rho_v \in I_\C$ for $v \notin \C$, but  $\rho_v \notin m_v$ because $\rho_v(v) = 1$.
Hence, the maximal ideals of $R_\C$ are exactly those of the form $\bar m_v$ for $v\in \C$.
\end{proof} 

We have thus verified that the points in $\mathrm{Spec}(R_\C)$ correspond to codewords in $\C$.  This was expected given our original
definition of the neural ring, and  suggests that the relations on $\F_2[x_1,\ldots,x_n]$ imposed by $I_\C$ are simply relations ensuring that 
$V(\bar m_v) = \emptyset$ for all $v \notin \C$.  

\subsubsection*{Proof of Lemma~\ref{lemma:explicit-relations}}
Here we find explicit relations for $I_\C$ in the case of an arbitrary neural code. 
Recall that
$$\rho_v = \prod_{i=1}^n((x_i-v_i)-1) = \prod_{\{i\,|\,v_i=1\}}x_i\prod_{\{j\,|\,v_j=0\}}(1-x_j),$$
and that $\rho_v(x)$ can be thought of as a characteristic function for $v$, since it satisfies $\rho_v(v) = 1$ and $\rho_v(x) = 0$ for any other
$x \in \F_2^n$.  This immediately implies that
$$V(J_\C) = V(\langle \{ \rho_v \mid v \notin \C\}\rangle) = \C.$$
We can now prove Lemma~\ref{lemma:explicit-relations}.
\begin{proof}[Proof of Lemma~\ref{lemma:explicit-relations}]
Observe that $I_\C = I(\C) = I(V(J_\C))$, since $V(J_\C) = \C$.  On the other hand, the
Strong Nullstellensatz  in Finite Fields implies $I(V(J_\C)) = J_\C + \langle x_1^2-x_1,\ldots,x_n^2-x_n \rangle = J_\C + \B.$
\end{proof}

\subsection{Proof of Theorem~\ref{thm:ideal-equivalence}}\label{sec:proof1}

Recall that for a given set of receptive fields $\U = \{U_1,\ldots,U_n\}$ in some stimulus space $X$, the ideal $I_\U \subset \F_2[x_1,\ldots,x_n]$ was 
defined as:
$$I_\U \od \big \langle \{ x_\sigma \prod_{i \in \tau} (1-x_i) \mid U_\sigma \subseteq
\bigcup_{i \in \tau}U_i \} \big \rangle .$$
The Boolean relations are present in $I_\U$ irrespective of $\U$, as it is always true that $U_i \subseteq U_i$ and this yields the relation $x_i(1-x_i)$ for each $i$.  By analogy with our definition of $J_\C$, it makes sense to define an ideal $J_\U$ which is obtained by stripping away the Boolean relations.  This will then be used in the proof of Theorem~\ref{thm:ideal-equivalence}.

Note that if $\sigma \cap \tau \neq \emptyset$, then for any $i \in \sigma \cap \tau$ we have $U_\sigma \subseteq U_i \subseteq \bigcup_{j \in \tau} U_i$, and the corresponding relation is a multiple of the Boolean relation $x_i(1-x_i)$.
We can thus restrict attention to relations in $I_\U$ that have $\sigma \cap \tau = \emptyset,$ so long as we include separately the Boolean relations.
These observations are summarized by the following lemma.

\begin{lemma}\label{lemma:JU}
$I_\U = J_\U + \langle x_1^2-x_1,\ldots,x_n^2-x_n \rangle,$
where
$$J_\U \od  \big \langle \{ x_\sigma \prod_{i \in \tau} (1-x_i) \mid \sigma \cap \tau = \emptyset \;\;\mathrm{ and }\;\; U_\sigma \subseteq \bigcup_{i \in \tau}U_i \} \big \rangle.$$
\end{lemma}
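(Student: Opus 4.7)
The plan is to prove the equality by showing both inclusions, treating the generating sets directly. Since both $I_\U$ and $J_\U + \mathcal{B}$ are generated explicitly, it suffices to check that each generator of one ideal lies in the other.

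For the inclusion $J_\U + \mathcal{B} \subseteq I_\U$, I would first observe that every generator of $J_\U$ already appears as a generator of $I_\U$ (the defining condition for $J_\U$ simply imposes the extra constraint $\sigma \cap \tau = \emptyset$), so $J_\U \subseteq I_\U$. For the Boolean relations, note that the trivial containment $U_i \subseteq U_i$ (i.e., taking $\sigma = \tau = \{i\}$) gives the relation $x_i(1-x_i) \in I_\U$, and in characteristic $2$ we have $x_i(1-x_i) = x_i - x_i^2 = x_i^2 - x_i$, so each generator of $\mathcal{B}$ lies in $I_\U$.

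For the reverse inclusion $I_\U \subseteq J_\U + \mathcal{B}$, I would take an arbitrary generator $f = x_\sigma \prod_{i \in \tau}(1-x_i)$ of $I_\U$, corresponding to a containment $U_\sigma \subseteq \bigcup_{i \in \tau} U_i$, and split into two cases according to whether $\sigma \cap \tau = \emptyset$. If the intersection is empty, $f$ is by definition a generator of $J_\U$ and we are done. If there exists $i \in \sigma \cap \tau$, then both $x_i$ and $(1-x_i)$ divide $f$, so $f$ is a multiple of $x_i(1-x_i) = x_i^2 - x_i$, placing $f \in \mathcal{B}$.

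There is no real obstacle here; the lemma is essentially a bookkeeping separation of the generators of $I_\U$ into the ``informative'' ones (with $\sigma \cap \tau = \emptyset$, which reflect actual geometric facts about $\U$) and the ``tautological'' ones (which, when $\sigma$ and $\tau$ overlap, merely encode $U_i \subseteq U_i$ and thus reduce to Boolean relations). The only minor point requiring care is the characteristic-$2$ identity $x_i(1-x_i) = x_i^2 - x_i$, which ensures the products of linear terms of the form $x_i(1-x_i)$ match the stated generators $x_i^2 - x_i$ of $\mathcal{B}$.
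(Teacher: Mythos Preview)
Your proposal is correct and matches the paper's approach exactly. The paper does not even give a formal proof of this lemma; it simply records as ``observations'' preceding the lemma statement precisely the two points you make: that the Boolean relations arise from the trivial containment $U_i \subseteq U_i$, and that any generator of $I_\U$ with $\sigma \cap \tau \neq \emptyset$ is a multiple of some $x_i(1-x_i)$ and hence lies in $\mathcal{B}$.
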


\begin{proof}[Proof of Theorem~\ref{thm:ideal-equivalence}]
We will show that $J_\U=J_{\C(\U)}$ (and thus that  $I_\U = I_{\C(\U)}$) by showing that each ideal contains the
generators of the other. 

First, we show that all generating relations of $J_{\C(\U)}$ are contained in $J_\U$.  Recall that the generators of $J_{\C(\U)}$ are of the form
$$\rho_v = \prod_{i \in \supp(v)}x_i\prod_{j \notin \supp(v)}(1-x_j) \;\; \text{for} \;\; v\notin \C(\U).$$
If $\rho_v$ is a generator of $J_{\C(\U)}$, then $v \notin \C(\U)$ and this implies (by the definition of $\C(\U)$)
that $U_{\supp(v)} \subseteq \bigcup_{j \notin \supp(v)} U_j$.  Taking $\sigma = \supp(v)$ and $\tau = [n] \setminus \supp(v)$, we have $U_{\sigma} \subseteq \bigcup_{j \in \tau} U_j$ with $\sigma \cap \tau = \emptyset$.  This in turn tells us (by the definition of $J_\U$) that $x_\sigma \prod_{j \in \tau} (1-x_j)$ is a generator of $J_\U$.  Since 
$\rho_v = x_\sigma \prod_{j \in \tau} (1-x_j)$ for our choice of $\sigma$ and $\tau$, we conclude that $\rho_v \in J_\U$.  Hence, $J_{\C(\U)} \subseteq J_\U$.

Next, we show that all generating relations of $J_\U$ are contained in $J_{\C(\U)}$.
If $J_\U$ has generator  $x_\sigma\prod_{i\in \tau}(1-x_i)$, then $U_\sigma\subseteq \bigcup_{i\in \tau} U_i$ and $\sigma \cap \tau = \emptyset$.  This in turn implies that $\bigcap_{i \in \sigma} U_i \setminus \bigcup_{j \in \tau} U_j = \emptyset$, and thus (by the definition of $\C(\U)$) we have $v\notin \C(\U)$ for any $v$ such that $\supp(v)\supseteq \sigma$ and $\supp(v)\cap \tau=\emptyset$.
It follows that $J_{\C(\U)}$ contains the relation $x_{\supp(v)}\prod_{j\notin \supp(v)}(1-x_j)$ for any such $v$.  This includes all relations of the form
  $x_\sigma \prod_{j\in \tau}(1-x_j) \prod_{k\notin \sigma\cup \tau}P_k$, where $P_k\in \{x_k,1-x_k\}$.  Taking 
  $f =  x_\sigma \prod_{j\in \tau}(1-x_j)$ in Lemma~\ref{lemma:induction-arg} (below), we can conclude that $J_{\C(\U)}$ contains $x_\sigma \prod_{j\in \tau}(1-x_j)$.  Hence, $J_\U \subseteq J_{\C(\U)}$.
  \end{proof}

\begin{lemma} \label{lemma:induction-arg}
For any $f\in k[x_1,\ldots,x_n]$ and $\tau\subseteq [n]$, the ideal $\langle
\big\{f\prod_{i\in \tau} P_i \mid P_i \in \{x_i, 1-x_i\}\big\}
\rangle = \langle f\rangle.$
\end{lemma}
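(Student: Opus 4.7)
The plan is to prove the two containments separately. The inclusion $\langle \{f\prod_{i\in\tau}P_i\}\rangle \subseteq \langle f\rangle$ is immediate, since every generator on the left is literally a multiple of $f$. The substance of the lemma is the reverse direction: showing that $f$ itself lies in the ideal generated by its ``refinements'' $f\prod_{i\in\tau}P_i$.

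For the nontrivial inclusion I would use the trivial identity $x_i + (1-x_i) = 1$, valid in any commutative ring. Taking the product over $i\in\tau$,
$$\prod_{i\in\tau}\bigl(x_i + (1-x_i)\bigr) \;=\; 1,$$
and multiplying by $f$ then distributing the product gives
$$f \;=\; \sum_{(P_i)_{i\in\tau}} f\prod_{i\in\tau}P_i,$$
where the sum ranges over the $2^{|\tau|}$ choices of $P_i\in\{x_i,1-x_i\}$. Each summand is a generator of the left-hand ideal, so $f$ lies in it, and hence so does all of $\langle f\rangle$.

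If one preferred, an induction on $|\tau|$ would give the same conclusion: with $|\tau|=0$ the statement is tautological, and pairing the two generators $f\prod_{j\in\tau'}P_j\cdot x_i$ and $f\prod_{j\in\tau'}P_j\cdot(1-x_i)$ and summing them recovers $f\prod_{j\in\tau'}P_j$, reducing to the case $|\tau|-1$.

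There is essentially no obstacle. The argument is purely formal and does not use anything special about the base field $k$ (in particular, it works outside characteristic $2$), since the identity $x_i + (1-x_i) = 1$ holds universally. The name ``induction-arg'' in the label merely reflects that the statement is typically invoked to peel off one coordinate at a time in the surrounding argument, not that the proof itself is delicate.
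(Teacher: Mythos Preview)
Your proof is correct and uses the same underlying identity as the paper, namely $x_i+(1-x_i)=1$. The paper formalizes the argument as an explicit induction on $|\tau|$ (pairing $fx_j\prod_{i\in\tau'}P_i$ with $f(1-x_j)\prod_{i\in\tau'}P_i$ to recover the generators for $\tau'$), which is exactly the alternative you sketch; your one-line expansion of $\prod_{i\in\tau}(x_i+(1-x_i))=1$ is simply that induction unrolled, so the two proofs are essentially the same.
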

\begin{proof}

First, denote
$I_f(\tau)\stackrel{\text{def}}{=}\langle\big\{f\prod_{i\in \tau}
P_i \mid P_i \in \{x_i, 1-x_i\}\big\} \rangle $.  We wish to prove
that $I_f(\tau)=\langle f \rangle$, for any $\tau \subseteq [n]$. Clearly, $I_f(\tau)\subseteq \langle f\rangle$,
since every generator of $I_f(\tau)$ is a multiple of $f$. We will prove
$I_f(\tau)\supseteq \langle f\rangle$ by induction on $|\tau|$.

If $|\tau|=0$, then $\tau=\emptyset$ and $I_f(\tau)=\langle f\rangle$. If $|\tau|=1$, so that $\tau=\{i\}$ for some
$i\in [n]$, then $I_f(\tau)=\langle f(1-x_i), fx_i\rangle$.  Note that
 $f(1-x_i)+fx_i=f$, so $f\in I_f(\tau)$, and thus
 $I_f(\tau)\supseteq \langle f\rangle$.

 Now, assume that for some $\ell \geq 1$ we have $I_f(\sigma)\supseteq \langle f\rangle$ for any $\sigma\subseteq [n]$
 with $|\sigma| \leq \ell$.  If $\ell \geq n$, we are done, so we need only show that if $\ell < n$, 
 then $I_f(\tau)\supseteq \langle f\rangle$ for any $\tau$ of size $\ell + 1$.
 Consider $\tau\subseteq [n]$ with $|\tau|=\ell+1$, and let $j \in \tau$ be
 any element. Define $\tau'=\tau\backslash
 \{j\}$, and note that $|\tau'|=\ell$. By our inductive assumption,
 $I_f(\tau')\supseteq\langle f\rangle$.  We will show that 
 $I_f(\tau)\supseteq I_f(\tau')$, and hence $I_f(\tau)\supseteq
 \langle f\rangle$.

 Let $g = f\prod_{i\in \tau'}P_i$ be any generator of
 $I_f(\tau')$ and observe that both $f(1-x_j)\prod_{i\in \tau'}P_i$ and
 $f x_j\prod_{i\in \tau'}P_i$ are both generators of $I_f(\tau)$.  It follows that 
 their sum, $g$, is also in $I_f(\tau)$, and hence $g \in I_f(\tau)$ 
 for any generator $g$ of  $I_f(\tau')$. We conclude that $I_f(\tau)\supseteq I_f(\tau')$, as desired.
\end{proof}

\subsection{Proof of Theorem~\ref{thm:canonical-form}}\label{sec:proof2}

We begin by showing that $J_\U,$ first defined in Lemma~\ref{lemma:JU}, can be generated using the Type 1, Type 2 and Type 3 relations introduced in Section~\ref{sec:types}.   From the proof of Theorem~\ref{thm:ideal-equivalence},
we know that $J_\U = J_{\C(\U)},$ so the following lemma in fact shows that $J_{\C(\U)}$ is generated by the Type 1, 2 and 3 relations as well.

\begin{lemma} \label{lemma:gen-types}
For $\U = \{U_1,\ldots, U_n\}$ a collection of sets in a stimulus space $X$,
\begin{eqnarray*}
J_{\U}  &=& \big \langle \{x_\sigma \mid U_\sigma = \emptyset \}, \big\{\prod_{i \in \tau} (1-x_i) \mid X \subseteq \bigcup_{i\in \tau} U_i\big\},
\\
&& \big\{x_\sigma \prod_{i\in \tau} (1-x_i) \mid \sigma,\tau \neq \emptyset,\; \sigma \cap \tau =\emptyset,\; U_\sigma \neq \emptyset,\; \bigcup_{i \in \tau} U_i \neq X ,  \text{ and } U_\sigma\subseteq \bigcup_{i\in \tau}U_i  \big\} \big \rangle.
\end{eqnarray*}
$J_\U$ (equivalently, $J_{\C(\U)}$) is thus generated by the Type 1, Type 3 and Type 2 relations, respectively.
\end{lemma}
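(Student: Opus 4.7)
The plan is to prove the equality by establishing containment in both directions, using the original description of $J_\U$ from Lemma~\ref{lemma:JU}, namely
$J_\U = \langle \{ x_\sigma \prod_{i \in \tau}(1-x_i) \mid \sigma \cap \tau = \emptyset,\; U_\sigma \subseteq \bigcup_{i \in \tau} U_i\}\rangle$.

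One inclusion is essentially definitional: each of the three listed generating families satisfies the defining condition of $J_\U$, so the right-hand ideal is contained in $J_\U$. A Type 1 generator $x_\sigma$ with $U_\sigma = \emptyset$ arises by taking $\tau = \emptyset$, using the convention $\bigcup_{i \in \emptyset} U_i = \emptyset$ so that $U_\sigma \subseteq \bigcup_{i \in \emptyset} U_i$ holds precisely when $U_\sigma = \emptyset$; and $x_\sigma \prod_{i \in \emptyset}(1-x_i) = x_\sigma$. A Type 3 generator $\prod_{i \in \tau}(1-x_i)$ with $X \subseteq \bigcup_{i \in \tau} U_i$ arises by taking $\sigma = \emptyset$, using $U_\emptyset = X$ and $x_\emptyset = 1$. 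Type 2 relations are immediate generators of $J_\U$.

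For the reverse inclusion, I take an arbitrary generator $f = x_\sigma \prod_{i \in \tau}(1-x_i)$ of $J_\U$ with $\sigma \cap \tau = \emptyset$ and $U_\sigma \subseteq \bigcup_{i \in \tau} U_i$, and argue that $f$ lies in the ideal generated by the three listed types via a case split. If $\tau = \emptyset$, the hypothesis forces $U_\sigma \subseteq \emptyset$, hence $U_\sigma = \emptyset$, so $f = x_\sigma$ is a Type 1 relation. If $\sigma = \emptyset$ and $\tau \neq \emptyset$, then $X = U_\emptyset \subseteq \bigcup_{i \in \tau} U_i$ and $f = \prod_{i \in \tau}(1-x_i)$ is a Type 3 relation. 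If both $\sigma$ and $\tau$ are nonempty, I split further: when $U_\sigma = \emptyset$, the monomial $x_\sigma$ is already a Type 1 generator and $f$ is the multiple $\bigl(\prod_{i \in \tau}(1-x_i)\bigr)\cdot x_\sigma$; when $\bigcup_{i \in \tau} U_i = X$, the monomial $\prod_{i \in \tau}(1-x_i)$ is already a Type 3 generator and $f$ is a multiple of it; and when $U_\sigma \neq \emptyset$ and $\bigcup_{i \in \tau} U_i \neq X$, $f$ itself meets all the listed conditions and is a Type 2 relation.

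There is no genuine obstacle here beyond bookkeeping: the proof is a direct case analysis, and the only subtle point is handling the degenerate choices $\sigma = \emptyset$ or $\tau = \emptyset$ via the conventions $U_\emptyset = X$, $x_\emptyset = 1$, and $\bigcup_{i \in \emptyset} U_i = \emptyset$ fixed in Section~\ref{sec:neural-codes}. With these conventions, the degenerate cases collapse cleanly onto Type 1 or Type 3 relations, and the non-degenerate case is either absorbed by Type 2 or reduced to a multiple of a Type 1 or Type 3 generator.
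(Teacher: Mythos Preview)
Your proof is correct and follows essentially the same approach as the paper: both arguments start from the definition of $J_\U$ in Lemma~\ref{lemma:JU} and perform a case analysis on whether $\sigma$ or $\tau$ is empty and whether $U_\sigma=\emptyset$ or $\bigcup_{i\in\tau}U_i=X$, observing that the degenerate cases yield Type~1 or Type~3 generators (or multiples thereof) and the remaining case is exactly Type~2. The paper presents this as a two-step sequential reduction (first splitting off Type~1, then Type~3), while you organize it as an explicit two-way inclusion with a finer case split, but the content is the same.
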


\begin{proof} Recall that in Lemma~\ref{lemma:JU} we defined $J_\U$ as:
$$J_\U \od  \big \langle \{ x_\sigma \prod_{i \in \tau} (1-x_i) \mid \sigma \cap \tau = \emptyset \;\;\mathrm{ and }\;\; U_\sigma \subseteq \bigcup_{i \in \tau}U_i \} \big \rangle.$$
Observe that if $U_\sigma = \emptyset$, then we can take $\tau = \emptyset$ to
obtain the Type 1 relation $x_\sigma$, where we have used the fact that $\prod_{i \in \emptyset}(1-x_i) = 1$.  Any other relation with $U_\sigma = \emptyset$ and $\tau \neq \emptyset$ would be a multiple of $x_\sigma$.  We can thus write:
$$J_\U = \big \langle \{x_\sigma \mid U_\sigma = \emptyset\}, \{ x_\sigma \prod_{i \in \tau} (1-x_i) \mid  \tau \neq \emptyset, \; \sigma \cap \tau = \emptyset, \;U_\sigma \neq \emptyset, \;\mathrm{ and }\;\; U_\sigma \subseteq \bigcup_{i \in \tau}U_i \} \big \rangle.$$
Next, if $\sigma = \emptyset$ in the second set of relations above, then we have the relation
 $\prod_{i \in \tau} (1-x_i)$ with
$U_\emptyset = X \subseteq \bigcup_{i \in \tau}U_i.$  Splitting off these Type 3 relations, and removing multiples of them that occur if $\bigcup_{i \in \tau} U_i = X $, we obtain the desired result.
\end{proof}

Next, we show that $J_\U$ can be generated by reduced sets of  the Type 1, Type 2 and Type 3 relations given above.
First, consider the Type 1 relations in Lemma~\ref{lemma:gen-types}, and observe that if $\tau\subseteq \sigma$, then $x_\sigma$ is a multiple of $x_\tau$.  We can thus reduce the set of Type 1 generators needed by taking only those corresponding to minimal $\sigma$ with $U_\sigma=\emptyset$:
$$\langle \{x_\sigma \mid U_\sigma = \emptyset\}\rangle = \langle \{x_\sigma \mid \sigma \text{ is minimal w.r.t. } U_\sigma = \emptyset\}\rangle.$$
Similarly, we find for the Type 3 relations:
$$\big\langle \big\{\prod_{i \in \tau} (1-x_i) \mid X \subseteq \bigcup_{i\in \tau} U_i\big\}\big \rangle = 
\big\langle \big\{\prod_{i \in \tau} (1-x_i) \mid \tau  \text{ is minimal w.r.t. } X \subseteq \bigcup_{i\in \tau} U_i\big\}\big \rangle. $$
Finally, we reduce the Type 2 generators.  If $\rho\subseteq \sigma$ and $x_\rho\prod_{i\in \tau} (1-x_i) \in J_\U$, then we also have $x_\sigma\prod_{i\in \tau} (1-x_i) \in J_\U$.  So we can restrict ourselves to only those generators for which $\sigma$ is minimal with respect to $U_\sigma\subseteq \bigcup_{i\in \tau}U_i$.   Similarly, we can reduce to minimal $\tau$ such that $U_\sigma\subseteq \bigcup_{i\in \tau}U_i$.  In summary:
\begin{eqnarray*}
&&\big\langle \big\{x_\sigma \prod_{i\in \tau} (1-x_i) \mid \sigma,\tau \neq \emptyset,\; \sigma \cap \tau =\emptyset,\; U_\sigma \neq \emptyset,\;\bigcup_{i \in \tau} U_i \neq X, \text{ and } U_\sigma\subseteq \bigcup_{i\in \tau}U_i  \big\} \big \rangle = \\
&&\big\langle \big\{x_\sigma \prod_{i\in \tau} (1-x_i) \mid \sigma,\tau \neq \emptyset,\; \sigma \cap \tau =\emptyset,\; U_\sigma \neq \emptyset,\;\bigcup_{i \in \tau} U_i \neq X, \text{ and } \sigma, \tau \text{ are each minimal }\\
&&\text{ w.r.t. } U_\sigma\subseteq \bigcup_{i\in \tau}U_i  \big\} \big \rangle.
\end{eqnarray*}

\noindent We can now prove Theorem~\ref{thm:canonical-form}.

\begin{proof}[Proof of Theorem~\ref{thm:canonical-form}]

Recall that $\C = \C(\U)$, and that by the proof of Theorem~\ref{thm:ideal-equivalence} we have $J_{\C(\U)} = J_\U$. By the reductions given above for the Type 1, 2 and 3 generators, we also know that $J_\U$ can be reduced to the form given in the statement of Theorem~\ref{thm:canonical-form}.  We conclude that $J_\C$ can be expressed in the desired form.

To see that $J_\C$, as given in the statement of Theorem~\ref{thm:canonical-form}, is in canonical form, we must show that the given set of generators is exactly the complete set of minimal
pseudo-monomials for $J_\C$.  First, observe that the generators are all pseudo-monomials.   If $x_\sigma$ is one of the Type 1 relations, and $x_\sigma \in \langle g \rangle$
with $\langle x_\sigma \rangle \neq \langle g \rangle$, then $g = \prod_{i \in \tau} x_i$ for some $\tau \subsetneq \sigma$.  Since $U_\tau \neq \emptyset$, however, it follows that $g \notin J_\C$ and hence $x_\sigma$ is a minimal pseudo-monomial of $J_\C$.  By a similar argument, the Type 2 and Type 3 relations above are also minimal pseudo-monomials in $J_\C$.   

It remains only to show that there are no additional minimal pseudo-monomials in $J_\C$.  Suppose $f = x_\sigma \prod_{i \in \tau}(1-x_i)$ is a minimal pseudo-monomial in $J_\C$.    By Lemma~\ref{lemma:corresp}, $U_\sigma \subseteq \bigcup_{i \in \tau} U_i$ and $\sigma \cap \tau = \emptyset$, so $f$ is a generator in the original definition of $J_\U$ (Lemma~\ref{lemma:JU}).  Since $f$ is a minimal pseudo-monomial of $J_\C$, 
there does not exist a $g \in J_\C$ such that $g =  x_{\sigma'} \prod_{i \in \tau'}(1-x_i)$ with either $\sigma' \subsetneq \sigma$ or $\tau' \subsetneq \tau$.  Therefore, $\sigma$ and $\tau$ are each minimal with respect to $U_\sigma \subseteq \bigcup_{i \in \tau} U_i$.  We conclude that $f$ is one of the generators for $J_\C$ given in the statement of Theorem~\ref{thm:canonical-form}.  It is a minimal Type 1 generator if $\tau = \emptyset$, a minimal Type 3 generator if $\sigma = \emptyset$, and is otherwise a minimal Type 2 generator.  The three sets of minimal generators are disjoint because the Type 1, Type 2 and Type 3 relations are disjoint, provided $X \neq \emptyset$.
\end{proof}

\subsection{Proof of Proposition~\ref{prop:prop1}}\label{sec:proof-prop1}

\noindent Note that every polynomial obtained by the canonical form algorithm is a pseudo-monomial of $J_\C$.  This is because the algorithm constructs products of factors of the form $x_i$ or $1-x_i$, and then reduces them in such a way that no index is repeated in the final product, and there are no powers of any $x_i$ or $1-x_i$ factor; we are thus guaranteed to end up with pseudo-monomials.  Moreover, since the products each have at least one factor in each prime ideal of the primary decomposition of $J_\C$, the pseudo-monomials are all in $J_\C$.   Proposition~\ref{prop:prop1} states that this set of pseudo-monomials is precisely the canonical form $CF(J_\C)$.

To prove Proposition~\ref{prop:prop1}, we will make use of the following technical lemma.
Here $z_i, y_i\in \{x_i, 1-x_i\}$, and thus any pseudo-monomial in $\F_2[x_1,\ldots,x_n]$ is of the form $\prod_{j\in \sigma}  z_j$ for some index set $\sigma \subseteq [n]$.  

\begin{lemma}\label{lemma:matching}
If $y_{i_1}\cdots y_{i_m}\in \langle z_{j_1},\ldots, z_{j_\ell}\rangle$ where $\{i_k\}$ and $\{j_r\}$ are each distinct sets of indices, then $y_{i_k}=z_{j_r}$ for some $k\in [m]$ and $r\in [\ell]$.
\end{lemma}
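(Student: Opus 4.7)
The plan is to argue by contradiction via evaluation in $\F_2^n$. Suppose the conclusion fails, i.e., $y_{i_k}\neq z_{j_r}$ as polynomials for all $k\in[m]$, $r\in[\ell]$. I will construct a point $v\in\F_2^n$ on which every $z_{j_r}$ vanishes but on which $f = y_{i_1}\cdots y_{i_m}$ evaluates to $1$. This contradicts $f\in\langle z_{j_1},\ldots,z_{j_\ell}\rangle$: any element of this ideal can be written as $\sum_r r_s z_{j_s}$, so it must vanish wherever all $z_{j_r}$ vanish simultaneously.

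To build $v$, I would specify each coordinate $v_i$ according to how the index $i$ appears relative to the two index sets. If $i$ appears only among the $i_k$'s, choose $v_i$ so that the unique factor $y_{i_k}$ at index $i$ evaluates to $1$; if $i$ appears only among the $j_r$'s, choose $v_i$ so that $z_{j_r}(v)=0$; if $i$ appears in neither set, choose $v_i$ arbitrarily. The delicate case is when $i=i_k=j_r$ for some (unique, by distinctness) $k$ and $r$. Here both $y_{i_k}$ and $z_{j_r}$ belong to $\{x_i,1-x_i\}$, and by the standing assumption they are not equal as polynomials, hence $y_{i_k}=1-z_{j_r}$. Therefore the two requirements $z_{j_r}(v)=0$ and $y_{i_k}(v)=1$ are equivalent, and a single assignment of $v_i$ satisfies both.

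By construction, the resulting $v$ lies in $V(\langle z_{j_1},\ldots,z_{j_\ell}\rangle)$ while $y_{i_k}(v)=1$ for every $k$, so $f(v)=1$. This contradicts the consequence of $f\in\langle z_{j_1},\ldots,z_{j_\ell}\rangle$ established in the first step, completing the proof.

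The only real obstacle is a bookkeeping one: one must be careful that the assignments in the four cases (only $y$-index, only $z$-index, both, neither) are mutually consistent. Consistency is immediate because each coordinate $v_i$ is touched by at most one $y$-factor (indices $i_k$ distinct) and at most one $z$-factor (indices $j_r$ distinct), and the overlap case reconciles them via the observation $y_{i_k}=1-z_{j_r}$, which is forced in $\F_2$ once the two factors are distinct polynomials at the same variable.
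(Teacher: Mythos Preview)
Your proof is correct and follows essentially the same approach as the paper's: both argue by contradiction, constructing a point $v\in\F_2^n$ on which all $z_{j_r}$ vanish while $f(v)=1$, and both handle the overlap case $i_k=j_r$ via the observation that $y_{i_k}=1-z_{j_r}$ when the two factors are distinct at the same index. Your case analysis is in fact slightly more explicit than the paper's, but the argument is the same.
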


\begin{proof} 
Let $f = y_{i_1}\cdots y_{i_m}$ and $P = \{ z_{j_1},\ldots,z_{j_\ell} \}$.  
Since $f\in \langle P \rangle$, then $\langle P \rangle  = \langle P , f \rangle$, and so $V(\langle P \rangle ) = V(\langle P , f  \rangle)$.  
We need to show that $y_{i_k} = z_{j_r}$ for some pair of indices $i_k, j_r.$
Suppose by way of contradiction that there is no $i_k, j_r$ such that $y_{i_k}=z_{j_r}$.  

Select $a\in \{0,1\}^n$ as follows: for each $j_r \in \{j_1,\ldots,j_\ell\}$, let $a_{j_r} = 0$ 
if $z_{j_r} = x_{j_r}$, and let $a_{j_r} = 1$ if $z_{j_r} = 1-x_{j_r}$; when evaluating at $a$, we thus have $z_{j_r}(a) = 0$ for all $r \in [\ell]$.
Next, for each $i_k\in \omega \od \{i_1,\ldots,i_m\}\backslash \{j_1,..,j_\ell\}$, let $a_{i_k} = 1$ if $y_{i_k} = x_{i_k}$, and let $a_{i_k} = 0$ if $y_{i_k} = 1-x_{i_k}$, so that
$y_{i_k}(a)=1$ for all $i_k \in \omega$.  For any remaining indices $t$, let $a_t=1$.  
Because we have assumed that $y_{i_k}\neq z_{j_r}$ for any $i_k, j_r$ pair, we have
 for any $i \in \{i_1,\ldots,i_m\} \cap \{j_1,\ldots,j_\ell\}$ that $y_i(a) = 1-z_i(a) = 1.$  It follows that $f(a) = 1$.

Now, note that $a\in V(\langle P\rangle )$ by construction.  We must therefore have $a\in V(\langle P, f \rangle )$, and hence $f(a) = 0$, a contradiction.
We conclude that there must be some $i_k, j_r$ with $y_{i_k}=z_{j_r},$ as desired.
\end{proof}

\noindent  We can now prove the Proposition. 

\begin{proof}[Proof of Proposition~\ref{prop:prop1}]
It suffices to show that after Step 4 of the algorithm, the reduced set $\tilde{\mathcal{M}}(J_\C)$ consists entirely of pseudo-monomials of $J_\C$, and includes all {\it minimal} pseudo-monomials of $J_\C$.  If this is true, then after removing multiples of lower-degree elements in Step 5 we are guaranteed to obtain the set of minimal pseudo-monomials, $CF(J_\C)$, since it is precisely the non-minimal pseudo-monomials that will be removed in the final step of the algorithm.

Let $J_\C = \bigcap_{i=1}^s P_i$ be the primary decomposition of $J_\C$, with each $P_i$ a prime ideal of the form $P_i = \langle z_{j_1},\ldots,z_{j_\ell}\rangle$.
Recall that $\mathcal{M}(J_\C)$, as defined in Step 3 of the algorithm, is precisely the set of all polynomials $g$ that are obtained by choosing one linear factor
from the generating set of each $P_i$:
$$\mathcal{M}(J_\C) = \{g = z_{p_1}\cdots z_{p_s} \mid z_{p_i} \text{ is a linear generator of } P_i \}.$$ 
Furthermore, recall that $\tilde{\mathcal{M}}(J_\C)$ is obtained from $\mathcal{M}(J_\C)$ by the reductions in Step 4 of the algorithm.
Clearly, all elements of $\tilde{\mathcal{M}}(J_\C)$ are pseudo-monomials that are contained in $J_\C$.

To show that $\tilde{\mathcal{M}}(J_\C)$ contains all {\it minimal} pseudo-monomials of $J_\C$, we will show that if $f \in J_\C$ is a pseudo-monomial, then there exists another pseudo-monomial $h \in \tilde{\mathcal{M}}(J_\C)$ (possibly the same as $f$) such that $h | f$.  To see this, let
 $f = y_{i_1}\cdots y_{i_m}$ be a pseudo-monomial of $J_\C$.  Then,
$f \in P_i$ for each $i \in [s].$  For a given $P_i = \langle z_{j_1},\ldots,z_{j_\ell}\rangle,$ by Lemma~\ref{lemma:matching} we have $y_{i_k} = z_{j_r}$ for some $k \in [m]$ and $r \in [\ell]$.  In other words, each prime ideal $P_i$ has a generating term, call it $z_{p_i},$ that appears as one of the linear factors of 
$f$.  Setting $g = z_{p_1}\cdots z_{p_s}$, it is clear that $g \in \mathcal{M}(J_\C)$ and that either $g | f$, or $z_{p_i} = z_{p_j}$ for some distinct pair $i,j$.  By removing repeated factors in $g$ one obtains a pseudo-monomial $h \in \tilde{\mathcal{M}}(J_\C)$ such that $h | g$ and $h | f$.  If we take $f$ to be a minimal pseudo-monomial, we find
$f = h  \in \tilde{\mathcal{M}}(J_\C)$.  
\end{proof}

\subsection{Proof of Lemmas~\ref{new-lemma} and~\ref{lemma:z-decomp}}\label{sec:prop2-proof}

Here we prove Lemmas~\ref{new-lemma} and~\ref{lemma:z-decomp}, which underlie the primary decomposition algorithm.  

\begin{proof}[Proof of Lemma~\ref{new-lemma}]  Assume $f \in \langle J,z \rangle$ is a pseudo-monomial.  Then $f = z_{i_1}z_{i_2}\cdots z_{i_r}$, where $z_i \in \{x_i, 1-x_i\}$ for each $i$, and the $i_k$ are distinct.  Suppose $f \notin \langle z \rangle.$  This implies $z_{i_k} \neq z$ for all factors appearing in $f$.  We will show that either $f \in J$ or $(1-z)f \in J$.

Since $J$ is a pseudo-monomial ideal, we can write 
$$J = \langle z g_1, \ldots, z g_k, (1-z) f_1, \ldots, (1-z) f_l, h_1, \ldots, h_m \rangle,$$
where the $g_j, f_j$ and $h_j$ are pseudo-monomials that contain no $z$ or $1-z$ term.  This means
$$f = z_{i_1}z_{i_2}\cdots z_{i_r} = z \sum_{j=1}^k u_j g_j  + (1-z) \sum_{j=1}^l v_j f_j + \sum_{j=1}^m w_j h_j + y z,$$
for polynomials $u_j, v_j, w_j,$ and $y \in \F_2[x_1,\ldots,x_n]$. 
Now consider what happens if we set $z = 0$ in $f$:
$$f|_{z=0} =  z_{i_1}z_{i_2}\cdots z_{i_r}|_{z=0} = \sum_{j=1}^l v_j|_{z=0} f_j + \sum_{j=1}^m w_j|_{z=0} h_j.$$
Next, observe that after multiplying the above by $(1-z)$ we obtain an element of $J$:
$$(1-z) f|_{z=0} = (1-z) \sum_{j=1}^l v_j|_{z=0} f_j + (1-z) \sum_{j=1}^m w_j|_{z=0} h_j \in J,$$
since $(1-z)f_j \in J$ for $j = 1,\ldots,l$ and $h_j \in J$ for $j=1,\ldots,m$.
There are two cases: 
\begin{itemize}

\item[Case 1:] If $1-z$ is a factor of $f$, say $z_{i_1} = 1-z$, then $f|_{z=0} =  z_{i_2}\cdots z_{i_r}$ and thus
$f = (1-z) f|_{z=0} \in J.$

\item[Case 2:] If $1-z$ is {\it not} a factor of $f$, then $f = f|_{z=0}.$  Multiplying by $1-z$ we obtain 
$(1-z) f  \in J.$

\end{itemize}
We thus conclude that $f \notin \langle z \rangle$ implies $f \in J$ or $(1-z)f \in J$.
\end{proof}
\medskip

\begin{proof}[Proof of Lemma~\ref{lemma:z-decomp}]
Clearly, $\langle J, z_\sigma \rangle \subseteq \bigcap_{i \in \sigma} \langle J, z_i \rangle.$  To see the reverse inclusion, consider
$f \in \bigcap_{i \in \sigma} \langle J, z_i \rangle.$  We have three cases.
\begin{itemize}
\item[Case 1:] $f \in J$.  Then, $f \in \langle J, z_\sigma \rangle.$
\item[Case 2:] $f \notin J$, but $f \in \langle z_i \rangle$ for all $i \in \sigma$.  Then $f \in \langle z_\sigma \rangle$, and hence $f \in \langle J, z_\sigma \rangle.$
\item[Case 3:] $f \notin J$ and $f \notin \langle z_i \rangle$ for all $i \in \tau \subset \sigma$, but $f \in \langle z_j \rangle$ for all $j \in \sigma \setminus \tau$.  
Without loss of generality, we can rearrange indices
so that $\tau = \{1,\ldots,m\}$ for $m \geq 1$.  By Lemma~\ref{new-lemma}, we have $(1-z_i)f \in J$ for all $i \in \tau$.
We can thus write:
$$f = (1-z_1)f + z_1(1-z_2)f + \ldots + z_1\cdots z_{m-1}(1-z_m) f + z_1 \cdots z_m f.$$
Observe that the first $m$ terms are each in $J$.  On the other hand, $f \in  \langle z_j \rangle$ for each $j \in \sigma \setminus \tau$ implies that the last term is in 
$\langle z_\tau \rangle \cap \langle z_{\sigma \setminus \tau} \rangle = \langle z_\sigma \rangle.$  Hence, $f \in \langle J, z_\sigma \rangle.$
\end{itemize}
We may thus conclude that $\bigcap_{i \in \sigma} \langle J, z_i \rangle \subseteq \langle J, z_\sigma \rangle$, as desired.
\end{proof}

\subsection{Proof of Theorem~\ref{thm:prim-decomp}}\label{sec:prim-decomp-proof}

Recall that $J_\C$ is always a proper pseudo-monomial ideal for any nonempty neural code $\C \subseteq \{0,1\}^n$. Theorem~\ref{thm:prim-decomp} is thus a direct consequence of the following proposition.

\begin{proposition}\label{prop:primarydec}
Suppose $J \subset \F_2[x_1,\ldots,x_n]$ is a proper pseudo-monomial ideal. Then, $J$ has a unique irredundant primary decomposition of the form
$J = \bigcap_{a \in \A} \p_a,$
where $\{\p_a\}_{a \in \A}$ are the minimal primes over $J$.
\end{proposition}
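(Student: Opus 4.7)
\medskip
\noindent\textbf{Proof plan.} The strategy is to apply Lemma~\ref{lemma:z-decomp} iteratively to split each pseudo-monomial generator of $J$ into its linear factors, obtaining $J$ as a finite intersection of prime ideals $\p_a$, and then to prune to an irredundant such intersection. Concretely, write $J = \langle f_1, \ldots, f_r\rangle$ with each $f_j$ a pseudo-monomial, and maintain a finite collection of pseudo-monomial ideals $\{I_k\}$ with $J = \bigcap_k I_k$ (initially $\{J\}$). Whenever some $I_k$ contains a nonlinear generator $f = z_{j_1}\cdots z_{j_m}$ with $m \geq 2$, use $f \in I_k$ together with Lemma~\ref{lemma:z-decomp} to replace $I_k$ in the collection by $\langle I_k, z_{j_1}\rangle, \ldots, \langle I_k, z_{j_m}\rangle$, simplifying each via Lemma~\ref{lemma:x_i-reduction} so that $f$ is removed in favor of the single linear factor $z_{j_l}$. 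Since each step strictly decreases the total degree of nonlinear generators present across the collection, this process terminates, leaving every surviving ideal generated solely by linear terms $\{z_i \mid i \in \sigma\}$. Such an ideal equals $\p_a$ for an appropriate $a \in \{0,1,*\}^n$, so we obtain $J = \bigcap_{a \in \mathcal{B}} \p_a$ for some finite $\mathcal{B}$.

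Next, prune the intersection: whenever $\p_a \supseteq \p_b$ for distinct $a, b \in \mathcal{B}$, drop $\p_a$ without affecting the intersection. Iterating yields an irredundant intersection $J = \bigcap_{a \in \A} \p_a$ in which the $\{\p_a\}_{a \in \A}$ are pairwise incomparable. I claim this set is exactly the set of minimal primes over $J$, which gives uniqueness. For any prime $\p$ with $J \subseteq \p$, we have $\bigcap_{a \in \A} \p_a \subseteq \p$, so $\prod_{a \in \A} \p_a \subseteq \p$, and by primality $\p_a \subseteq \p$ for some $a \in \A$; hence every prime over $J$, and in particular every minimal prime, contains and (by minimality) equals one of our $\p_a$. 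Conversely, if $J \subseteq \p \subseteq \p_a$ for some prime $\p$, the same argument gives $\p \supseteq \p_b$ for some $b \in \A$; then $\p_b \subseteq \p_a$ and pairwise incomparability forces $b = a$, so $\p = \p_a$, showing each surviving $\p_a$ is itself a minimal prime over $J$.

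The splitting and pruning steps are essentially bookkeeping given Lemmas~\ref{lemma:z-decomp} and~\ref{lemma:x_i-reduction}. The main delicate point is the last step: matching the $\p_a$s appearing in the irredundant intersection with the intrinsically defined minimal primes of $J$. This hinges on the classical fact that a prime ideal containing a finite intersection of ideals must contain one of them, combined with the pairwise-incomparability we enforced during pruning to obtain both the ``every minimal prime appears'' and ``every surviving prime is minimal'' directions.
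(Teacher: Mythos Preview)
Your construction is essentially the paper's: both run the splitting procedure driven by Lemma~\ref{lemma:z-decomp} (with simplification via Lemma~\ref{lemma:x_i-reduction}) to express $J$ as a finite intersection of primes $\p_a$, then prune to an irredundant intersection. This is exactly Proposition~\ref{prop:prim-decomp} in the paper.

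The difference lies in how uniqueness is established. The paper observes that an intersection of primes is radical (Lemma~\ref{lemma:inter}) and then invokes the classical Lasker--Noether consequence (Lemma~\ref{lemma:rad}) that a proper radical ideal has a \emph{unique} irredundant primary decomposition, namely the one given by its minimal primes. You instead argue directly, via prime avoidance, that the pairwise-incomparable $\{\p_a\}_{a \in \A}$ coincide with the minimal primes over $J$: any prime over $J$ contains some $\p_a$ (so every minimal prime is among the $\p_a$), and conversely each $\p_a$ admits no strictly smaller prime over $J$ (so each $\p_a$ is minimal). This is correct and more self-contained than citing Lasker--Noether. One small point worth making explicit in your write-up: your argument shows the set $\{\p_a\}_{a\in\A}$ is intrinsically the set of minimal primes, hence canonical; to conclude this is \emph{the} unique irredundant primary decomposition (not merely the unique one among intersections of primes), you should note that $J$ is radical, so there are no embedded components and the minimal primary components are the minimal primes themselves. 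The paper gets this for free from Lemma~\ref{lemma:rad}; in your approach it is a one-line addendum.
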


\begin{proof}
By Proposition~\ref{prop:prim-decomp}, we can always (algorithmically) obtain an irredundant set $\P$ of prime ideals such that $J = \bigcap_{I \in \P} I$.
Furthermore, each $I \in \P$ has the form $I = \langle z_{i_1},\ldots,z_{i_k}\rangle$, where $z_i \in \{ x_i, 1-x_i\}$ for each $i$.
Clearly, these ideals are all prime ideals of the form $\p_a$ for $a \in \{0,1,*\}$.  
It remains only to show that this primary decomposition is unique, and that
the ideals $\{\p_a\}_{a \in \A}$ are the minimal primes over $J$.  
This is a consequence of some well-known facts summarized in Lemmas~\ref{lemma:inter} and~\ref{lemma:rad}, below.  First, observe by Lemma~\ref{lemma:inter} that $J$ is a radical ideal.  Lemma~\ref{lemma:rad} then tells us that the decomposition in terms of minimal primes is the unique irredundant primary decomposition for $J$.
\end{proof}

\begin{lemma}\label{lemma:inter}
If $J$ is the intersection of prime ideals, $J=\bigcap_{i=1}^\ell \mathbf{p}_i$, 
then $J$ is a radical ideal. 
\end{lemma}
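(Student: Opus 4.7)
The plan is to prove this in two short standard steps: first show that each prime ideal is itself radical, and then observe that an arbitrary intersection of radical ideals is radical. Combined, these immediately give the result.

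For the first step, I would fix a prime $\mathbf{p}_i$ and suppose $r^n \in \mathbf{p}_i$ for some $r$ and some $n \geq 1$. I would then induct on $n$: the base case $n=1$ is immediate, and for the inductive step, write $r^n = r \cdot r^{n-1} \in \mathbf{p}_i$; by primeness, either $r \in \mathbf{p}_i$ (and we are done), or $r^{n-1} \in \mathbf{p}_i$, in which case the inductive hypothesis applied to the smaller exponent yields $r \in \mathbf{p}_i$. So each $\mathbf{p}_i$ is radical.

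For the second step, suppose $r^n \in J = \bigcap_{i=1}^\ell \mathbf{p}_i$ for some $n \geq 1$. Then for each $i$, $r^n \in \mathbf{p}_i$, and because $\mathbf{p}_i$ is radical, $r \in \mathbf{p}_i$. Since this holds for every $i$, we conclude $r \in \bigcap_{i=1}^\ell \mathbf{p}_i = J$. Hence $J$ is radical.

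There is no real obstacle here: both facts are standard in commutative algebra, and the proof is essentially bookkeeping with the definitions of prime and radical. The only care needed is making sure the inductive step in the first part is stated cleanly so that the argument clearly terminates at $r \in \mathbf{p}_i$.
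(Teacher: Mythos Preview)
Your proof is correct and follows essentially the same approach as the paper's: the paper's proof is a one-line version that simply says $p^n \in J$ implies $p^n \in \mathbf{p}_i$ for each $i$, hence $p \in \mathbf{p}_i$ for each $i$ (using primeness), hence $p \in J$. You have just unpacked the ``prime $\Rightarrow$ radical'' step with an explicit induction, but the structure is identical.
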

\begin{proof}
Suppose $p^n\in J$. Then $p^n\in \mathbf{p}_i$ for all $i \in [\ell]$, and hence $p\in \mathbf{p}_i$ for all $i  \in [\ell]$. Therefore, $p\in J$.
\end{proof}

The following fact about the primary decomposition of radical ideals is true over any field, as a consequence of the Lasker-Noether theorems 
\cite[pp. 204-209]{cox-little-oshea}.

\begin{lemma}\label{lemma:rad}
If $J$ is a proper radical ideal, then it has a unique irredundant primary decomposition consisting of the minimal prime ideals over $J$.
\end{lemma}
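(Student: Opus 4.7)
The plan is to obtain this lemma as a direct corollary of the Lasker--Noether primary decomposition theorem in Noetherian rings, invoked together with the elementary observation that radicals commute with finite intersections. Since $\F_2[x_1,\ldots,x_n]$ is Noetherian, Lasker--Noether immediately supplies \emph{some} irredundant primary decomposition $J = \q_1 \cap \cdots \cap \q_r$, where each $\q_i$ is primary and $\p_i \od \sqrt{\q_i}$ is a prime ideal.

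The key step is to use the radical hypothesis to collapse each primary component onto its radical. Because $J = \sqrt{J}$ and $\sqrt{\cdot}$ distributes over finite intersections,
$$J = \sqrt{J} = \sqrt{\q_1 \cap \cdots \cap \q_r} = \sqrt{\q_1} \cap \cdots \cap \sqrt{\q_r} = \p_1 \cap \cdots \cap \p_r.$$
Each $\p_i$ is prime and hence primary, so this is itself a primary decomposition of $J$; discarding any superfluous $\p_i$ yields an irredundant decomposition of $J$ purely as an intersection of primes.

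Next I would identify the $\p_i$ as precisely the minimal primes over $J$. If $\p$ is any prime containing $J$, then $\bigcap_i \p_i \subseteq \p$, and primality forces $\p_i \subseteq \p$ for some $i$; thus every prime over $J$ contains one of the $\p_i$, so the minimal primes of $J$ lie among the $\p_i$. Conversely, irredundancy prevents any $\p_j$ from being strictly contained in another $\p_i$ (otherwise $\p_i$ could be dropped), and standard arguments rule out equality among distinct components, so each $\p_i$ is itself minimal. Uniqueness of the decomposition is then automatic: the collection of minimal primes over $J$ is an intrinsic invariant of $J$, so any irredundant primary decomposition -- which by the above reduces to the intersection of minimal primes -- must coincide with $\bigcap_i \p_i$.

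Since the result is classical, the only place requiring genuine care is verifying that passing from the $\q_i$ to the $\p_i$ preserves irredundancy -- i.e., that distinct primary components of an irredundant decomposition of a radical ideal have incomparable radicals. I would handle this by a direct inclusion argument: if $\p_i \subseteq \p_j$ with $i \neq j$, then $\bigcap_{k \neq i} \p_k \subseteq \p_j$, which together with $\p_i \subseteq \p_j$ allows $\p_j$ to be dropped from the intersection, contradicting irredundancy after the radical reduction. Everything else is bookkeeping on top of the Lasker--Noether machinery cited in the paper.
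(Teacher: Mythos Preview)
Your proposal is correct and matches the paper's approach exactly: the paper does not supply its own proof of this lemma but simply states it ``as a consequence of the Lasker--Noether theorems'' with a citation to Cox--Little--O'Shea, and your argument is precisely the standard derivation from Lasker--Noether (existence of an irredundant primary decomposition, radicals commute with finite intersections, then identify the resulting primes as the minimal primes over $J$). You have merely filled in the details that the paper leaves to the reference.
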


\newpage

\section{Appendix 2: Neural codes on three neurons}\label{sec:appendix2}

\vspace{-.2in}

\begin{table}[!h]
\begin{small}
\begin{tabular}{l | l | l}
Label & Code  $\C$ & Canonical Form $CF(J_\C)$ \\
 \hline
A1 & 000,100,010,001,110,101,011,111 & $\emptyset$\\
A2 & 000,100,010,110,101,111 & $x_3(1-x_1)$\\
A3 & 000,100,010,001,110,101,111 & $x_2x_3(1-x_1)$\\
A4 & 000,100,010,110,101,011,111 & $x_3(1-x_1)(1-x_2)$\\
A5 & 000,100,010,110,111 & $x_3(1-x_1), \,x_3(1-x_2)$\\
A6 & 000,100,110,101,111 & $x_2(1-x_1), \,x_3(1-x_1)$\\
A7 & 000,100,010,101,111 & $x_3(1-x_1),\, x_1x_2(1-x_3)$\\
A8 & 000,100,010,001,110,111 & $x_1x_3(1-x_2), \, x_2x_3(1-x_1)$\\
A9 & 000,100,001,110,011,111 & $x_3(1-x_2), \, x_2(1-x_1)(1-x_3)$\\
A10 & 000,100,010,101,011,111& $x_3(1-x_1)(1-x_2), \,x_1x_2(1-x_3)$\\
A11 & 000,100,110,101,011,111 & $x_2(1-x_1)(1-x_3), \,x_3(1-x_1)(1-x_2)$\\
A12 & 000,100,110,111 & $x_3(1-x_1),\, x_3(1-x_2), \,x_2(1-x_1)$\\
A13 & 000,100,010,111 & $x_3(1-x_1), \,x_3(1-x_2), \,x_1x_2(1-x_3)$\\
A14 & 000,100,010,001,111 & $x_1x_2(1-x_3), \, x_2x_3(1-x_1), \, x_1x_3(1-x_2)$\\
A15 & 000,110,101,011,111  & $x_1(1-x_2)(1-x_3), \, x_2(1-x_1)(1-x_3), \, x_3(1-x_1)(1-x_2)$ \\
A16* & 000,100,011,111 & $x_2(1-x_3),\, x_3(1-x_2)$\\
A17* & 000,110,101,111 & $x_2(1-x_1), \,x_3(1-x_1),\, x_1(1-x_2)(1-x_3)$\\
A18* & 000,100,111 & $x_2(1-x_1), \, x_2(1-x_3), \, x_3(1-x_1), \, x_3(1-x_2)$\\
A19* & 000,110,111 & $x_3(1-x_1), \, x_3(1-x_2), \, x_1(1-x_2), \, x_2(1-x_1)$\\
A20* & 000,111 & $x_1(1-x_2), \, x_2(1-x_3), \,x_3(1-x_1), \, x_1(1-x_3), \,x_2(1-x_1), \,x_3(1-x_2)$\\
\hline
B1 & 000,100,010,001,110,101 & $x_2 x_3$ \\
B2 & 000,100,010,110,101& $x_2x_3, \,x_3(1-x_1)$\\
B3 & 000,100,010,101,011 & $x_1x_2, \, x_3(1-x_1)(1-x_2)$\\
B4 & 000,100,110,101 & $x_2x_3, \, x_2(1-x_1), \, x_3(1-x_1)$\\
B5 & 000,100,110,011 & $x_1x_3, \, x_3(1-x_2), \, x_2(1-x_1)(1-x_3)$ \\
B6* & 000,110,101 & $x_2x_3, \, x_2(1-x_1), \, x_3(1-x_1), \,x_1(1-x_2)(1-x_3)$\\
\hline
C1 & 000,100, 010,001, 110 & $x_1x_3, \, x_2x_3$\\
C2 & 000,100,010,101 & $x_1x_2, \, x_2x_3, \, x_3(1-x_1)$\\
C3* & 000,100,011 & $x_1x_2, \, x_1x_3, \, x_2(1-x_3), \, x_3(1-x_2)$\\
\hline
D1 & 000,100,010,001 & $x_1x_2, \, x_2x_3, \, x_1x_3$\\
\hline
E1& 000,100,010,001,110,101,011 & $x_1x_2x_3 $\\
E2 & 000,100,010,110,101,011 & $x_1x_2x_3, \, x_3(1-x_1)(1-x_2)$ \\
E3 & 000,100,110,101,011 & $x_1x_2x_3, \, x_2(1-x_1)(1-x_2), \, x_3(1-x_1)(1-x_2)$\\
E4 & 000,110,011,101 & $x_1x_2x_3, \, x_1(1-x_2)(1-x_3), \, x_2(1-x_1)(1-x_3), \, x_3(1-x_1)(1-x_2)$\\
\hline
F1* & 000,100,010,110 & $x_3$ \\
F2* & 000,100,110 & $x_3, \, x_2(1-x_1)$\\
F3* & 000,110 & $x_3, \,x_1(1-x_2), \, x_2(1-x_1)$\\
\hline 
G1* & 000,100 & $x_2, \, x_3$\\
\hline
H1* & 000 & $x_1, \, x_2, \,  x_3$\\
\hline
I1* & 000,100,010 & $x_3, \, x_1x_2$\\
\end{tabular}
\end{small}
\caption{\small
Forty permutation-inequivalent codes, each containing $000$, on three neurons.  Labels A--I indicate the various families of Type 1 relations present in $CF(J_\C)$, organized as follows (up to permuation of indices):
(A) None,
(B) $\{x_1x_2\}$,
(C) $\{x_1x_2, x_2x_3\}$,
(D) $\{x_1x_2, x_2x_3, x_1x_3\}$,
(E) $\{x_1x_2x_3\}$,
(F) $\{x_1\}$,
(G) $\{x_1, x_2\}$,
(H) $\{x_1, x_2, x_3\}$,
(I) $\{x_1, x_2x_3\}$.
All codes within the same A--I series share the same simplicial complex, $\Delta(\C)$.
The $*$s denote codes that have $U_i = \emptyset$ for at least one receptive field (as in the F, G, H and I series) as well as codes that require $U_1 = U_2$ or $U_1 = U_2 \cup U_3$ (up to permutation of indices); these are considered to be highly degenerate.  The remaining 27 codes are depicted with receptive field diagrams (Figure 6) and Boolean lattice diagrams (Figure 7).
}
\end{table}

\newpage

\begin{figure}[!h]
\begin{center}
\includegraphics[width=6in]{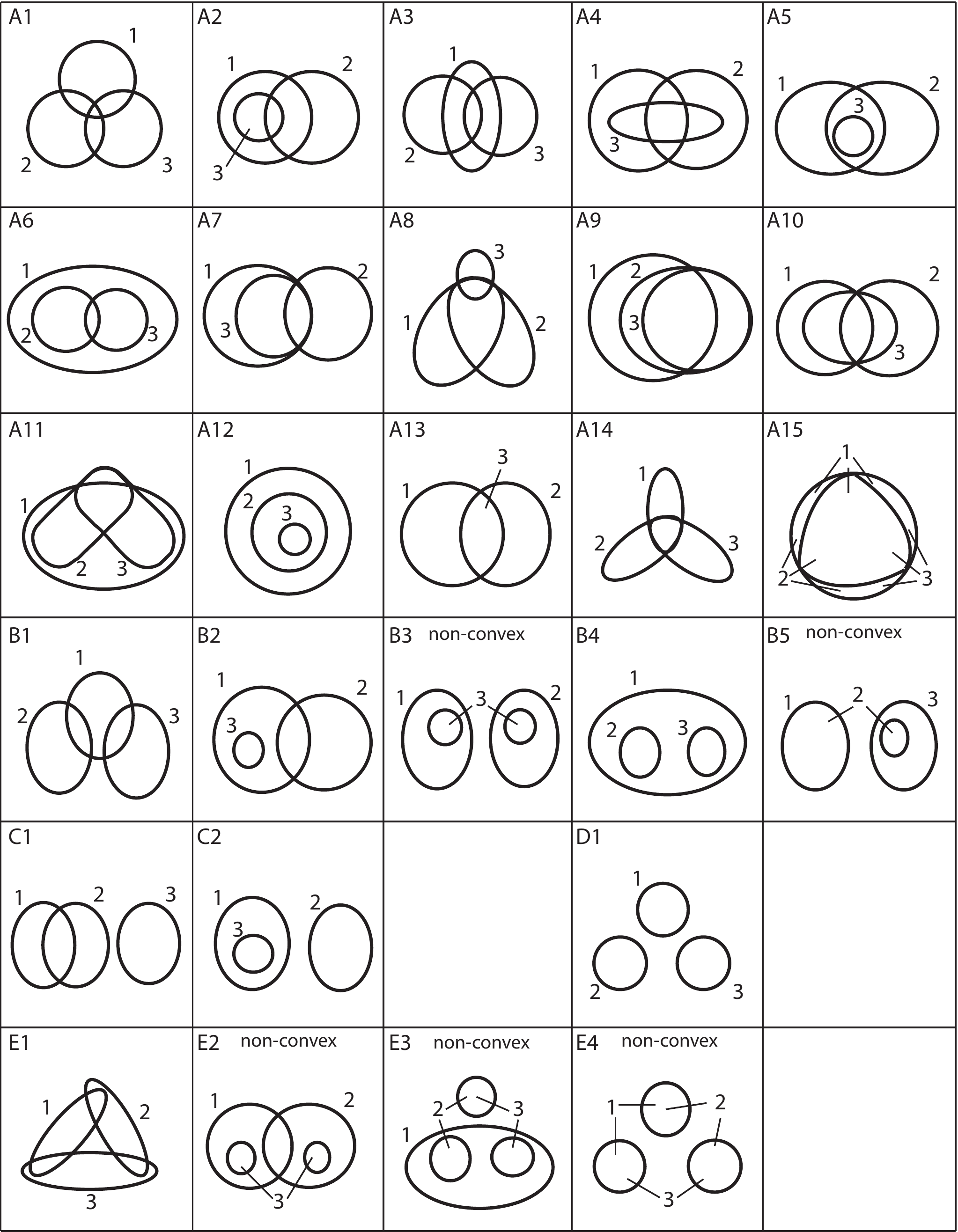}
\end{center}
\vspace{-.2in}
\caption{\small Receptive field diagrams for the 27 non-$*$ codes on three neurons listed in Table 1.  
Codes that admit no realization as a convex RF code are labeled ``non-convex.''  The code E2 is the one from Lemma~\ref{lemma:convex-counterexample}, while
A1 and A12 are permutation-equivalent to the codes in Figure 3A and 3C, respectively.  Deleting the all-zeros codeword from A6 and A4 yields codes permutation-equivalent 
to those in Figure 3B and 3D, respectively.}
\end{figure}

\newpage

\begin{figure}[!h]
\begin{center}
\includegraphics[width=6in]{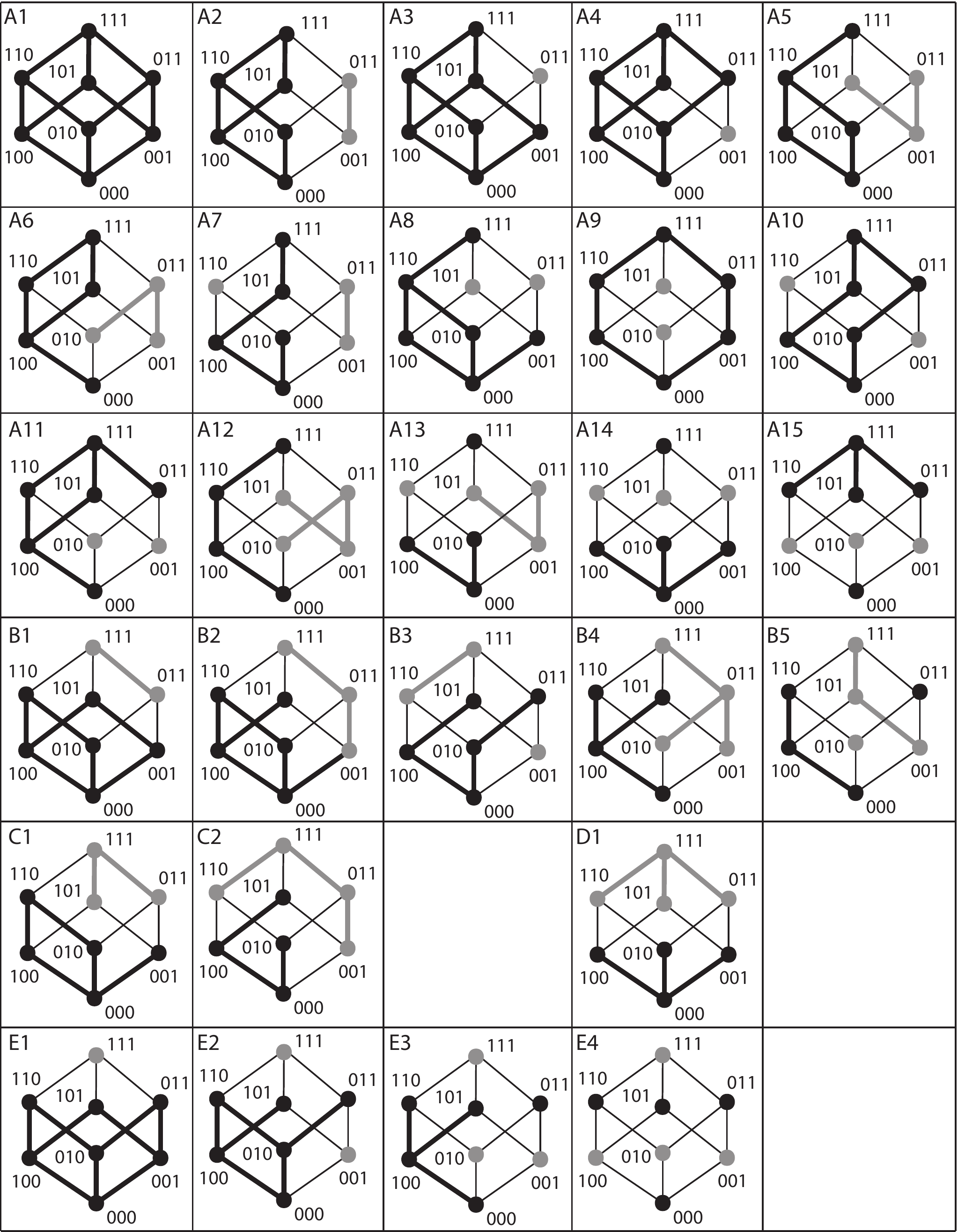}
\end{center}
\vspace{-.2in}
\caption{\small Boolean lattice diagrams for the 27 non-$*$ codes on three neurons listed in Table 1.  Interval decompositions (see Section~\ref{sec:boolean-lattice}) 
for each code are depicted in black,
while decompositions of code complements, arising from $CF(J_\C)$, are shown in gray.  Thin black lines connect elements of the Boolean lattice
that are Hamming distance 1 apart.  Note that the lattice in A12 is permutation-equivalent to the one depicted in Figure 5.}
\end{figure}

\newpage

\bibliographystyle{unsrt}
\bibliography{neural-ring-references}

\end{document}